\renewcommand{\cal}{\mathcal}
\newcommand{\ul}[1]{\underline{#1} \!\,} 
\newcommand{\ol}[1]{\overline{#1} \!\,} 
\newcommand{\ee}{\mathrm{e}} \newcommand{\me}{\mathrm{e}}
\newcommand{\ii}{\mathrm{i}} \newcommand{\mi}{\mathrm{i}}
\newcommand{\dd}{\mathrm{d}} 
\newcommand{\deq}{\mathrel{\mathop:}=}
\newcommand{\eqd}{=\mathrel{\mathop:}}
\newcommand{\id}{\mspace{2mu}\mathrm{i}\mspace{-0.6mu}\mathrm{d}} 
\newcommand{\umat}{\mathbbmss{1}} 
\renewcommand{\epsilon}{\varepsilon}
\renewcommand{\leq}{\leqslant}
\renewcommand{\geq}{\geqslant}
\newcommand{\R}{\mathbb{R}}
\newcommand{\C}{\mathbb{C}}
\newcommand{\N}{\mathbb{N}}
\newcommand{\SSS}{\mathrm{S}}
\newcommand{\HHH}{H}
\newcommand{\CCC}{C}
\newcommand{\LLL}{L}
\newcommand{\pb}[1]{\bigl({#1}\bigr)}
\newcommand{\pB}[1]{\Bigl({#1}\Bigr)}
\newcommand{\pbb}[1]{\biggl({#1}\biggr)}
\newcommand{\qb}[1]{\bigl[{#1}\bigr]}
\newcommand{\qB}[1]{\Bigl[{#1}\Bigr]}
\newcommand{\qbb}[1]{\biggl[{#1}\biggr]}
\newcommand{\h}[1]{\{{#1}\}}
\newcommand{\hb}[1]{\bigl\{{#1}\bigr\}}
\newcommand{\abs}[1]{\lvert #1 \rvert}
\newcommand{\absb}[1]{\big\lvert #1 \big\rvert}
\newcommand{\absB}[1]{\Big\lvert #1 \Big\rvert}
\newcommand{\absbb}[1]{\bigg\lvert #1 \bigg\rvert}
\newcommand{\norm}[1]{\lVert #1 \rVert}
\newcommand{\normb}[1]{\big\lVert #1 \big\rVert}
\newcommand{\normB}[1]{\Big\lVert #1 \Big\rVert}
\newcommand{\normbb}[1]{\bigg\lVert #1 \bigg\rVert}
\newcommand{\scalar}[2]{\langle{#1} \mspace{2mu}, {#2}\rangle}
\newcommand{\scalarb}[2]{\big\langle{#1} \mspace{2mu}, {#2}\big\rangle}
\newcommand{\scalarB}[2]{\Big\langle{#1} \,\mspace{2mu},\, {#2}\Big\rangle}
\newcommand{\bra}[1]{\langle #1 |}
\newcommand{\brab}[1]{\big\langle #1 \big|}
\newcommand{\ket}[1]{| #1 \rangle}
\newcommand{\ketb}[1]{\big| #1 \big\rangle}
\DeclareMathOperator*{\wstarlim}{w*-lim}
\DeclareMathOperator{\tr}{Tr}
\DeclareMathOperator*{\esssup}{ess\,sup}
\DeclareMathOperator{\ad}{ad}
\numberwithin{equation}{section}
\numberwithin{figure}{section}
\theoremstyle{plain} 
\newtheorem{theorem}{Theorem}[section]
\newtheorem*{theorem*}{Theorem}
\newtheorem{lemma}[theorem]{Lemma}
\newtheorem*{lemma*}{Lemma}
\newtheorem{corollary}[theorem]{Corollary}
\newtheorem*{corollary*}{Corollary}
\newtheorem*{proposition*}{Proposition}
\theoremstyle{definition} 
\newtheorem*{definition*}{Definition}
\newtheorem*{example*}{Example}
\newtheorem{remark}[theorem]{Remark}
\newtheorem*{remark*}{Remark}
\newtheorem*{remarks*}{Remarks}
\newcommand{\md}{\dd}
\newcommand{\quant}{^{\!\widehat{\;\;\;\;}}_N}
\newcommand{\A}{\mathrm{A}}
\begin{document}
\title{On the Mean-Field Limit of Bosons with Coulomb Two-Body Interaction}
\author{J\"urg Fr\"ohlich\footnote{juerg@itp.phys.ethz.ch} \qquad Antti Knowles\footnote{aknowles@itp.phys.ethz.ch} \qquad Simon Schwarz\footnote{sschwarz@itp.phys.ethz.ch} 
\\
\\
Institute of Theoretical Physics
\\
ETH H\"onggerberg,
\\
CH-8093 Z\"urich, 
\\
Switzerland.
}
\maketitle

\begin{abstract}
In the mean-field limit the dynamics of a quantum Bose gas is described by a Hartree equation. We present a simple method for proving the convergence of the microscopic quantum dynamics to the Hartree dynamics when the number of particles becomes large and the strength of the two-body potential tends to 0 like the inverse of the particle number. Our method is applicable for a class of singular interaction potentials including the Coulomb potential. We prove and state our main result for the Heisenberg-picture dynamics of ``observables'', thus avoiding the use of coherent states. Our formulation shows that the mean-field limit is a ``semi-classical'' limit.
\end{abstract}

\section{Introduction}
Whenever many particles interact by means of weak two-body potentials, one expects that the potential felt by any one particle is given by an average potential generated by the particle density. In this \emph{mean-field} regime, one hopes to find that the emerging dynamics is simpler and less encumbered by tedious microscopic information than the original $N$-body dynamics.

The mathematical study of such problems has quite a long history. In the context of classical mechanics, where the mean-field limit is described by the Vlasov equation, the problem was successfully studied by Braun and Hepp \cite{BraunHepp}, as well as Neunzert \cite{Neunzert}. The mean-field limit of quantum Bose gases was first addressed in the seminal paper \cite{Hepp} of Hepp. We refer to \cite{ErdosYau} for a short discussion of some subsequent results. The case with a Coulomb interaction potential was treated by Erd\H{o}s and Yau in \cite{ErdosYau}. Recently, Rodnianski and Schlein \cite{RodnianskiSchlein} have derived explicit estimates for the rate of converge to the mean-field limit, using the methods of \cite{Hepp} and \cite{GinibreVelo}. A sharper bound on the rate of convergence in the case of a sufficiently regular interaction potential was derived by Schlein and Erd\H{o}s \cite{SchleinErdos}, by using a new method inspired by Lieb-Robinson inequalities. In \cite{NarnhoferSewell, FrohlichGraffiSchwarz}, the mean-field limit ($N\to \infty$) and the classical limit were studied simultaneously. A conceptually quite novel approach to studying mean-field limits was introduced in \cite{FrohlichKnowles}. In that paper, the time evolution of quantum and corresponding ``classical'' observables is studied in the Heisenberg picture, and it is shown that ``time evolution commutes with quantization'' up to terms that tend to 0 in the mean-field (``classical'') limit, which is a \emph{Egorov-type result}.

In this paper we present a new, simpler way of handling singular interaction potentials. It yields a Egorov-type formulation of convergence to the mean-field limit, thus obviating the need to consider particular (traditionally coherent) states as initial conditions. Another, technical, advantage of our method is that it requires no regularity (traditionally $\HHH^1$- or $\HHH^2$-regularity) when applied to coherent states.

Such kinds of results were first obtained by Egorov \cite{Egorov} for the semi-classical limit of a quantum system. Roughly, the statement is that time-evolution commutes with quantization in the semi-classical limit. We sketch this in a simple example: Let us start with a classical Hamiltonian system of a finite number $f$ of degrees of freedom. The classical algebra of observables $\mathfrak{A}$ is given by (some subalgebra of) the Abelian algebra of smooth functions on the phase space $\Gamma \deq \R^{2f}$. Let $H \in \mathfrak{A}$ be a Hamilton function. Together with the symplectic structure on $\Gamma$, $H$ generates a symplectic flow $\phi^t$ on $\Gamma$. Now we define a quantization map $\widehat{(\cdot)}_\hbar: \mathfrak{A} \to \widehat{\mathfrak{A}}$, where $\widehat{\mathfrak{A}}$ is some subalgebra of $\mathcal{B}(\LLL^2(\R^f))$. For concreteness, let $\widehat{(\cdot)}_\hbar$ be Weyl quantization with deformation parameter $\hbar$. This implies that
\begin{equation*}
\qb{\widehat{A}_\hbar, \widehat{B}_\hbar} \;=\; \frac{\hbar}{\mi} \widehat{\h{A,B}}_\hbar + O(\hbar^2)\,,
\end{equation*}
for $\hbar \to 0$.
The quantized Hamilton function defines a 1-parameter group of automorphisms on $\widehat{\mathfrak{A}}$ through
\begin{equation*}
\mathbf{A} \;\mapsto\; \me^{\mi t \widehat{H}_\hbar / \hbar} \,  \mathbf{A} \, \me^{- \mi t \widehat{H}_\hbar / \hbar}\,, \qquad \mathbf{A} \in \widehat{\mathfrak{A}}\,.
\end{equation*}
A Egorov-type semi-classical result states that, for all $A \in \mathfrak{A}$ and $t \in \R$,
\begin{equation*}
\widehat{(A \circ \phi^t)}_\hbar \;=\; \me^{\mi t \widehat{H}_\hbar / \hbar} \,  \widehat{A}_\hbar \, \me^{- \mi t \widehat{H}_\hbar / \hbar} + R_\hbar(t)\,,
\end{equation*}
where $\norm{R_\hbar(t)} \to 0$ as $\hbar \to 0$.

This approach identifies the semi-classical limit as the converse of quantization. In a similar fashion, we identify the mean-field limit as the converse of ``second quantization''. In this case the deformation parameter is not $\hbar$, but $N^{-1}$, a parameter proportional to the coupling constant. We consider the mean-field dynamics (given by the Hartree equation in the case of bosons), and view it as the Hamiltonian dynamics of a classical Hamiltonian system. We show that its quantization describes $N$-body quantum mechanics, and that the ``semi-classical'' limit corresponding to $N^{-1} \to 0$ takes us back to the Hartree dynamics.

We sketch the key ideas behind our strategy.
\begin{itemize}
\item[(1)]
Use the Schwinger-Dyson expansion to construct the Heisenberg-picture dynamics of $p$-particle operators
\begin{equation*}
\ee^{\ii t H_N} \, \widehat{\A}_N(a^{(p)}) \, \ee^{- \ii t H_N}
\end{equation*}
(in the notation of Section \ref{section: setup}).
\item[(2)]
Use \emph{Kato smoothing} plus \emph{combinatorial estimates} (counting of graphs) to prove convergence of the Schwinger-Dyson expansion on $N$-particle Hilbert space, uniformly in $N$ and for small $\abs{t}$. Diagrams containing $l$ loops yield a contribution of order $N^{-l}$.
\item[(3)] Use Kato smoothing plus combinatorial estimates to prove convergence of the iterative solution of the Hartree equation, for small $\abs{t}$.
\item[(4)]
Show that the Wick quantization of the series in (3) is equal to the series of tree diagrams in (2).
\item[(5)]
Extend (2) and (3) to arbitrary times by using unitarity and conservation laws.
\end{itemize}

This paper is organized as follows. In Section \ref{section: Vlasov} we show that the classical Newtonian mechanics of point particles is the \emph{second quantization} of Vlasov theory, the latter being the mean-field (or ``classical'') limit of the former. The bulk of the paper is devoted to a rigorous analysis of the mean-field limit of Bose gases. In section \ref{section: setup} we recall some important concepts of quantum many-body theory and introduce a general formalism which is convenient when dealing with quantum gases. Section \ref{Schwinger-Dyson} contains an implementation of step (1) above. The convergence of the Schwinger-Dyson series for bounded interaction potentials is briefly discussed in Section \ref{section: bounded interaction}. Section \ref{section: Coulomb} implements step (2) above. Steps (3), (4) and (5) are implemented in Section \ref{section: mean-field limit for bosons}. Finally, Section \ref{section: generalizations} extends our results to more general interaction potentials as well as nonvanishing external potentials.

\noindent
\textbf{Acknowledgments}.\ We thank W.\ De Roeck, S.\ Graffi and A.\ Pizzo for useful discussions and encouragement. We would also like to thank a referee for pointing out Ref.\ \cite{LiebSeiringer} in connection with the remark following Corollary \ref{cor: density matrices}.

\section{Mean-field limit in classical mechanics} \label{section: Vlasov}
In this section we consider the example of classical Newtonian mechanics to illustrate how the atomistic constitution of matter arises by quantization of a continuum theory.  The aim of this section is to give a brief and nonrigorous overview of some ideas that we shall develop in the context of quantum Bose gases, in full detail, in the following sections.

A classical gas is described as a continuous medium whose state is given by a nonnegative mass density $\md \mu (x,v) = M f(x,v) \, \md x \, \md v$ on the ``one-particle'' phase space $\R^3 \times \R^3$. Here $M$ is the mass of one ``mole'' of gas; $\mu(A)$ is the mass of gas in the phase space volume $A \subset \R^3 \times \R^3$. Let $\int \md x\, \md v \, f(x,v) = \nu < \infty$ denote the number of ``moles'' of the gas, so that the total mass of the gas is $\mu(\R^3 \times \R^3) = \nu M$. An example of an equation of motion for $f(x,v)$ is the \emph{Vlasov equation}
\begin{equation} \label{Vlasov}
\partial_t f_t(x,v) \;=\; - \pb{v \cdot \nabla_x f_t}(x,v) + \frac{1}{m} \, \pb{\nabla V_{\text{eff}}[f_t] \cdot \nabla_v f_t}(x,v)\,,
\end{equation}
where $m$ is a constant with the dimension of a mass, $t$ denotes time, and
\begin{equation*}
V_{\text{eff}}[f](x) \;=\; V(x) + \int \md y \; W(x-y) \int \md v \; f(y,v)\,.
\end{equation*}
Here $V$ is the potential of external forces acting on the gas and $W$ is a (two-body) potential describing self-interactions of the gas.

The Vlasov equation arises as the mean-field limit of a classical Hamiltonian system of $n$ point particles of
mass $m$, with trajectories $(x_i(t))_{i = 1}^n$, moving in an external potential $V$ and interacting through two-body forces with potential $N^{-1} \, W(x_i - x_j)$. Here $N$ is the inverse coupling constant. We interpret $N$ as ``Avogadro's number'', i.e.\ as the number of particles per ``mole'' of gas. Thus, $M = m N$ and $n = \nu N$. More precisely, it is well-known (see \cite{BraunHepp, Neunzert}) that, under some technical assumptions on $V$ and $W$,
\begin{equation}
f_t(x,v) \;=\; \wstarlim_{n \to \infty} \frac{\nu}{n} \sum_{i = 1}^n \delta(x - x_i(t))  \, \delta(v - \dot{x}_i(t))
\end{equation}
exists for all times $t$ and is the (unique) solution of \eqref{Vlasov}, provided that this holds at time $t = 0$. Here, $f_t$ is viewed as an element of the dual space of continuous bounded functions. 

Note that $n$ and $N$ are, a priori, unrelated objects. While $n$ is the number of particles in the classical Hamiltonian system, $N^{-1}$ is by definition the coupling constant. The mean-field limit is the limit $n \to \infty$ while keeping $n \propto N$; the proportionality constant is $\nu$.

It is of interest to note that the Vlasov dynamics \eqref{Vlasov} may be interpreted as a Hamiltonian dynamics on an infinite-dimensional affine phase space $\Gamma_{\text{Vlasov}}$. To see this, we write
\begin{equation*}
f(x,v) \;=\; \bar{\alpha}(x,v) \alpha(x,v)\,,
\end{equation*}
where $\bar{\alpha}(x,v), \alpha(x,v)$ are complex coordinates on $\Gamma_{\text{Vlasov}}$.
For our purposes it is enough to say that $\Gamma_{\text{Vlasov}}$ is some dense subspace of $\LLL^2(\R^6)$ (typically a weighted Sobolev space of index 1). On $\Gamma_{\text{Vlasov}}$ we define a symplectic form through
\begin{equation*}
\omega \;=\; \mi \int \md x \, \md v \; \md \bar{\alpha}(x,v) \wedge \md \alpha(x,v)\,.
\end{equation*}
This yields a Poisson bracket which reads
\begin{align}
\hb{\alpha(x,v), \alpha(y, w)} &\;=\; \hb{\bar{\alpha}(x,v), \bar{\alpha}(y, w)} \;=\; 0\,,
\notag \\ \label{Vlasov Poisson bracket}
\hb{\alpha(x,v), \bar{\alpha}(y, w)} &\;=\; \mi \delta(x - y) \delta(v - w)\,.
\end{align}
A Hamilton function $H$ is defined on $\Gamma_{\text{Vlasov}}$ through
\begin{multline} \label{Vlasov Hamiltonian}
H(\alpha) \;\deq\; \mi \int \md x \, \md v \; \bar{\alpha}(x,v) \qbb{-v \cdot \nabla_x + \frac{1}{m} \,\nabla V(x) \cdot \nabla_v} \alpha(x,v)
\\
{}+{} \frac{\mi}{m} \int \md x \, \md v \; \bar{\alpha}(x,v) \qbb{\int \md y \, \md w \; \nabla W(x-y) \, \abs{\alpha(y,w)}^2} \cdot \nabla_v \alpha(x,v)\,.
\end{multline}
Note that $H$ is invariant under gauge transformations $\alpha \mapsto \me^{- \mi \theta} \alpha$, $\bar{\alpha} \mapsto \me^{\mi \theta} \bar{\alpha}$, which by Noether's theorem implies that $\int \abs{\alpha}^2 \, \md x \, \md v = \int f \, \md x \, \md v$ is conserved.

Let us abbreviate $K \deq - \nabla V / m$ and $F \deq - \nabla W / m$.
After a short calculation using \eqref{Vlasov Poisson bracket} we find that the Hamiltonian equation of motion $\dot{\alpha}_t(x,v) \;=\; \{H, \alpha_t(x,v)\}$ reads
\begin{multline} \label{Vlasov-Hamilton}
\dot{\alpha}_t(x,v) \;=\; \pb{-v \cdot \nabla_x - K(x) \cdot \nabla_v} \alpha_t(x,v) - \int \md y \, \md w \; F(x-y) \, \abs{\alpha_t(y,w)}^2 \cdot \nabla_v \alpha_t(x,v) 
\\
{}+{} \int \md y \, \md w \; F(x-y) \, \bar{\alpha}_t(y,w) \alpha_t(x,v) \cdot \nabla_w \alpha_t(y,w)\,. 
\end{multline}
Also, $\bar{\alpha}_t$ satisfies the complex conjugate equation. Therefore,
\begin{multline} \label{Intermediate Vlasov step}
\frac{\md}{\md t} \abs{\alpha_t(x,v)}^2 \;=\; \pb{-v \cdot \nabla_x - K(x) \cdot \nabla_v} \abs{\alpha_t(x,v)}^2 - \int \md y \, \md w \; F(x-y) \, \abs{\alpha_t(y,w)}^2 \cdot \nabla_v \abs{\alpha_t(x,v)}^2
\\
{}+{} \abs{\alpha_t(x,v)}^2 \int \md y \, \md w \; F(x-y) \cdot \qb{\bar{\alpha}_t(y,w) \nabla_w \alpha_t(y,w) + \alpha_t(y,w) \nabla_w \bar{\alpha}_t(y,w)}\,.
\end{multline}
We assume that
\begin{equation}\label{assumption for partial integration}
\abs{\alpha(x,v)} \;=\; o(\abs{(x,v)}^{-1})\,, \qquad (x,v) \to \infty\,.
\end{equation}
We shall shortly see that this property is preserved under time-evolution. By integration by parts, we see that the second line of \eqref{Intermediate Vlasov step} vanishes, and we recover the Vlasov equation of motion \eqref{Vlasov} for $f = \abs{\alpha}^2$.

We comment briefly on the existence and uniqueness of solutions to the Hamiltonian equation of motion \eqref{Vlasov-Hamilton}. Following Braun and Hepp \cite{BraunHepp}, we assume that $K$ and $F$ are bounded and continuously differentiable with bounded derivatives. We use polar coordinates
\begin{equation*}
\alpha \;=\; \beta\,  \me^{\mi \varphi}\,,
\end{equation*}
where $\varphi \in \R$ and $\beta \geq 0$\,. Then the Hamiltonian equation of motion \eqref{Vlasov-Hamilton} reads
\begin{subequations} \label{polar Vlasov-Hamilton}
\begin{align} \label{polar Vlasov-Hamilton 1}
\dot{\beta}_t(x,v) &\;=\; \pb{-v \cdot \nabla_x - K(x) \cdot \nabla_v} \beta_t(x,v) - \int \md y \, \md w \; F(x-y) \, \beta^2_t(y,w) \cdot \nabla_v \beta_t(x,v)
\\
\dot{\varphi}_t(x,v) &\;=\;  \pb{-v \cdot \nabla_x - K(x) \cdot \nabla_v} \varphi_t(x,v) - \int \md y \, \md w \; F(x-y) \, \beta^2_t(y,w) \cdot \nabla_v \varphi_t(x,v)
\notag \\ \label{polar Vlasov-Hamilton 2}
&\qquad {}+{}  \int \md y \, \md w \; F(x-y) \, \beta^2_t(y,w) \cdot \nabla_w \varphi_t(y,w)\,.
\end{align}
\end{subequations}
We consider two cases.
\begin{itemize}
\item[(i)] $\varphi = 0$. In this case $\alpha = \beta$ and the equations of motion \eqref{polar Vlasov-Hamilton} are equivalent to the Vlasov equation for $f = \beta^2$. The results of \cite{BraunHepp, Neunzert} then yield a global well-posedness result.
\item[(ii)] $\varphi \neq 0$. The equation of motion \eqref{polar Vlasov-Hamilton 1} is independent of $\varphi$. Case (i) implies that it has a unique global solution. In order to solve the linear equation \eqref{polar Vlasov-Hamilton 2}, we apply a contraction mapping argument. Consider the space $X \deq \{\varphi \in \CCC(\R^6) \,:\, \nabla \varphi \in \LLL^\infty(\R^6)\}$. Using Sobolev inequalities one finds that $X$, equipped with the norm $\norm{\varphi}_X \deq \abs{\varphi(0)} + \norm{\nabla \varphi}_\infty$, is a Banach space. We rewrite \eqref{polar Vlasov-Hamilton 2} as an integral equation, and using standard methods show that, for small times, it has a unique solution. Using conservation of $\int \md x \, \md v \; \beta^2_t$ we iterate this procedure to find a global solution. We omit further details.
\end{itemize}

Note that, as shown in \cite{BraunHepp}, the solution $\beta_t$ can be written using a flow $\phi^t$ on the one-particle phase space: $\beta_t(x,v) = \beta_0(\phi^{-t}(x,v))$. The flow $\phi^t(x,v) = (x(t), v(t))$ satisfies
\begin{align*}
\dot{x}(t) &\;=\; v(t)\,,
\\
\dot{v}(t) &\;=\; K(x(t)) + \int \md y \, \md w \; \beta_t^2(y,w) \, F(x(t) - y)\,.
\end{align*}
Using conservation of $\int \md x \, \md v \, \beta^2_t$ we find that there is a constant $C$ such that $\abs{\phi^{-t}(x,v)} \leq \abs{(x,v)} (1+t) + C (1+ t^2)$\,. Therefore \eqref{assumption for partial integration} holds for all times $t$ provided that it holds at time $t = 0$.

The Hamiltonian formulation of Vlasov dynamics can serve as a starting point to recover the atomistic Hamiltonian mechanics of point particles by quantization: Replace
\begin{equation*}
\bar{\alpha}(x,v) \;\rightarrow\; \widehat{\alpha}_N^*(x,v) \,, \qquad
\alpha(x,v) \;\rightarrow\; \widehat{\alpha}_N(x,v)\,,
\end{equation*}
where $\widehat{\alpha}_N^*$ and $ \widehat{\alpha}_N$ are creation and annihilation operators acting on the bosonic Fock space $\mathcal{F}_+\pb{\LLL^2(\R^6)}$; see Appendix \ref{second quantization}. They satisfy the canonical commutation relations \eqref{anticommutation relations}; explicitly,
\begin{align}
\qb{\widehat{\alpha}_N(x,v), \widehat{\alpha}_N(y, w)} &\;=\; \qb{\widehat{\alpha}_N^*(x,v), \widehat{\alpha}_N^*(y, w)} \;=\; 0\,,
\notag \\ \label{Vlasov CCR}
\qb{\widehat{\alpha}_N(x,v), \widehat{\alpha}_N^*(y, w)} &\;=\; \frac{1}{N} \, \delta(x - y) \delta(v - w)\,.
\end{align}
Given a function $A$ on $\Gamma_{\text{Vlasov}}$ which is a polynomial in $\ol{\alpha}$ and $\alpha$, we define an operator $\widehat{A}_N$ on $\mathcal{F}_+$ by replacing $\alpha^\#$ with $\widehat{\alpha}_N^\#$ and Wick-ordering the resulting expression. We denote this quantization map by $\widehat{(\cdot)}_N$. Here, $N^{-1}$ is the deformation parameter of the quantization: We find that
\begin{equation*}
\qb{\widehat{A}_N, \widehat{B}_N} \;=\; \frac{N^{-1}}{\mi} \widehat{\h{A,B}}_N + O(N^{-2})\,,
\end{equation*}
for $N \to \infty$. Here $A$ and $B$ are polynomial functions on $\Gamma_{\text{Vlasov}}$.

The dynamics of a state $\Phi \in \mathcal{F}$ is given by the Schr\"odinger equation
\begin{equation} \label{schrodinger equation for classical dynamics}
\mi N^{-1} \partial_t \Phi_t \;=\; \widehat{H}_N \Phi_t\,,
\end{equation}
where $\widehat{H}_N$ is the quantization of the Vlasov Hamiltonian $H$.
In order to identify the dynamics given by \eqref{schrodinger equation for classical dynamics} with the classical dynamics of point particles, we study wave functions $\Phi^{(n)}(x_1, v_1, \dots, x_n, v_n)$ in the $n$-particle sector of $\mathcal{F}_+$, and interpret $\rho^{(n)} \deq \abs{\Phi}^2$ as a probability density on the $n$-body classical phase space.
If $\Omega \in \mathcal{F}_+$ denotes the vacuum vector annihilated by $\widehat{\alpha}_N(x,v)$ then
\begin{equation*}
\Phi^{(n)} \;=\; \frac{N^{n/2}}{\sqrt{n!}} \int \md x_1 \, \md v_1 \cdots \md x_n \, \md v_n \; \Phi^{(n)}(x_1, v_1, \dots, x_n,v_n) \, \widehat{\alpha}_N^*(x_n,v_n) \cdots \widehat{\alpha}_N^*(x_1, v_1) \, \Omega\,.
\end{equation*}
It is a simple matter to check that \eqref{Vlasov CCR} and \eqref{schrodinger equation for classical dynamics} imply that
\begin{equation*}
\partial_t \Phi^{(n)}_t \;=\; \sum_{i = 1}^n \qbb{-v_i \cdot \nabla_{x_i} + \frac{1}{m} \, \nabla V(x_i) \cdot \nabla_{v_i}} \Phi^{(n)}_t + \frac{1}{N} \sum_{1 \leq i \neq j \leq n} \frac{1}{m} \, \nabla W(x_i - x_j) \cdot \nabla_{v_i} \Phi^{(n)}_t\,. 
\end{equation*}
Also, $\ol{\Phi^{(n)}_t}$ satisfies the same equation.
Therefore,
\begin{equation*}
\partial_t \rho^{(n)}_t \;=\; \sum_{i = 1}^n \qbb{- v_i \cdot \nabla_{x_i} + \frac{1}{m} \, \nabla V(x_i) \cdot \nabla_{v_i}} \rho^{(n)}_t + \frac{1}{N} \sum_{1 \leq i \neq j \leq n} \frac{1}{m} \, \nabla W(x_i - x_j) \cdot \nabla_{v_i} \rho^{(n)}_t\,. 
\end{equation*}
This is the Liouville equation corresponding to the Hamiltonian equations of motion of $n$ classical point particles,
\begin{align*}
\partial_t x_i &\;=\; v_i\,,
\\
m \, \partial_t v_i &\;=\; -\nabla V(x_i) - \frac{1}{N} \sum_{j \neq i} \nabla W (x_i - x_j)\,.
\end{align*}

Analogous results can be proven if $\widehat{\alpha}_N^*$ and $\widehat{\alpha}_N$ are chosen to be fermionic creation and annihilation operators obeying the canonical anti-commutation relations and acting on the fermionic Fock space $\mathcal{F}_-(\LLL^2(\R^6))$.

\section{Quantum gases: the setup} \label{section: setup}
Although our main results are restricted to bosons, all of the following rather general formalism remains unchanged for fermions. We therefore consider both bosonic and fermionic statistics throughout Sections \ref{section: setup} -- \ref{section: Coulomb}. Details on systems of fermions will appear elsewhere.

Throughout the following we consider the one-particle Hilbert space 
\begin{equation*}
\mathcal{H} \deq \LLL^2(\R^3, \md x)\,.
\end{equation*}
We refer the reader to Appendix \ref{second quantization} for our choice of notation and a short discussion of many-body quantum mechanics.

In the following a central role is played by the $p$-particles operators, i.e.\ closed operators $a^{(p)}$ on $\mathcal{H}_\pm^{(p)} = P_{\pm} \mathcal{H}^{\otimes p}$, where $P_+$ and $P_-$ denote symmetrization and anti-symmetrization, respectively. When using second-quantized notation it is convenient to use the operator kernel of $a^{(p)}$. Here is what this means (see  \cite{ReedSimonI} for details): Let $\mathscr{S}(\R^d)$ be the usual Schwartz space of smooth functions of rapid decrease, and $\mathscr{S}'(\R^d)$ its topological dual. The nuclear theorem states that to every operator $A$ on $\LLL^2(\R^d)$, such that the map $\pb{f,g} \mapsto \scalarb{f}{A g}$ is separately continuous on $\mathscr{S}(\R^d) \times \mathscr{S}(\R^d)$, there belongs a tempered distribution (``kernel'') $\tilde{A} \in \mathscr{S}'(\R^{2d})$, such that 
\begin{equation*}
\scalar{f}{A g} \;=\; \tilde{A} (\bar{f} \otimes g)\,.
\end{equation*}
In the following we identify $\tilde{A}$ with $A$. In the suggestive physicist's notation we thus have
\begin{equation*}
\scalarb{f}{a^{(p)} g} \;=\; \int \md x_1 \cdots \md x_p \, \md y_1 \cdots \md y_p \; 
\ol{f}(x_1, \dots, x_p) \, a^{(p)}(x_1, \dots, x_p;y_1, \dots, y_p) \, g(y_1, \dots, y_p)\,,
\end{equation*}
where $f,g \in \mathscr{S}(\R^{3p})$.
It will be easy to verify that all $p$-particle operators that appear in the following satisfy the above condition; this is for instance the case for all bounded $a^{(p)} \in \mathcal{B}(\mathcal{H}^{\otimes p})$.

Next, we define second quantization $\widehat{\A}_N$. It maps a closed operator on $\mathcal{H}^{(p)}_\pm$ to a closed operator on $\mathcal{F}_\pm$ according to the formula
\begin{multline}
\widehat{\A}_N(a^{(p)}) \;\deq\; \int \md x_1 \cdots \md x_p \, \md y_1 \cdots \md y_p 
\\
\widehat{\psi}_N^*(x_p) \cdots \widehat{\psi}_N^*(x_1) \, a^{(p)}(x_1, \dots, x_p;y_1, \dots, y_p) \, 
\widehat{\psi}_N(y_1) \cdots \widehat{\psi}_N(y_p)\,.
\end{multline}
Here $\widehat{\psi}^\#_N \;\deq\; \frac{1}{\sqrt{N}} \, \widehat{\psi}^\#$, where $\widehat{\psi}^\#$ is the usual creation or annihilation operator; see Appendix \ref{second quantization}.

In order to understand the action of $\widehat{\A}_N(a^{(p)})$ on $\mathcal{H}^{(n)}_\pm$, we write
\begin{equation*}
\Phi^{(n)} \;=\; \frac{N^{n/2}}{\sqrt{n!}}\int \md z_1 \cdots \md z_n \; \Phi^{(n)}(z_1, \dots, z_n) \, \widehat{\psi}^*_N(z_n) \cdots \widehat{\psi}^*_N(z_1) \, \Omega
\end{equation*}
and apply $\widehat{\A}_N(a^{(p)})$ to the right side. By using the (anti)commutation relations \eqref{anticommutation relations} to pull the $p$ annihilation operators $\widehat{\psi}_N(y_i)$ through the $n$ creation operators $\widehat{\psi}_N^*(z_i)$, and $\widehat{\psi}_N(x) \, \Omega = 0$, we get the ``first quantized'' expression
\begin{equation}
\widehat{\A}_N(a^{(p)})\bigr|_{\mathcal{H}^{(n)}_\pm} \;=\;
\begin{cases}
\frac{p!}{N^p} \binom{n}{p} P_{\pm} (a^{(p)} \otimes \umat^{(n-p)}) P_{\pm}\,, &n\geq p\,,
\\
0\,, &n<p\,.
\end{cases}
\end{equation}
This may be viewed as an alternative definition of $\widehat{\A}_N(a^{(p)})$.

We define $\widehat{\mathfrak{A}}$ as the linear span of $\hb{\widehat{\A}_N(a^{(p)}) \,:\, p \in \N,\, a^{(p)} \in \mathcal{B}(\mathcal{H}^{(p)}_\pm)}$. Then $\widehat{\mathfrak{A}}$ is a $*$-algebra of closable operators on $\mathcal{F}^0_\pm$ (see Appendix \ref{second quantization}). We list some of its important properties, whose straightforward proofs we omit.
\begin{itemize}
\item[(i)]
$\widehat{\A}_N(a^{(p)})^* = \widehat{\A}_N((a^{(p)})^*)$\,.
\item[(ii)]
If $a^{(p)} \in \mathcal{B}(\mathcal{H}^{(p)}_\pm)$ and $b^{(q)} \in \mathcal{B}(\mathcal{H}^{(q)}_\pm)$, then 
\begin{equation} \label{product of two second quantized operators}
\widehat{\A}_N(a^{(p)}) \, \widehat{\A}_N(b^{(q)}) \;=\; \sum_{r = 0}^{\min(p,q)} \binom{p}{r} \binom{q}{r} \frac{r!}{N^r}\, \widehat{\A}_N\pb{a^{(p)} \bullet_r b^{(q)}}\,,
\end{equation}
where
\begin{equation}
a^{(p)} \bullet_r b^{(q)} \;\deq\; P_\pm \,(a^{(p)} \otimes \umat^{(q-r)}) \, (\umat^{(p-r)} \otimes b^{(q)}) \,P_\pm \;\in\; \mathcal{B}(\mathcal{H}^{(p+q-r)}_\pm)\,.
\end{equation}
\item[(iii)]
The operator $\widehat{\A}(a^{(p)})$ leaves the $n$-particle subspaces $\mathcal{H}^{(n)}_\pm$ invariant.
\item[(iv)]
If $a^{(p)} \in \mathcal{B}(\mathcal{H}^{(p)}_\pm)$ and $b\in \mathcal{B}(\mathcal{H})$ is invertible, then 
\begin{equation}
\Gamma(b^{-1}) \, \widehat{\A}_N(a^{(p)}) \, \Gamma(b) \;=\; \widehat{\A}_N \pb{(b^{-1})^{\otimes p} \, a^{(p)} \, b^{\otimes p}}\,,
\end{equation}
where $\Gamma(b)$ is defined on $\mathcal{H}^{(n)}_\pm$ by $b^{\otimes n}$.
\item[(v)]
If $a^{(p)} \in \mathcal{B}(\mathcal{H}^{(p)}_\pm)$ then 
\begin{equation} \label{norm of second quantized observable}
\normB{\widehat{\A}_N(a^{(p)}) \bigr|_{\mathcal{H}^{(n)}_\pm}} \;\leq\; \pbb{\frac{n}{N}}^p \norm{a^{(p)}}\,.
\end{equation}
\end{itemize}
Of course, on an appropriate dense domain, \eqref{product of two second quantized operators} holds for unbounded operators $a^{(p)}$ and $b^{(q)}$ too. We introduce the notation
\begin{equation}
\qb{a^{(p)}, b^{(q)}}_r \;\deq\; a^{(p)} \bullet_r b^{(q)} - b^{(q)} \bullet_r a^{(p)}\,.
\end{equation}
Note that $\qb{a^{(p)}, b^{(q)}}_0 = 0$. Thus,
\begin{equation} \label{commutator of second quantized operators}
\qb{\widehat{\A}_N(a^{(p)}), \widehat{\A}_N(b^{(q)})} \;=\; \sum_{r = 1}^{\min(p,q)} \binom{p}{r} \binom{q}{r} \frac{r!}{N^r}\, \widehat{\A}_N\pb{\qb{a^{(p)} , b^{(q)}}_r}\,.
\end{equation}

We now move on to discuss dynamics. Take a one-particle Hamiltonian $h^{(1)} \equiv h$ of the form $h = -\Delta + v$, where $\Delta$ is the Laplacian over $\R^3$ and $v$ is some real function. We denote by $V$ the multiplication operator $v(x)$. Two-body interactions are described by a real, even function $w$ on $\R^3$. This induces a two-particle operator $W^{(2)} \equiv W$ on $\mathcal{H}^{\otimes 2}$, defined as the multiplication operator $w(x_1 - x_2)$.
We define the Hamiltonian
\begin{equation} \label{quantum Hamiltonian}
\widehat{H}_N \;\deq\; \widehat{\A}_N(h) + \frac{1}{2} \, \widehat{\A}_N(W)\,.
\end{equation}
Under suitable assumptions on $v$ and $w$ that we make precise in the following sections, one shows that $\widehat{H}_N$ is a well-defined self-adjoint operator on $\mathcal{F}_\pm$. It is convenient to introduce $H_N \deq N \widehat{H}_N$. On $\mathcal{H}^{(n)}_\pm$ we have the ``first quantized'' expression
\begin{equation}
H_N \big|_{\mathcal{H}^{(n)}_\pm} \;=\; \sum_{i = 1}^n h_i + \frac{1}{N}\sum_{1 \leq 1 < j \leq n} W_{ij} \;\eqd\; H_0 + \frac{1}{N} W\,,
\end{equation}
in self-explanatory notation.

\section{Schwinger-Dyson expansion and loop counting} \label{Schwinger-Dyson}
Without loss of generality, we assume throughout the following that $t \geq 0$.

Let $a^{(p)} \in \mathcal{B}(\mathcal{H}^{(p)}_\pm)$ and $w$ be bounded, i.e.\ $w \in \LLL^\infty(\R^3)$. Using the fundamental theorem of calculus and the fact that the unitary group $(\me^{-\mi t H_0})_{t \in \R}$ is strongly differentiable one finds 
\begin{align*}
&\me^{\mi t H_N} \, \widehat{\A}_N(a^{(p)}) \, \me^{- \mi t H_N} \, \Phi^{(n)}
\\
&\quad\;=\; \me^{\mi s H_N} \me^{- \mi s H_0} \me^{\mi t H_0} \, \widehat{\A}_N(a^{(p)}) \, \me^{- \mi t H_0} \me^{\mi s H_0} \me^{-\mi s H_N} \, \Phi^{(n)} \bigr|_{s = t}
\\
&\quad\;=\; \widehat{\A}_N(a^{(p)}_t) \, \Phi^{(n)} + \int_0^t \md s \; \me^{\mi s H_N} \me^{-\mi sH_0} \frac{\mi N }{2} \qB{\widehat{\A}_N(W_s), \widehat{\A}_N(a^{(p)}_t)} \, \me^{\mi sH_0} \me^{-\mi s H_N}\, \Phi^{(n)}\,,
\end{align*}
where $(\cdot)_t \deq \Gamma(\me^{\mi th}) (\cdot) \Gamma(\me^{-\mi th})$ denotes free time evolution. As an equation between operators defined on $\mathcal{F}^0_\pm$, this reads
\begin{equation} \label{Duhamel}
\me^{\mi t H_N} \, \widehat{\A}_N (a^{(p)}) \, \me^{- \mi t H_N} \;=\; \widehat{\A}_N(a^{(p)}_t) + \int_0^t \md s \; \me^{\mi s H_N} \me^{-\mi sH_0} \frac{\mi N}{2} \qB{\widehat{\A}_N(W_s), \widehat{\A}_N(a^{(p)}_t)} \, \me^{\mi sH_0} \me^{-\mi s H_N}\,.
\end{equation}
Iteration of \eqref{Duhamel} yields the formal power series
\begin{equation} \label{Schwinger-Dyson series}
\sum_{k = 0}^\infty \int_{\Delta^k(t)} \md \ul{t} \; \frac{(\mi N)^k}{2^k} \qB{\widehat{\A}_N(W_{t_k}), \dots \qB{\widehat{\A}_N(W_{t_1}), \widehat{\A}_N(a^{(p)}_t)}\dots}\,.
\end{equation}
It is easy to see that, on $\mathcal{H}^{(n)}_\pm$, the $k$-th term of \eqref{Schwinger-Dyson series} is bounded in norm by
\begin{equation} \label{simple bound on k'th term}
\frac{\pb{t n^2 \norm{w}_\infty/N}^k}{k!} \, \pbb{\frac{n}{N}}^p \norm{a^{p}}\,.
\end{equation}
Therefore, on $\mathcal{H}^{(n)}_\pm$, the series \eqref{Schwinger-Dyson series} converges in norm for all times. Furthermore, \eqref{simple bound on k'th term} implies that the rest term arising from the iteration of \eqref{Duhamel} vanishes for $k \to \infty$, so that \eqref{Schwinger-Dyson series} is equal to \eqref{Duhamel}.

The \emph{mean-field limit} is the limit $n = \nu N \to \infty$, where $\nu >0$ is some constant.
The above estimate is clearly inadequate to prove statements about the mean-field limit. In order to obtain estimates uniform in $N$, more care is needed.

To see why the above estimate is so crude, consider the commutator
\begin{equation*}
\frac{\mi N}{2} \qB{\widehat{\A}_N(W_s), \widehat{\A}_N(a^{(p)}_t)} \Bigr|_{\mathcal{H}^{(n)}_\pm} \;=\; 
\frac{p!}{N^p} \binom {n}{p} \frac{\mi}{N} P_\pm \sum_{1 \leq i < j \leq n} \qb{W_{ij,s}, a^{(p)}_t \otimes \umat^{(n-p)}} P_\pm\,.
\end{equation*}
We see that most terms of the commutator vanish (namely, whenever $p<i<j$). Thus, for large $n$, the above estimates are highly wasteful. This can be remedied by more careful bookkeeping. We split the commutator into two terms: the \emph{tree terms}, defined by $1 \leq i \leq p$ and $p+1 \leq j \leq n$, and the \emph{loop terms}, defined by $1 \leq i < j \leq p$. All other terms vanish. This splitting can also be inferred from \eqref{commutator of second quantized operators}.

The naming originates from a diagrammatic representation (see Figure \ref{figure: tree and loop}). A $p$-particle operator is represented as a wiggly vertical line to which are attached $p$ horizontal branches on the left and $p$ horizontal branches on the right. Each branch on the left represents a creation operator $\widehat{\psi}_N^*(x_i)$, and each branch on the right an annihilation operator $\widehat{\psi}_N(y_i)$. The product $\widehat{\A}_N(a^{(p)}) \widehat{\A}_N(b^{(q)})$ of two operators is given by the sum over all possible pairings of the annihilation operators in $\widehat{\A}_N(a^{(p)})$ with the creation operators in $\widehat{\A}_N(b^{(q)})$. Such a contraction is graphically represented as a horizontal line joining the corresponding branches. We consider diagrams that arise in this manner from the multiplication of a finite number of operators of the form $\widehat{\A}_N(a^{(p)})$. 
\begin{figure}[ht!]
\vspace{0.5cm}
\begin{center}
\psfrag{1}[][]{$a^{(p)}_t$}
\psfrag{2}[][]{$W_{i\,p+1,s}$}
\psfrag{3}[][]{$a^{(p)}_t$}
\psfrag{4}[][]{$W_{ij,s}$}
\includegraphics[width = 10cm]{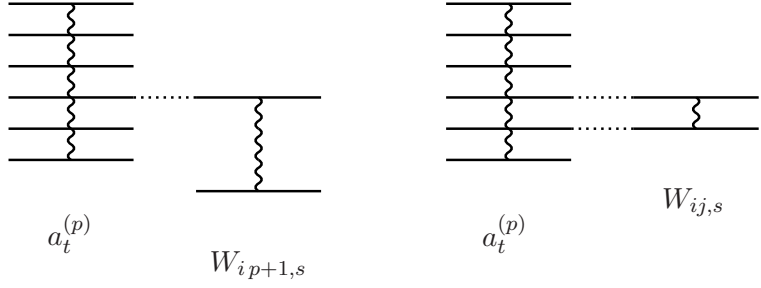}
\end{center}
\caption{\textit{Two terms of the product $\widehat{\A}_N(a^{(p)}_t) \, \widehat{\A}_N(W_s)$, represented as labelled diagrams. A tree term (left) produces a tree diagram. A loop term (right) produces a diagram with one loop.} \label{figure: tree and loop}}
\end{figure}

We now generalize this idea to a systematic scheme for the multiple commutators appearing in the Schwinger-Dyson expansion. To this end, we decompose the multiple commutator
\begin{equation*}
\frac{(\mi N)^k}{2^k} \qB{\widehat{\A}_N(W_{t_k}), \dots \qB{\widehat{\A}_N(W_{t_1}), \widehat{\A}_N(a^{(p)}_t)}\dots}
\end{equation*}
into a sum of $2^k$ terms obtained by writing out each commutator. Each resulting term is a product of $k+1$ second-quantized operators, which we furthermore decompose into a sum over all possible contractions for which $r > 0$ in \eqref{product of two second quantized operators} (at least one contraction for each multiplication). The restriction $r > 0$ follows from $[a^{(p)}, b^{(q)}]_0 = 0$. This is equivalent to saying that all diagrams are connected.

We call the resulting terms \emph{elementary}. The idea is to classify all elementary terms according to their number of loops $l$. Write
\begin{equation} \label{splitting of multiple commutator}
\frac{(\mi N)^k}{2^k} \qB{\widehat{\A}_N(W_{t_k}), \dots \qB{\widehat{\A}_N(W_{t_1}), \widehat{\A}_N(a^{(p)}_t)}\dots} \;=\; \sum_{l = 0}^k \frac{1}{N^l} \, \widehat{\A}_N \pb{ G^{(k,l)}_{t,t_1,\dots,t_k}(a^{(p)})}\,,
\end{equation}
where $G^{(k,l)}_{t,t_1,\dots,t_k}(a^{(p)})$ is a $(p+k-l)$-particle operator, equal to the sum of all elementary terms with $l$ loops. It is defined through the recursion relation (on $\mathcal{H}^{(p+k-l)}_\pm$)
\begin{align}
G^{(k,l)}_{t,t_1,\dots,t_k}(a^{(p)}) 
&\;=\; \mi (p+k-l-1) \qB{W_{t_k} , G^{(k-1,l)}_{t, t_1, \dots, t_{k - 1}}(a^{(p)})}_1
\notag \\
&\qquad {}+{} \mi \binom{p+k-l}{2} \qB{W_{t_k}, G^{(k-1, l-1)}_{t, t_1, \dots, t_{k - 1}}(a^{(p)})}_2
\notag \\
&\;=\; \mi P_\pm \sum_{i = 1}^{p+k-l-1} \qB{W_{i \, p+k-l, t_k}, G^{(k-1,l)}_{t, t_1, \dots, t_{k - 1}}(a^{(p)}) \otimes \umat} P_\pm 
\notag \\ \label{recursive definition of graphs}
&\qquad {}+{} \mi P_\pm  \sum_{1 \leq i < j \leq p+k-l} \qB{W_{ij, t_k}, G^{(k-1, l-1)}_{t, t_1, \dots, t_{k - 1}}(a^{(p)})} P_\pm \,,
\end{align}
as well as $G^{(0,0)}_t(a^{(p)}) \deq a^{(p)}_t$. If $l < 0$, $l > k$, or $p+k-l > n$ then $G^{(k,l)}_{t,t_1,\dots,t_k}(a^{(p)}) = 0$.
The interpretation of the recursion relation is simple: a $(k,l)$-term arises from either a $(k - 1, l)$-term without adding a loop or from a $(k - 1, l-1)$-term to which a loop is added. It is not hard to see, using induction on $k$ and the definition \eqref{recursive definition of graphs}, that \eqref{splitting of multiple commutator} holds.
It is often convenient to have an explicit formula for the decomposition into elementary terms:
\begin{equation*}
G^{(k,l)}_{t,t_1,\dots,t_k}(a^{(p)}) \;=\; \sum_{\alpha = 1}^{c(p,k,l)} G^{(k,l)(\alpha)}_{t,t_1,\dots,t_k}(a^{(p)})\,,
\end{equation*}
where $G^{(k,l)(\alpha)}_{t,t_1,\dots,t_k}(a^{(p)})$ is an elementary term, and $c(p,k,l)$ is the number of elementary terms in $G^{(k,l)}_{t,t_1,\dots,t_k}(a^{(p)})$. 
\begin{figure}[ht!]
\vspace{0.5cm}
\begin{center}
\psfrag{1}[][]{$0$}
\psfrag{2}[][]{$1$}
\psfrag{3}[][]{$2$}
\psfrag{4}[][]{$3$}
\psfrag{5}[][]{$4$}
\includegraphics[width = 12cm]{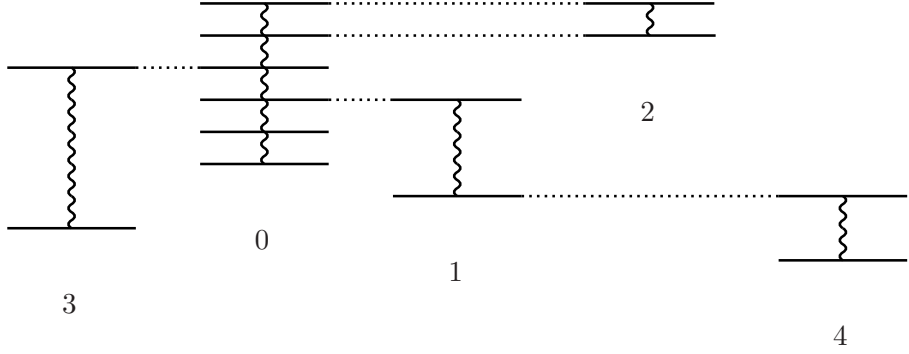}
\end{center}
\caption{\textit{The labelled diagram corresponding to a one-loop elementary term in the commutator of order $4$.} \label{figure: general graph}}
\end{figure}

In order to establish a one-to-one correspondence between elementary terms and diagrams, we introduce a labelling scheme for diagrams. Consider an elementary term arising from a choice of contractions in the multiple commutator of order $k$, along with its diagram.
We label all vertical lines $v$ with an index $i_v \in \N$ as follows. The vertical line of $a^{(p)}$ is labelled by $0$. The vertical line of the first (i.e.\ innermost in the multiple commutator) interaction operator is labelled by $1$, of the second by $2$, and so on (see Figure \ref{figure: general graph}). Conversely, every elementary term is uniquely determined by its labelled diagram. We consequently use $\alpha = 1, \dots, c(p,k,l)$ to index either elementary terms or labelled diagrams.

Use the shorthand $\ul{t} = (t_1, \dots, t_k)$ and define
\begin{equation}
G^{(k,l)}_{t}(a^{(p)}) \;\deq\; \int_{\Delta^k(t)} \md \ul{t} \; G^{(k,l)}_{t,\ul{t}}(a^{(p)})\,.
\end{equation}
In summary, we have an expansion in terms of the number of loops $l$:
\begin{equation} \label{loop expansion}
\me^{\mi t H_N} \, \widehat{\A}_N (a^{(p)}) \, \me^{- \mi t H_N} \;=\; \sum_{k = 0}^\infty \sum_{l = 0}^k \frac{1}{N^l} \, \widehat{\A}_N \pb{G^{(k,l)}_t(a^{(p)})}\,,
\end{equation}
which converges in norm on $\mathcal{H}^{(n)}_\pm$, $n \in \N$, for all times $t$.

\section{Convergence for bounded interaction} \label{section: bounded interaction}
For a bounded interaction potential, $\norm{w}_\infty < \infty$, it is now straightforward to control the mean-field limit.
\begin{lemma} \label{lemma: bound for graph for bounded interaction}
We have the bound
\begin{equation} \label{definition of c(p,k,l)}
\normB{G^{(k,l)}_{t,\ul{t}}(a^{(p)})} \;\leq\; c(p,k,l) \norm{w}_\infty^k \, \norm{a^{(p)}}\,.
\end{equation}
Furthermore,
\begin{equation} \label{bound for c(p,k,l)}
c(p,k,l) \;\leq\; 2^k \binom{k}{l} \, (p + k - l)^l \, (p + k - 1) \cdots p\,.
\end{equation}
\end{lemma}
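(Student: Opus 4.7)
I would prove the two bounds separately, both by unwinding the recursion \eqref{recursive definition of graphs} for $G^{(k,l)}_{t,\ul{t}}(a^{(p)})$. The first inequality is essentially immediate from the boundedness of $w$ and the unitarity of free evolution; the second follows by a straightforward induction on $k$ built on a companion recursion for $c(p,k,l)$. Unrolling \eqref{recursive definition of graphs} and expanding every commutator $[A,B] = AB - BA$ shows that each elementary term $G^{(k,l)(\alpha)}_{t,\ul{t}}(a^{(p)})$ on $\mathcal{H}^{(p+k-l)}_\pm$ is, up to a phase of unit modulus and insertions of the projection $P_\pm$, an ordered product of the $k+1$ operators $a^{(p)}_t \otimes \umat^{(k-l)}$ and $W_{i_m j_m, t_m}$, $m = 1, \ldots, k$, with index pairs read off from the diagram labelling $\alpha$. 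Since $h$ is self-adjoint, $\Gamma(\me^{\mi t h})$ is unitary, so $\|a^{(p)}_t\| = \|a^{(p)}\|$ and $\|W_{i_m j_m, t_m}\| = \|w\|_\infty$; with $\|P_\pm\| \leq 1$, submultiplicativity of the operator norm yields $\|G^{(k,l)(\alpha)}_{t,\ul{t}}(a^{(p)})\| \leq \|w\|_\infty^k \|a^{(p)}\|$, and summing over $\alpha$ gives \eqref{definition of c(p,k,l)}.

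\textbf{Combinatorial recursion and induction.} Reading off from \eqref{recursive definition of graphs}, the tree branch produces $2(p+k-l-1)$ new elementary terms from each elementary term of $G^{(k-1,l)}$ (the sum over $i$ combined with the two summands of each commutator), while the loop branch produces $2\binom{p+k-l}{2} = (p+k-l)(p+k-l-1)$ new terms from each elementary term of $G^{(k-1,l-1)}$. Since these productions land on disjoint labelled diagrams, one obtains the exact recursion
\[
c(p,k,l) \;=\; 2(p+k-l-1)\, c(p,k-1,l) \;+\; (p+k-l)(p+k-l-1)\, c(p,k-1,l-1),
\]
with $c(p,0,0) = 1$ and $c(p,k,l) = 0$ whenever $l < 0$ or $l > k$. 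I would then induct on $k$: substituting the hypothesized bound into this recursion and dividing through by the common factor $2^k (p+k-2)\cdots p$ reduces the desired inequality to
\[
\binom{k-1}{l}(p+k-l-1)^{l+1} + \tfrac{1}{2}\binom{k-1}{l-1}(p+k-l)^l(p+k-l-1) \;\leq\; \binom{k}{l}(p+k-l)^l(p+k-1),
\]
which follows from the elementary estimates $(p+k-l-1)^{l+1} \leq (p+k-l)^l(p+k-1)$ and $\tfrac{1}{2}(p+k-l-1) \leq p+k-1$ combined with Pascal's rule $\binom{k-1}{l} + \binom{k-1}{l-1} = \binom{k}{l}$.

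\textbf{Main obstacle.} For bounded $w$, no genuinely analytic obstacle arises: free propagators preserve norms, so there is no regularity or dispersive input to marshal. The only real care needed is in bookkeeping the two distinct sources of index factors appearing in the target bound---the exponent $(p+k-l)^l$ produced by the loop steps versus the falling factorial $(p+k-1)\cdots p$ produced by the tree steps (which successively enlarge the number of particles)---when aligning the output of the recursion with the claimed expression. This dictates exactly which elementary inequality to deploy on each summand, as above.
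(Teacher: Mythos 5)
Your proof is correct, and it carries out exactly the alternative argument that the paper mentions but leaves to the reader: setting up the exact recursion $c(p,k,l) = 2(p+k-l-1)\,c(p,k-1,l) + (p+k-l)(p+k-l-1)\,c(p,k-1,l-1)$ read off from \eqref{recursive definition of graphs} and verifying \eqref{bound for c(p,k,l)} by induction on $k$ via Pascal's rule. The paper's primary argument is a direct diagram count ($l=0$ exactly, then a $\binom{k}{l}$ choice of loop steps with a factor $\leq p+k-l$ per loop), with the induction noted as ``a simple exercise''; you have supplied that exercise, and your elementary estimates $(p+k-l-1)^{l+1} \leq (p+k-l)^l(p+k-1)$ and $\tfrac{1}{2}(p+k-l-1) \leq p+k-1$ close the inductive step correctly.
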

\begin{proof}
Assume first that $l = 0$. Then the number of labelled diagrams is clearly given by $2^k p \cdots (p+k-1)$. Now if there are $l$ loops, we may choose to add them at any $l$ of the $k$ steps when computing the multiple commutator. Furthermore, each addition of a loop produces at most $p + k - l$ times more elementary terms than the addition of a tree branch. Combining these observations, we arrive at the claimed bound for $c(p,k,l)$.

Alternatively, it is a simple exercise to show the claim, with $c(p,k,l)$ replaced by the bound \eqref{bound for c(p,k,l)}, by induction on $k$.
\end{proof}

\begin{lemma} \label{convergence of the loop expansion}
Let $\nu > 0$ and $t < (8 \nu \norm{w}_\infty)^{-1}$. Then, on $\mathcal{H}^{(\nu N)}_\pm$, the Schwinger-Dyson series \eqref{loop expansion} converges in norm, uniformly in $N$.

\end{lemma}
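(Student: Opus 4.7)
The plan is to bound the summand of the loop expansion \eqref{loop expansion} on the fixed sector $\mathcal{H}^{(\nu N)}_\pm$ by combining property (v) of Section \ref{section: setup} with the estimates of Lemma \ref{lemma: bound for graph for bounded interaction}, and then to check that the resulting series is a geometric one with ratio $8\nu t\norm{w}_\infty < 1$.

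First I would apply the integration over the simplex $\Delta^k(t)$ (volume $t^k/k!$) together with Lemma \ref{lemma: bound for graph for bounded interaction} to obtain
\begin{equation*}
\normB{G^{(k,l)}_t(a^{(p)})} \;\leq\; \frac{t^k}{k!} \, c(p,k,l) \, \norm{w}_\infty^k \, \norm{a^{(p)}}\,.
\end{equation*}
Since $G^{(k,l)}_t(a^{(p)})$ is a $(p+k-l)$-particle operator, property (v) gives
\begin{equation*}
\normB{\widehat{\A}_N \pb{G^{(k,l)}_t(a^{(p)})} \big|_{\mathcal{H}^{(\nu N)}_\pm}} \;\leq\; \nu^{p+k-l} \normB{G^{(k,l)}_t(a^{(p)})}\,.
\end{equation*}
Hence the norm of the $k$-th term of \eqref{loop expansion} on $\mathcal{H}^{(\nu N)}_\pm$ is bounded by
\begin{equation*}
\norm{a^{(p)}} \, \frac{t^k \norm{w}_\infty^k}{k!} \sum_{l=0}^k \frac{\nu^{p+k-l}}{N^l} \, c(p,k,l)\,.
\end{equation*}

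The key observation is that, on $\mathcal{H}^{(\nu N)}_\pm$, the operator $G^{(k,l)}_t(a^{(p)})$ is automatically zero whenever $p+k-l > \nu N$ (see the last sentence preceding \eqref{recursive definition of graphs}). Thus in every nonvanishing term we have $(p+k-l)/N \leq \nu$, which allows me to absorb the factor $(p+k-l)^l$ from the bound \eqref{bound for c(p,k,l)} against $N^{-l}$:
\begin{equation*}
\frac{\nu^{p+k-l}}{N^l} \, (p+k-l)^l \;\leq\; \nu^{p+k-l} \cdot \nu^{l} \;=\; \nu^{p+k}\,.
\end{equation*}
Inserting \eqref{bound for c(p,k,l)} and summing $\binom{k}{l}$ over $l$ therefore gives a bound of
\begin{equation*}
\norm{a^{(p)}} \, \nu^p \, (4\nu t \norm{w}_\infty)^k \, \binom{p+k-1}{k}\,.
\end{equation*}
Finally, using $\binom{p+k-1}{k} \leq 2^{p+k-1}$, the $k$-th term is bounded by $\norm{a^{(p)}} \, \nu^p \, 2^{p-1} \, (8 \nu t \norm{w}_\infty)^k$, which is summable in $k$ precisely under the assumption $t < (8\nu\norm{w}_\infty)^{-1}$, and the bound is uniform in $N$.

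The only mildly delicate point in the argument is the cutoff $p+k-l \leq \nu N$: one has to convince oneself that the vanishing of $G^{(k,l)}$ on the chosen sector really does dominate the otherwise exponentially growing factor $(p+k-l)^l/N^l$ for large $k$. Everything else is bookkeeping, so I expect no serious obstacle beyond keeping the combinatorial factors organized.
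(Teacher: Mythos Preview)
Your argument is correct and is essentially identical to the paper's own proof: both combine the operator bound \eqref{norm of second quantized observable}, the combinatorial estimate \eqref{bound for c(p,k,l)}, the simplex volume $t^k/k!$, and the cutoff $p+k-l\leq \nu N$ (to convert $(p+k-l)^l/N^l$ into $\nu^l$), then sum $\binom{k}{l}$ and bound $\binom{p+k-1}{k}\leq 2^{p+k-1}$ to obtain a geometric series with ratio $8\nu t\norm{w}_\infty$. The only slip is bibliographic: the vanishing condition $p+k-l>n\Rightarrow G^{(k,l)}=0$ is stated just \emph{after} \eqref{recursive definition of graphs}, not before.
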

\begin{proof}
Recall that $p+k-l \leq n$ for nonvanishing $\widehat{\A}_N \pb{G^{(k,l)}_{t,\ul{t}}(a^{(p)})} \bigr|_{\mathcal{H}^{(n)}_\pm}$. Using the symbol $I_{\{A\}}$, defined as $1$ if $A$ is true and $0$ if $A$ is false, we find
\begin{align*}
&\sum_{k = 0}^\infty \sum_{l = 0}^k \frac{1}{N^l} \int_{\Delta^k(t)} \md \ul{t} \; \normB{\widehat{\A}_N \pb{G^{(k,l)}_{t,\ul{t}}(a^{(p)})} \bigr|_{\mathcal{H}^{(\nu N)}_\pm}} 
\\
&\qquad \leq\; \sum_{k = 0}^\infty \sum_{l = 0}^k \frac{(p+k-l)^l}{N^l} \,I_{\{p+k-l \leq \nu N\}}\, \frac{1}{k!} (2 \norm{w}_\infty t)^k \, \binom{k}{l} \binom{p+k-1}{k} k! \, \nu^{p+k-l} \, \norm{a^{(p)}}
\\
&\qquad\leq\; \sum_{k = 0}^\infty (8 \nu \norm{w}_\infty t)^k \,
(2\nu)^p \, \norm{a^{(p)}}
\\ 
&\qquad=\;  \frac{1}{1 - 8 \nu \norm{w}_\infty t} \, (2\nu)^p \, \norm{a^{(p)}}\,,
\end{align*}
where we used that $\sum_{l = 0}^k \binom{k}{l} = 2^k$, and in particular $\binom{k}{l} \leq 2^k$.
\end{proof}

In the spirit of semi-classical expansions, we can rewrite the Schwinger-Dyson series to get a ``$1/N$-expansion'', whereby all $l$-loop terms add up to an operator of order $O(N^{-l})$.
\begin{lemma} \label{convergence for bounded potential}
Let $t < (8 \nu \norm{w}_\infty)^{-1}$ and $L \in \N$. Then we have on $\mathcal{H}^{(\nu N)}_\pm$
\begin{equation*}
\me^{\mi t H_N} \, \widehat{\A}_N(a^{(p)}) \, \me^{- \mi t H_N} \;=\; \sum_{l = 0}^{L-1} \frac{1}{N^l} \sum_{k = l}^\infty \, \widehat{\A}_N \pb{G^{(k,l)}_t(a^{(p)})} + O \pbb{\frac{1}{N^L}}\,,
\end{equation*}
where the sum converges uniformly in $N$.
\end{lemma}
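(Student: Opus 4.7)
The plan is to obtain the statement from the loop expansion \eqref{loop expansion} by reordering the double sum and estimating the tail. By Lemma \ref{convergence of the loop expansion}, the double series
\begin{equation*}
\sum_{k=0}^{\infty}\sum_{l=0}^{k}\frac{1}{N^l}\,\widehat{\A}_N\pb{G^{(k,l)}_t(a^{(p)})}
\end{equation*}
converges absolutely in operator norm on $\mathcal{H}^{(\nu N)}_\pm$ whenever $t<(8\nu\norm{w}_\infty)^{-1}$, so Fubini permits exchanging the order of summation and splitting off the first $L$ groups indexed by $l$. It remains to prove that the tail $R_L \deq \sum_{l\geq L} N^{-l}\sum_{k\geq l}\widehat{\A}_N(G^{(k,l)}_t(a^{(p)}))$ has operator norm of order $N^{-L}$ on $\mathcal{H}^{(\nu N)}_\pm$, uniformly in $N$.

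Combining \eqref{norm of second quantized observable} with Lemma \ref{lemma: bound for graph for bounded interaction}, and observing that $\widehat{\A}_N$ of a $(p+k-l)$-particle operator vanishes on $\mathcal{H}^{(\nu N)}_\pm$ as soon as $p+k-l>\nu N$, one bounds each summand of $\norm{R_L}$ by
\begin{equation*}
\frac{\nu^{p+k-l}(p+k-l)^l}{N^l}\cdot 2^k\binom{k}{l}\frac{(p+k-1)!}{(p-1)!}\frac{t^k}{k!}\,\norm{w}_\infty^k\,\norm{a^{(p)}}\,,
\end{equation*}
subject to $p+k-l\leq\nu N$. The essential trick is to split $(p+k-l)^l/N^l=(p+k-l)^L\,N^{-L}\cdot \pb{(p+k-l)/N}^{l-L}$ and invoke the nonvanishing constraint only on the second factor, which gives $(p+k-l)^l/N^l\leq(p+k)^L\,\nu^{l-L}/N^L$ for $l\geq L$.

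Inserting this estimate and performing the $l$-sum via $\sum_{l=L}^{k}\binom{k}{l}\leq 2^k$ together with $\binom{p+k-1}{k}\leq 2^{p+k-1}$ collapses the tail to
\begin{equation*}
\norm{R_L}\;\leq\;\frac{C_{p,L}\,\norm{a^{(p)}}}{N^L}\sum_{k=L}^{\infty}(p+k)^L\,(8\nu\norm{w}_\infty t)^k\,,
\end{equation*}
and the geometric series converges under the assumption on $t$, since the polynomial prefactor $(p+k)^L$ does not spoil the decay. The main obstacle is precisely that a naive estimate on a single $l$-loop term wastes its entire $N^{-l}$ prefactor: the combinatorial factor $(p+k-l)^l$ from Lemma \ref{lemma: bound for graph for bounded interaction}, combined with the constraint $p+k-l\leq\nu N$, exactly cancels $N^{-l}$. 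Distributing $(p+k-l)^l$ into an unconstrained block of $L$ factors and a constrained block of $l-L$ factors is what produces the uniform $O(N^{-L})$ bound.
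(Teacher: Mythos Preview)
Your proof is correct and follows essentially the same approach as the paper: both reorder the absolutely convergent double series using Lemma~\ref{convergence of the loop expansion}, bound each term via \eqref{norm of second quantized observable} and Lemma~\ref{lemma: bound for graph for bounded interaction}, and then use the identical key trick of splitting $(p+k-l)^l/N^l$ into an unconstrained block of $L$ factors and a constrained block of $l-L$ factors bounded by $\nu^{l-L}$. The only cosmetic difference is that the paper keeps $(p+k-l)^L$ and shifts $k\mapsto k+l$ to decouple the sums, whereas you bound $(p+k-l)^L\leq (p+k)^L$ and sum over $l$ first; both lead to the same convergent series $\sum_k (p+k)^L(8\nu\norm{w}_\infty t)^k$.
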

\begin{proof}
Instead of the full Schwinger-Dyson expansion \eqref{Schwinger-Dyson series}, we can stop the expansion whenever $L$ loops have been generated. More precisely, we iterate \eqref{Duhamel} and use \eqref{commutator of second quantized operators} at each iteration to split the commutator into tree ($r = 1$) and loop ($r = 2$) terms. Whenever a term obtained in this fashion has accumulated $L$ loops, we stop expanding and put it into a remainder term. Thus all fully expanded terms are precisely those arising from diagrams containing up to $L-1$ loops, and it is not hard to show that the remainder term is of order $N^{-L}$.

In view of later applications, we also give a proof using the fully expanded Schwinger-Dyson series.
From Lemma \ref{convergence of the loop expansion} we know that the sum converges on $\mathcal{H}_\pm^{(\nu N)}$ in norm, uniformly in $N$, and can be reordered as
\begin{equation*}
\me^{\mi t H_N} \, \widehat{\A}_N (a^{(p)}) \, \me^{- \mi t H_N} \;=\; \sum_{l = 0}^\infty \frac{1}{N^l} \sum_{k = l}^\infty \int_{\Delta^k(t)} \md \ul{t} \; \widehat{\A}_N \pb{G^{(k,l)}_{t,\ul{t}}(a^{(p)})}\,,
\end{equation*}
as an identity on $\mathcal{H}^{(\nu N)}_\pm$.
Proceeding as above we find
\begin{align*}
&\sum_{l = L}^\infty \frac{1}{N^l} \sum_{k = l}^\infty \int_{\Delta^k(t)} \md \ul{t} \; \normB{\widehat{\A}_N \pb{G^{(k,l)}_{t,\ul{t}}(a^{(p)})} \bigr|_{\mathcal{H}^{(\nu N)}_\pm}} 
\\
&\qquad\leq\;
\frac{1}{N^L} \sum_{l = L}^\infty  \sum_{k = l}^\infty \frac{(p+k-l)^l}{N^{l-L}} \,I_{\{p+k-l \leq \nu N\}} \, \frac{1}{k!} (2 \norm{w}_\infty t)^k \, \binom{k}{l} \binom{p+k-1}{k} k! \, \nu^{p+k-l} \, \norm{a^{(p)}}
\\
&\qquad\leq\;
\frac{1}{(\nu N)^L} \sum_{l = L}^\infty  \sum_{k = l}^\infty (p+k-l)^L (8 \nu \norm{w}_\infty t)^k \, (2\nu)^p\, \norm{a^{(p)}}
\\
&\qquad=\;
\frac{1}{(\nu N)^L} \sum_{l = L}^\infty  \sum_{k = 0}^\infty (p+k)^L (8 \nu \norm{w}_\infty t)^{k+l} \, (2\nu)^p\, \norm{a^{(p)}}
\\
&\qquad\leq\;
\frac{1}{(\nu N)^L} \sum_{l = L}^\infty  (8 \nu \norm{w}_\infty t)^l \frac{\me^p \, L!}{(1 - 8 \nu \norm{w}_\infty t)^{L+1}} (2 \nu)^p\, \norm{a^{(p)}}
\\
&\qquad=\;
\frac{1}{(\nu N)^L} \frac{\me^p \,L! \,(8 \nu \norm{w}_\infty t)^L}{(1 - 8 \nu \norm{w}_\infty t)^{L+2}} (2 \nu)^p\, \norm{a^{(p)}}\,,
\end{align*}
where we used that $\sum_{k = 0}^\infty (p+k)^L \, x^k \leq \frac{\me^p \, L!}{(1 - x)^{L+1}}$.
\end{proof}

\section{Convergence for Coulomb interaction} \label{section: Coulomb}
In this section we consider an interaction potential of the form
\begin{equation}
w(x) \;=\;\kappa \frac{1}{\abs{x}}\,,
\end{equation}
where $\kappa \in \R$.
We take the one-body Hamiltonian to be
\begin{equation*}
h \;=\; - \Delta\,,
\end{equation*}
the nonrelativistic kinetic energy without external potentials. We assume this form of $h$ and $w$
throughout Sections \ref{section: Coulomb} and \ref{section: mean-field limit for bosons}. In Section \ref{section: generalizations}, we discuss some generalizations.

\subsection{Kato smoothing}
The non-relativistic dispersive nature of the free time evolution $\ee^{\ii t\Delta}$ is essential for controlling singular potentials.
It embodied by the following dispersive estimate, which is sometimes referred to as Kato's smoothing estimate, as it was first derived using Kato's theory of smooth perturbations; see \cite{ReedSimonIV, Simon1992}. Here we present a new, elementary proof, which yields the sharp constant and may be easily generalized to free Hamiltonians of the form $(-\Delta)^\gamma$, where $1/2 < \gamma < d/2$ and $d$ denotes the number of spatial dimensions.

\begin{lemma} \label{lemma: Kato smoothing}
For $d \geq 3$ and $\psi \in \LLL^2(\R^d)$ we have
\begin{equation} \label{kato smoothing estimate}
\int \dd t \; \normb{\abs{x}^{-1} \, \ee^{\ii t \Delta} \, \psi}^2 \;\leq\; \frac{\pi}{d - 2} \norm{\psi}^2\,.
\end{equation}
More generally, for $d \geq 2$ and $\gamma$ satisfying $1/2 < \gamma < d/2$ we have
\begin{equation} \label{generalized Kato smoothing}
\int \dd t \; \normb{\abs{x}^{-\gamma} \, \ee^{- \ii t (-\Delta)^\gamma} \, \psi}^2 \;\leq\; c_{d, \gamma} \, \norm{\psi}^2\,,
\end{equation}
for some constant $c_{d, \gamma} > 0$.
\end{lemma}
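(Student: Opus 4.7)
My plan is to reduce the smoothing estimate to a bilinear form on the unit sphere via Fourier analysis in space and time. Starting from the spectral representation
\begin{equation*}
\pb{\me^{-\mi t (-\Delta)^\gamma} \psi}(x) \;=\; (2\pi)^{-d/2} \int \dd k \; \me^{\mi k \cdot x - \mi t \abs{k}^{2\gamma}} \hat{\psi}(k)\,,
\end{equation*}
I square the modulus, integrate in $t$, and apply Plancherel. The $t$-integral produces $2\pi \delta(\abs{k}^{2\gamma} - \abs{k'}^{2\gamma})$; in polar coordinates $k = r\omega$, $k' = s\omega'$ this forces $s = r$ and contributes a Jacobian $1/(2\gamma r^{2\gamma - 1})$, yielding
\begin{equation*}
\int \dd t \; \absb{\me^{-\mi t (-\Delta)^\gamma}\psi(x)}^2 \;=\; \frac{(2\pi)^{1-d}}{2\gamma}\int_0^\infty r^{2d-1-2\gamma}\,\dd r\;\absb{M(rx;r)}^2\,,
\end{equation*}
where $M(y;r) \deq \int_{S^{d-1}} \me^{\mi \omega \cdot y}\,\hat\psi(r\omega)\,\dd\omega$.

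Next I multiply by $\abs{x}^{-2\gamma}$, integrate in $x$, and rescale $y = rx$ (producing a factor $r^{2\gamma - d}$), arriving at
\begin{equation*}
\int \dd t \int \dd x \; \abs{x}^{-2\gamma}\, \absb{\me^{-\mi t (-\Delta)^\gamma}\psi(x)}^2 \;=\; \frac{(2\pi)^{1-d}}{2\gamma}\int_0^\infty r^{d-1}\,\dd r \int \dd y \;\abs{y}^{-2\gamma}\,\absb{M(y;r)}^2\,.
\end{equation*}
Expanding $\abs{M}^2$ as a double integral over $S^{d-1}\times S^{d-1}$ and using the distributional Riesz formula $\int \dd y\;\abs{y}^{-2\gamma}\me^{\mi \xi \cdot y} = C_{d,\gamma}\abs{\xi}^{-(d-2\gamma)}$, with $C_{d,\gamma} = 2^{d-2\gamma}\pi^{d/2}\Gamma((d-2\gamma)/2)/\Gamma(\gamma)$ (valid for $0 < 2\gamma < d$), the right-hand side becomes
\begin{equation*}
\frac{(2\pi)^{1-d}\,C_{d,\gamma}}{2\gamma}\int_0^\infty r^{d-1}\,\dd r \int_{S^{d-1}}\int_{S^{d-1}}\frac{\hat\psi(r\omega)\,\overline{\hat\psi(r\omega')}}{\abs{\omega-\omega'}^{d-2\gamma}}\,\dd\omega\,\dd\omega'\,.
\end{equation*}

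The angular kernel $\abs{\omega-\omega'}^{-(d-2\gamma)}$ is integrable on $S^{d-1}$ precisely when $\gamma > 1/2$; being non-negative and rotation-invariant, the associated integral operator on $\LLL^2(S^{d-1})$ is diagonal in the spherical-harmonic basis, with largest eigenvalue attained on constants and equal to $K_{d,\gamma}\deq \int_{S^{d-1}}\abs{\omega_0 - \omega'}^{-(d-2\gamma)}\,\dd\omega'$. Hence the angular double integral is bounded by $K_{d,\gamma}\int_{S^{d-1}}\abs{\hat\psi(r\omega)}^2\,\dd\omega$, and combining with Parseval's identity $\int_0^\infty r^{d-1}\dd r \int_{S^{d-1}}\abs{\hat\psi(r\omega)}^2\dd\omega = \norm{\psi}^2$ establishes \eqref{generalized Kato smoothing} with $c_{d,\gamma} = (2\pi)^{1-d} C_{d,\gamma} K_{d,\gamma}/(2\gamma)$. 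For $\gamma = 1$, $d \geq 3$, the substitution $\abs{\omega_0 - \omega'} = 2\sin(\theta/2)$ gives $K_{d,1} = \abs{S^{d-1}} = 2\pi^{d/2}/\Gamma(d/2)$, and the functional equation $\Gamma(d/2) = \tfrac{d-2}{2}\Gamma((d-2)/2)$ collapses the prefactors to $\pi/(d-2)$, giving the sharp constant in \eqref{kato smoothing estimate}. The main technical obstacle is conceptually minor: the formal distributional identity used in the Plancherel step must be justified by Schwartz-class approximation of $\psi$ together with Fubini to legitimate the interchange of the $t$- and $k$-integrations.
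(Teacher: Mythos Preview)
Your argument is correct and reaches the sharp constant, but it is organized differently from the paper's proof. The paper regularizes the time integral with a Gaussian weight $\ee^{-\eta t^2/2}$, passes to Fourier space via the formula for $\widehat{\abs{x}^{-2}}$, applies the elementary bound $2\abs{\hat\psi(p_1)}\abs{\hat\psi(p_2)} \leq \abs{\hat\psi(p_1)}^2 + \abs{\hat\psi(p_2)}^2$ together with symmetry, and then reduces the resulting supremum to Newton's theorem on the integral $\int_{\SSS^{d-1}} \abs{e - p/\sqrt{v}}^{-(d-2)}\,\dd e$. Your approach instead performs the $t$-integral directly via Plancherel (producing the delta on the level set $\abs{k}=\abs{k'}$), rescales to the unit sphere, and bounds the resulting angular quadratic form by the top Funk--Hecke eigenvalue $K_{d,\gamma}$ of the rotation-invariant kernel $\abs{\omega-\omega'}^{-(d-2\gamma)}$ on $\LLL^2(\SSS^{d-1})$.

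Both arguments hinge on the same spherical integral: your $K_{d,1}=\abs{\SSS^{d-1}}$ is exactly the value $2 g(1)$ that the paper extracts from Newton's theorem. What your route buys is a cleaner structure---no regularization parameter, and the endpoint restrictions $1/2 < \gamma < d/2$ become transparent (integrability of the spherical kernel and validity of the Riesz transform). What the paper's route buys is that every step is elementary real analysis; your argument invokes the (standard but less elementary) fact that for a nonnegative zonal kernel the Funk--Hecke eigenvalues satisfy $\abs{\lambda_\ell}\leq\lambda_0$, which follows from $\abs{C_\ell^{(d)}(t)}\leq C_\ell^{(d)}(1)$ for Gegenbauer polynomials. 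Your closing remark about justifying the delta-function step by Schwartz approximation plus Fubini is the right patch; the paper's Gaussian regularization serves the same purpose.
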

\begin{remark}
The constant in \eqref{kato smoothing estimate} is sharp. Indeed, \eqref{kato smoothing estimate} is saturated if $\psi$ is Gaussian, as can be seen by explicit calculation.
\end{remark}
\begin{remark}
At the endpoint $\gamma = 1/2$ the dispersion law of the time evolution is $\omega(k) = \abs{k}$. Thus all spatial frequency components have the same propagation speed, i.e.\ there is no dispersion and the smoothing effect of the time evolution (which relies on the fast propagation of high spatial frequencies) vanishes. It is therefore not surprising that the endpoint $\gamma = 1/2$ is excluded in \eqref{generalized Kato smoothing}. Similarly, the claim is false at the other endpoint $\gamma = d/2$. This can be seen by noting that, for instance if $\psi$ is Gaussian, $\ee^{-\ii t (-\Delta)^{d/2}} \psi$ is nonzero in a neighbourhood of $0$ for small times. Since $\abs{x}^{-d}$ is not locally integrable, it follows that the left-hand side of \eqref{generalized Kato smoothing} is $\infty$.
\end{remark}
\begin{remark}
It is easy to see that our proof of \eqref{generalized Kato smoothing} remains valid if the power law potential $v(x) = \abs{x}^{-\gamma}$ is replaced with a potential $v$ satisfying
\begin{equation*}
\abs{\widehat{v^2}(k)} \;\lesssim\; \frac{1}{\abs{k}^{d - 2 \gamma}}\,,
\end{equation*}
where $\widehat{\,\cdot\,}$ denotes Fourier transformation.
\end{remark}

\begin{proof}[Proof of Lemma \ref{lemma: Kato smoothing}]

The left-hand side of \eqref{kato smoothing estimate} defines a quadratic form in $\psi$. By density, if we prove \eqref{kato smoothing estimate} for all $\psi \in \cal S$, it follows that \eqref{kato smoothing estimate} holds for all $\psi \in \LLL^2$. Let us therefore assume that $\psi \in \cal S$. By monotone convergence, we have
\begin{equation*}
\int \dd t \; \normb{\abs{x}^{-1} \, \ee^{\ii t \Delta} \, \psi}^2 \;=\; \lim_{\eta \downarrow 0} f(\eta)\,,
\end{equation*}
where
\begin{equation*}
f(\eta) \;\deq\;\int \dd t \; \normb{\abs{x}^{-1} \, \ee^{\ii t \Delta} \, \psi}^2 \, \ee^{-\frac{\eta}{2}  t^2} \;=\; 
\int \dd t \; \scalarb{\psi}{\ee^{- \ii t \Delta} \, \abs{x}^{-2} \, \ee^{\ii t \Delta} \, \psi} \, \ee^{-\frac{\eta}{2}  t^2}
\end{equation*}
In order to write the scalar product in Fourier space, we recall (see e.g.\ \cite{LiebLoss}) that, for $0 < \alpha < d$, we have
\begin{equation*}
\widehat{\abs{x}^{-\alpha}}(k) \;=\; 2^{d/2 - \alpha} \frac{\Gamma\pb{\frac{d - \alpha}{2}}}{\Gamma\pb{\frac{\alpha}{2}}} \frac{1}{\abs{k}^{d - \alpha}}\,.
\end{equation*}
In particular,
\begin{equation*}
\widehat{\abs{x}^{-2}}(k) \;=\; \frac{(2 \pi)^{d/2}}{(d-2) \abs{\SSS^{d-1}}} \frac{1}{\abs{k}^{d-2}}\,,
\end{equation*}
where $\SSS^{d-1}$ denotes the unit sphere in $\R^d$ and $\abs{\SSS^{d-1}}$ its surface measure.
Thus (writing $\psi$ instead of $\widehat{\psi}$) we get
\begin{equation*}
f(\eta) \;=\; \frac{(2 \pi)^{d/2}}{(2 \pi)^{d/2} (d - 2) \abs{\SSS^{d-1}}} \, \int \dd t \; \ee^{- \frac{\eta}{2} t^2} \, \int \dd p_1 \, \dd p_2 \; \ol{\psi(p_1)} \, \ee^{\ii t p_1^2} \, \frac{1}{\abs{p_1 - p_2}^{d - 2}} \, \ee^{- \ii t p_2^2} \, \psi(p_2)\,.
\end{equation*}
Using Fubini's theorem we get
\begin{align*}
f(\eta) &\;=\; \frac{1}{ (d - 2) \abs{\SSS^{d-1}}} \, \int \dd p_1 \, \dd p_2 \; \ol{\psi(p_1)}  \, \frac{1}{\abs{p_1 - p_2}^{d - 2}} \, \psi(p_2) \, \int \dd t \; \ee^{- \frac{\eta}{2} t^2} \, \ee^{\ii t (p_1^2 - p_2^2)}
\\
&\;=\; \frac{1}{ (d - 2) \abs{\SSS^{d-1}}} \, \int \dd p_1 \, \dd p_2 \; \ol{\psi(p_1)}  \, \frac{1}{\abs{p_1 - p_2}^{d - 2}} \, \psi(p_2) \, 2 \pi \frac{1}{\sqrt{ 2 \pi \eta}} \, \ee^{-\frac{1}{2 \eta} (p_1^2 - p_2^2)^2}
\\
&\;\leq\; \frac{2 \pi}{ (d - 2) \abs{\SSS^{d-1}}} \, \int \dd p_1 \, \dd p_2 \; \abs{\psi(p_1)} \, \abs{\psi(p_2)} \, \frac{1}{\abs{p_1 - p_2}^{d - 2}} \, \frac{1}{\sqrt{2 \pi \eta}} \, \ee^{-\frac{1}{2 \eta} (p_1^2 - p_2^2)^2}
\\
&\;\leq\; \frac{2 \pi}{ (d - 2) \abs{\SSS^{d-1}}} \, \int \dd p_1 \, \dd p_2 \; \abs{\psi(p_2)}^2 \, \frac{1}{\abs{p_1 - p_2}^{d - 2}} \, \frac{1}{\sqrt{2 \pi \eta}} \, \ee^{-\frac{1}{2 \eta} (p_1^2 - p_2^2)^2}\,,
\end{align*}
where in the last step we used the inequality $2 ab \leq a^2 + b^2$ and symmetry. This implies
\begin{equation*}
f(\eta) \;\leq\; \frac{2 \pi}{ (d - 2) \abs{\SSS^{d-1}}} \, \norm{\psi}^2 \, \sup_{p_2} \int \dd p_1 \; \frac{1}{\abs{p_1 - p_2}^{d - 2}} \, \frac{1}{\sqrt{2 \pi \eta}} \, \ee^{-\frac{1}{2 \eta} (p_1^2 - p_2^2)^2}\,.
\end{equation*}
Let us write $p_2 = \lambda p$ and $k \deq p_1 / \lambda$, where $\lambda > 0$ and $p \in \SSS^{d-1}$. Thus we get
\begin{equation*}
f(\eta) \;\leq\; \frac{2 \pi}{ (d - 2) \abs{\SSS^{d-1}}} \, \norm{\psi}^2 \, \sup_{\lambda > 0, \, p \in \SSS^{d-1}} \int \dd k \; \frac{1}{\abs{k - p}^{d - 2}} \, \frac{\lambda^2}{\sqrt{2 \pi \eta}} \, \ee^{-\frac{\lambda^4}{2 \eta} (k^2 - 1)^2}\,.
\end{equation*}
We do the integral over $k$ using polar coordinates: 
\begin{equation*}
k \;=\; \sqrt{v} \, e \,, \qquad \dd k \;=\; \frac{\sqrt{v}^{\,d-2}}{2} \, \dd v \, \dd e \,,\qquad v \in (0,\infty)\,,\, e \in \SSS^{d-1}\,,
\end{equation*}
where $\dd e$ denotes the usual surface measure on $\SSS^{d-1}$. This gives
\begin{equation*}
\int_0^\infty \dd v \; g(v) \, \frac{\lambda^2}{\sqrt{2 \pi \eta}} \, \ee^{-\frac{\lambda^4}{2 \eta} (v - 1)^2}\,,
\end{equation*}
where
\begin{equation} \label{definition of g(v)}
g(v) \;\deq\; \frac{\sqrt{v}^{\,d-2}}{2} \, \int_{\SSS^{d-1}} \dd e \; \frac{1}{\abs{\sqrt{v} \, e - p}^{d-2}} \;=\;
\frac{1}{2} \, \int_{\SSS^{d-1}} \dd e \; \frac{1}{\abs{e - p / \sqrt{v}}^{d-2}}
\,.
\end{equation}
Next, recall Newton's theorem for spherically symmetric mass distributions (see e.g.\ \cite{LiebLoss}): If $\mu$ is a spherically symmetric, finite, complex measure on $\R^d$, then
\begin{equation*}
\int \dd \mu(y) \; \frac{1}{\abs{x - y}^{d-2}} \;=\; \frac{1}{\abs{x}^{d - 2}} \int \dd \mu(y) \; \umat_{\{\abs{y} \leq \abs{x}\}} + \int \dd \mu(y) \; \frac{1}{\abs{y}^{d - 2}} \umat_{\{\abs{y} > \abs{x}\}}\,.
\end{equation*}
This yields
\begin{equation*}
g(v) \;=\;
\frac{1}{2} 
\begin{cases}
\absb{\SSS^{d-1}} \sqrt{v}^{\, d - 2} & \text{if } v \leq 1
\\
\absb{\SSS^{d-1}} & \text{if } v > 1 \,.
\end{cases}
\end{equation*}
Thus, $g$ is continuous and takes on its maximum value at $1$. Since
\begin{equation*}
\frac{\lambda^2}{\sqrt{2 \pi \eta}} \, \ee^{-\frac{\lambda^4}{2 \eta} (v - 1)^2}
\end{equation*}
is an approximate delta-function centred at $1$ it follows that
\begin{equation*}
\sup_{\lambda} \int_0^\infty \dd v \; g(v) \, \frac{\lambda^2}{\sqrt{2 \pi \eta}} \, \ee^{-\frac{\lambda^4}{2 \eta} (v - 1)^2} \;=\; \lim_{\lambda \to \infty} \int_0^\infty \dd v \; g(v) \, \frac{\lambda^2}{\sqrt{2 \pi \eta}} \, \ee^{-\frac{\lambda^4}{2 \eta} (v - 1)^2} \;=\; g(1)\,.
\end{equation*}
Thus,
\begin{equation*}
f(\eta) \;\leq\; \frac{1}{2} \, \frac{2 \pi}{ (d - 2) \abs{\SSS^{d-1}}} \, \absb{\SSS^{d-1}} \, \norm{\psi}^2 \;=\; \frac{\pi}{d - 2} \norm{\psi}^2\,.
\end{equation*}
This completes the proof of \eqref{kato smoothing estimate}.

The proof of \eqref{generalized Kato smoothing} follows the proof of \eqref{kato smoothing estimate} up to \eqref{definition of g(v)}. The claim then follows from
\begin{equation*}
\sup_{v \geq 0} \int_{\SSS^{d-1}} \dd e \; \frac{1}{\abs{e - p/ \sqrt{v}}^{d - 2 \gamma}} \;<\; \infty\,,
\end{equation*}
for $p \in \SSS^{d-1}$ and $2 \gamma > 1$.
\end{proof}

In order to avoid tedious discussions of operator domains in equations such as \eqref{Duhamel}, 
we introduce a cutoff to make the interaction potential bounded. For $\epsilon \geq 0$ set
\begin{equation*}
w^\epsilon(x) \;\deq\; w(x) I_{\{\abs{w(x)} \leq \epsilon^{-1}\}}\,,
\end{equation*}
so that $\norm{w^\epsilon}_\infty \leq \epsilon^{-1}$. Now \eqref{kato smoothing estimate} implies, for $d = 3$ and $\epsilon \geq 0$,
\begin{equation} \label{general Kato smoothing for cutoff potentials}
\int_{\R} \normb{w^\epsilon \, \me^{\mi t\Delta} \, \psi}^2 \; \md t
\;\leq\;
\int_{\R} \normb{w \, \me^{\mi t\Delta} \, \psi}^2 \; \md t
\;\leq\; \pi \kappa^2 \, \norm{\psi}^2\,.
\end{equation}
An immediate consequence is the following lemma.
\begin{lemma} \label{lemma: Kato smoothing for centre of mass}
Let $\Phi^{(n)} \in \mathcal{H}_\pm^{(n)}$. Then
\begin{equation} \label{kato smoothing for N-particles}
\int_\R \normb{W^\epsilon_{ij} \, \me^{-\mi t H_0} \, \Phi^{(n)}}^2 \; \md t \;\leq\; \frac{\pi \kappa^2}{2} \norm{\Phi^{(n)}}^2\,.
\end{equation}
\end{lemma}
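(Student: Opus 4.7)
The natural strategy is to reduce \eqref{kato smoothing for N-particles} to the one-particle Kato smoothing estimate \eqref{general Kato smoothing for cutoff potentials} by passing to relative coordinates for the pair $(i,j)$. Since $W^\epsilon_{ij}$ is the multiplication operator $w^\epsilon(x_i - x_j)$, it depends only on the relative variable $r \deq x_i - x_j$ and ignores the centre-of-mass variable $R \deq (x_i + x_j)/2$ as well as the remaining coordinates $(x_k)_{k \neq i,j}$.

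First I would perform the change of variables $(x_i, x_j) \mapsto (r, R)$ and compute
\begin{equation*}
-\Delta_i - \Delta_j \;=\; -2 \Delta_r - \tfrac{1}{2} \Delta_R\,,
\end{equation*}
which follows from the parallelogram identity $\abs{a+b}^2 + \abs{a-b}^2 = 2\abs{a}^2 + 2\abs{b}^2$ applied to $\nabla_i = \nabla_r + \tfrac{1}{2}\nabla_R$ and $\nabla_j = -\nabla_r + \tfrac{1}{2}\nabla_R$. Hence $H_0 = H_{\rr{rel}} + H_{\rr{rest}}$, where $H_{\rr{rel}} \deq -2\Delta_r$ acts only on $r$ and $H_{\rr{rest}} \deq -\tfrac{1}{2}\Delta_R - \sum_{k \neq i,j} \Delta_k$ acts only on the remaining variables. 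In particular the two summands commute and $H_{\rr{rest}}$ commutes with the multiplication operator $W^\epsilon_{ij}$.

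Consequently, using unitarity of $\me^{-\mi t H_{\rr{rest}}}$,
\begin{equation*}
\normb{W^\epsilon_{ij} \, \me^{-\mi t H_0} \, \Phi^{(n)}}
\;=\; \normb{W^\epsilon_{ij} \, \me^{-\mi t H_{\rr{rel}}} \, \Phi^{(n)}}
\;=\; \normb{w^\epsilon(r) \, \me^{2 \mi t \Delta_r} \, \Phi^{(n)}}\,.
\end{equation*}
Next, I apply Fubini to integrate in $t$ first, freezing all coordinates except $r$. For each fixed configuration $\xi = (R,(x_k)_{k\neq i,j})$, Lemma~\ref{lemma: Kato smoothing} with $d=3$ applied to the function $r \mapsto \Phi^{(n)}(r,\xi)$, combined with the substitution $s = 2t$ which produces a factor $\tfrac{1}{2}$, together with the monotonicity $\abs{w^\epsilon} \leq \abs{w}$ used in \eqref{general Kato smoothing for cutoff potentials}, yields
\begin{equation*}
\int_\R \dd t \; \normb{w^\epsilon(r) \, \me^{2 \mi t \Delta_r} \, \Phi^{(n)}(\cdot\,,\xi)}_r^2 \;\leq\; \frac{\pi \kappa^2}{2} \, \normb{\Phi^{(n)}(\cdot\,,\xi)}_r^2 \,.
\end{equation*}
Integrating over $\xi$ and using that the Jacobian of the change of variables $(x_i,x_j) \mapsto (r,R)$ is unity recovers the $\LL^2$-norm of $\Phi^{(n)}$ on the right-hand side, giving \eqref{kato smoothing for N-particles}.

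The only technical point worth noting is that Lemma~\ref{lemma: Kato smoothing} was stated for $\psi \in \LL^2(\R^3)$, whereas here we apply it with $\psi = \Phi^{(n)}(\,\cdot\,,\xi)$ for a.e.\ $\xi$; this is justified by Fubini since $\Phi^{(n)} \in \LL^2(\R^{3n})$. Operator domain issues are avoided entirely because $w^\epsilon$ is bounded, so the integrand is pointwise finite in $t$ and the commutation of $W^\epsilon_{ij}$ with $\me^{-\mi t H_{\rr{rest}}}$ is unambiguous. The (anti)symmetry of $\Phi^{(n)}$ plays no role in the argument.
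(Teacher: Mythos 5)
Your proof is correct and follows essentially the same route as the paper: pass to centre-of-mass and relative coordinates for the pair $(i,j)$, observe that the rest of $H_0$ commutes with $W^\epsilon_{ij}$ and acts unitarily, and then reduce to the one-particle Kato smoothing estimate via Fubini (the paper writes out the transformed function as $\tilde{\Phi}^{(n)}$ explicitly and cites \eqref{general Kato smoothing for cutoff potentials} directly, while you make the $s = 2t$ substitution and factor $\tfrac{1}{2}$ explicit, but these are only presentational differences).
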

\begin{proof}
By symmetry we may assume that $(i,j) = (1,2)$. Choose centre of mass coordinates $X \deq (x_1 + x_2)/2$ and $\xi = x_2 - x_1$, set $\tilde{\Phi}^{(n)}(X, \xi, x_3, \dots, x_n) \deq \Phi^{(n)}(x_1, \dots, x_n)$, and write
\begin{equation*}
\int_\R \normb{W^\epsilon_{12} \, \me^{-\mi t H_0} \, \Phi^{(n)}}^2 \; \md t \;=\; \int_\R \normb{w^\epsilon(\xi) \, \me^{2 \mi t \Delta_\xi} \, \tilde{\Phi}^{(n)}}^2 \; \md t\,,
\end{equation*}
since $H_0 = -\Delta_1 - \Delta_2 = -\Delta_X /2 - 2 \Delta_\xi$ and $[\Delta_X, w^\epsilon(\xi)] = 0$. Therefore, by \eqref{general Kato smoothing for cutoff potentials} and Fubini's theorem, we find
\begin{align*}
\int_\R \normb{W^\epsilon_{12} \, \me^{-\mi t H_0} \, \Phi^{(n)}}^2 \; \md t &\;=\; \int \md X \, \md x_3 \cdots \md x_n \int \md t  \, \md \xi \; \absb{w^\epsilon(\xi) \, \me^{2 \mi t \Delta_\xi} \, \tilde{\Phi}^{(n)}(X, \xi, x_3, \dots, x_n)}^2
\\
&\;\leq\; \frac{\pi \kappa^2}{2} \, \int \md X \, \md x_3 \cdots \md x_n \int \md \xi \; \absb{\tilde{\Phi}^{(n)}(X, \xi, x_3, \dots, x_n)}^2
\\
&\;=\; \frac{\pi \kappa^2}{2} \norm{\Phi^{(n)}}^2\,.
\qedhere
\end{align*}
\end{proof}
By Cauchy-Schwarz we then find that
\begin{equation} \label{kato smoothing estimate for l1 norm}
\int_0^t \normb{W^\epsilon_{ij,s} \, \Phi^{(n)}} \; \md s \;\leq\; t^{1/2} \, \pbb{\int_\R \normb{W^\epsilon_{ij} \, \me^{-\mi s H_0} \, \Phi^{(n)}}^2 \md s}^{1/2} \;\leq\; \pbb{\frac{\pi \kappa^2 t}{2}}^{1/2} \norm{\Phi^{(n)}}\,.
\end{equation}
By iteration, this implies that, for all elementary terms $\alpha$,
\begin{equation} \label{iterated kato smoothing estimate}
\int_0^t \md t_1 \dots \int_0^t \md t_{k} \;
\normb{G^{(k,l)(\alpha), \epsilon}_{t, \ul{t}}(a^{(p)}) \Phi^{(p+k-l)}}
\leq\; 
\pbb{\frac{\pi \kappa^2 t}{2}}^{k/2}\, \norm{a^{(p)}}\, \norm{\Phi^{(p+k-l)}}\,,
\end{equation}
where the superscript $\epsilon$ reminds us that $G^{(k,l)(\alpha), \epsilon}_{t, \ul{t}}(a^{(p)})$ is computed with the regularized potential $w^\epsilon$.
Thus one finds
\begin{equation*}
\normb{G^{(k,l), \epsilon}_t(a^{(p)})} \;\leq\; c(p,k,l) \, \pbb{\frac{\pi \kappa^2 t}{2}}^{k/2}\,\norm{a^{(p)}}\,,
\end{equation*}
for all $\epsilon \geq 0$.

Unfortunately, the above procedure does not recover the factor $1/k!$ arising from the time-integration over the $k$-simplex $\Delta^{k}(t)$, which is essential for our convergence estimates. First iterating \eqref{kato smoothing for N-particles} and then using Cauchy-Schwarz yields a factor $1/\sqrt{k!}$, which is still not good enough.

A solution to this problem must circumvent the highly wasteful procedure of replacing the integral over $\Delta^{k}(t)$ with an integral over $[0,t]^k$. The key observation is that, in the sum over all labelled diagrams, each diagram appears of the order of $k!$ times with different labellings.

\subsection{Graph counting}
In order to make the above idea precise, we make use of graphs (related to the above diagrams) to index terms in our expansion of the multiple commutator
\begin{equation} \label{multiple commutator for graphs}
\frac{(\mi N)^k}{2^k} \qB{\widehat{\A}_N(W_{t_k}), \dots \qB{\widehat{\A}_N(W_{t_1}), \widehat{\A}_N(a^{(p)}_t)}\dots}\,.
\end{equation}
The idea is to assign to each second quantized operator a vertex $v = 0, \dots, k$, and to represent each creation and annihilation with an incident edge. A pairing of an annihilation operator with a creation operator is represented by joining the corresponding edges. The vertex $0$ has $2p$ edges and the vertices $1, \dots, k$ have $4$ edges. We call the vertex $0$ the \emph{root}.

The edges incident to each vertex $v$ are labelled using a pair $\lambda = (d, i)$, where $d = a,c$ is the \emph{direction} ($a$ stands for ``annihilation'' and $c$ for ``creation'') and $i$ labels edges of the same direction; $i = 1, \dots ,p$ if $v = 0$ and $i = 1,2$ if $v = 1, \dots, k$. Thus, a labelled edge is of the form $\{(v_1, \lambda_1), (v_2, \lambda_2)\}$. Graphs $\mathcal{G}$ with such labelled edges are graphs over the vertex set $V(\mathcal{G}) = \{(v, \lambda)\}$. We denote the set of edges of a graph $\mathcal{G}$ (a set of unordered pairs of vertices in $V(\mathcal{G})$) by $E(\mathcal{G})$. The degree of each $(v, \lambda)$ is either 0 or 1; we call $(v, \lambda)$ an \emph{empty edge} of $v$ if its degree is 0. We often speak of connecting two empty edges, as well as removing a nonempty edge; the definitions are self-explanatory.

We may drop the edge labelling of $\mathcal{G}$ to obtain a (multi)graph $\widetilde{\mathcal{G}}$ over the vertex set $\{0, \dots, k\}$: Each edge $\{(v_1, \lambda_1), (v_2,\lambda_2)\} \in E(\mathcal{G})$ gives rise to the edge $\{v_1, v_2\} \in E(\widetilde{\mathcal{G}})$.
We understand a path in $\mathcal{G}$ to be a sequence of edges in $E(\mathcal{G})$ such that two consecutive edges are adjacent in the graph $\widetilde{\mathcal{G}}$. This leads to the notions of connectedness of $\mathcal{G}$ and loops in $\mathcal{G}$.

The \emph{admissible graphs} -- i.e.\ graphs indexing a choice of pairings in the multiple commutator \eqref{multiple commutator for graphs} -- are generated by the following ``growth process''. We start with the empty graph $\mathcal{G}_0$, i.e.\ $E(\mathcal{G}_0) = \emptyset$. In a first step, we choose one or two empty edges of $1$ of the same direction and connect each of them to an empty edge of $0$ of opposite direction. Next, we choose one or two empty edges of $2$ of the same direction and connect each of them to an empty edge of $0$ or $1$ of opposite direction. We continue in this manner for all vertices $3, \dots, k$. We summarize some key properties of admissible graphs $\mathcal{G}$.
\begin{itemize}
\item[(a)]
$\mathcal{G}$ is connected.
\item[(b)]
The degree of each $(v, \lambda)$ is either 0 or 1
\item[(c)]
The labelled edge $\{(v_1, \lambda_1), (v_2, \lambda_2)\} \in E(\mathcal{G})$ only if $\lambda_1$ and $\lambda_2$ have opposite directions.
\end{itemize}
Property (c) implies that each graph $\mathcal{G}$ has a canonical directed representative, where each edge is ordered from the $a$-label to the $c$-label.
\begin{figure}[ht!]
\psfrag{0}[][]{$0$}
\psfrag{1}[][]{$1$}
\psfrag{2}[][]{$2$}
\psfrag{3}[][]{$3$}
\psfrag{4}[][]{$4$}
\psfrag{5}[][]{$5$}
\psfrag{6}[][]{$6$}
\psfrag{7}[][]{$7$}
\psfrag{a1}[][]{$\scriptstyle{a,1}$}
\psfrag{a2}[][]{$\scriptstyle{a,2}$}
\psfrag{a3}[][]{$\scriptstyle{a,3}$}
\psfrag{a4}[][]{$\scriptstyle{a,4}$}
\psfrag{c1}[][]{$\scriptstyle{c,1}$}
\psfrag{c2}[][]{$\scriptstyle{c,2}$}
\psfrag{c3}[][]{$\scriptstyle{c,3}$}
\psfrag{c4}[][]{$\scriptstyle{c,4}$}
\vspace{0.5cm}
\begin{center}
\includegraphics[width = 12cm]{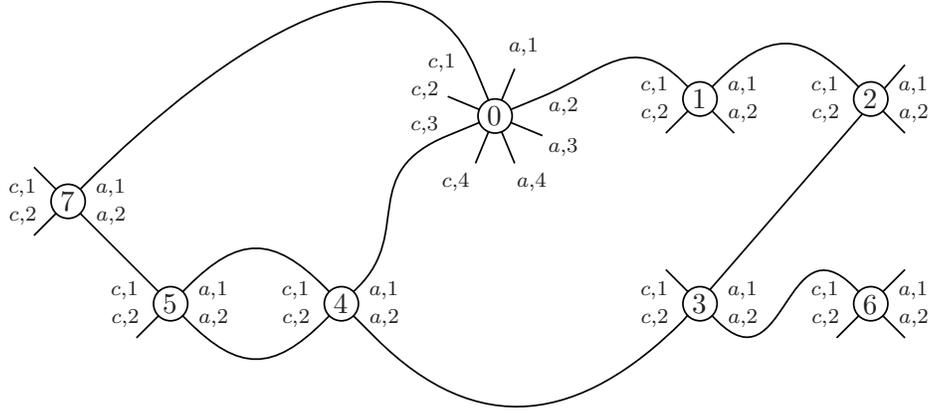}
\end{center}
\caption{\textit{An admissible graph of type $(p = 4, k = 7, l = 3)$.} \label{figure: full graph}}
\end{figure}
See Figure \ref{figure: full graph} for an example of such a graph.

We call a graph $\mathcal{G}$ of type $(p,k,l)$ whenever it is admissible and it contains $l$ loops. We denote by $\mathscr{G}(p,k,l)$ the set of graphs of type $(p,k,l)$.

By definition of admissible graphs, each contraction in \eqref{multiple commutator for graphs} corresponds to a unique admissible graph. A contraction consists of at least $k$ and at most $2k$ pairings. A contraction giving rise to a graph of type $(p,k,l)$ has $k+l$ pairings. The summand in \eqref{multiple commutator for graphs} corresponding to any given $l$-loop contraction is given by an \emph{elementary term} of the form
\begin{equation} \label{second quantized elementary term}
\frac{(\mi N)^k}{2^k \, N^{k+l}} \widehat{\A}_N \pb{b^{(p+k-l)}}\,,
\end{equation}
where the $(p+k-l)$-particle operator $b^{(p+k-l)}$ is of the form
\begin{equation} \label{elementary term for contractions}
b^{(p+k-l)} \;=\; P_\pm \, W_{i_1 j_1, t_{v_1}} \cdots W_{i_r j_r, t_{v_r}} \,\pb{a_t^{(p)} \otimes \umat^{(k-l)}} \, W_{i_{r+1} j_{r+1}, t_{v_{r+1}}} \cdots W_{i_k j_k, t_{v_k}} P_\pm \,,
\end{equation}
for some $r = 0, \dots, k$. Indeed, the (anti)commutation relations \eqref{anticommutation relations} imply that each pairing produces a factor of $1/N$. Furthermore, the creation and annihilation operators of each summand corresponding to any given contraction are (by definition) Wick ordered, and one readily sees that the associated integral kernel corresponds to an operator of the form \eqref{elementary term for contractions}. Thus we recover the splitting \eqref{splitting of multiple commutator}, whereby $G^{(k,l)}_{t, t_1, \dots, t_k}(a^{(p)})$ is a sum, indexed by all $l$-loop graphs, of elementary terms of the form \eqref{elementary term for contractions}.

As remarked above, we need to exploit the fact that many graphs have the same topological structure, i.e.\ can be identified after some permutation of the labels $\{1, \dots, k\}$ of the vertices corresponding to interaction operators. We therefore define an equivalence relation on the set of graphs: $\mathcal{G} \sim \mathcal{G}'$ if and only if there exists a permutation $\sigma \in S_k$ such that $\mathcal{G}' = R_\sigma(\mathcal{G})$. Here $R_\sigma(\mathcal{G})$ is the graph defined by
\begin{equation*}
\{(v_1, \lambda_1) , (v_2, \lambda_2)\} \in E(R_\sigma(\mathcal{G})) \;\Longleftrightarrow\; \{(\sigma(v_1), \lambda_1), (\sigma(v_2), \lambda_2)\} \in E(\mathcal{G})\,,
\end{equation*}
where $\sigma(0) \equiv 0$. 
We call equivalence classes $[\mathcal{G}]$ \emph{graph structures}, and denote the set of graph structures of admissible graphs of type $(p,k,l)$ by $\mathscr{Q}(p,k,l)$. 

Note that, in general, $R_\sigma(\mathcal{G})$ need not be admissible if $\mathcal{G}$ is admissible. It is convenient to increase $\mathscr{G}(p,k,l)$ to include all $R_\sigma(\mathcal{G})$ where $\sigma \in S_k$ and $\mathcal{G}$ is admissible. In order to keep track of the admissible graphs in this larger set, we introduce the symbol $i_\mathcal{G}$ which is by definition 1 if $\mathcal{G} \in \mathscr{G}(p,k,l)$ is admissible and 0 otherwise. Because $R_\sigma(\mathcal{G}) \neq \mathcal{G}$ if $\sigma \neq \id$,
\begin{equation} \label{size of set of graph structures}
\absb{\mathscr{G}(p,k,l)} \;=\; k! \, \absb{\mathscr{Q}(p,k,l)}\,.
\end{equation}

Our goal is to find an upper bound on the number of graph structures of type $(p,k,l)$, which is sharp enough to show convergence of the Schwinger-Dyson series \eqref{Schwinger-Dyson series}. Let us start with tree graphs: $l = 0$.
In this case the number of graph structures is equal to $2^k$ times the number of ordered trees\footnote{An ordered tree is a rooted tree in which the children of each vertex are ordered.} with $k+1$ vertices, whose root has at most $2p$ children and whose other vertices have at most $3$ children. The factor $2^k$ arises from the fact that each vertex $v = 1, \dots, k$ can use either of the two empty edges of compatible direction to connect to its parent. We thus need some basic facts about ordered trees, which are covered in the following (more or less standard) combinatorial digression.

For $x,t \in \R$ and $n \in \N$ define
\begin{equation}
A_n(x,t) \;\deq\; \frac{x}{x + nt} \binom{x + nt}{n}
\end{equation}
as well as $A_0(x,t) \deq 1$.
After some juggling with binomial coefficients one finds
\begin{equation}
\sum_{k = 0}^n A_k(x,t) A_{n - k}(y,t) \;=\; A_n(x+y,t)\,;
\end{equation}
see \cite{Knuth1998} for details. Therefore
\begin{equation}
\sum_{n_1 + \cdots + n_r = n} A_{n_1}(x_1, t) \cdots A_{n_r}(x_r, t) \;=\; A_n(x_1 + \cdots + x_r, t)\,.
\end{equation}
Set 
\begin{equation} \label{definition of Catalan numbers}
C_n^m \;\deq\; A_n(1,m) \;=\; \frac{1}{1 + nm} \binom{1 + nm}{n} \;=\; \frac{1}{n(m - 1) + 1} \binom{nm}{n}\,,
\end{equation}
the $n$'th $m$-ary \emph{Catalan number}.
Thus we have
\begin{equation} \label{sum over Catalan numbers}
\sum_{n_1 + \cdots + n_r = n} C_{n_1}^m \cdots C_{n_r}^m \;=\; \frac{r}{r + nm} \binom{r + nm}{n}\,.
\end{equation}
In particular,
\begin{equation} \label{recursive relation for Catalan numbers}
\sum_{n_1 + \cdots + n_m = n-1} C_{n_1}^m \cdots C_{n_m}^m \;=\; C_n^m\,.
\end{equation}
Define an $m$-tree to be an ordered tree such that each vertex has at most $m$ children.
The number of $m$-trees with $n$ vertices is equal to $C_n^m$. This follows immediately from $C^m_0 = 1$ and from \eqref{recursive relation for Catalan numbers}, which expresses that all trees of order $n$ are obtained by adding $m$ (possibly empty) subtrees of combined order $n-1$ to the root.

We may now compute $\abs{\mathscr{Q}(p,k,0)}$. Since the root of the tree has at most $2p$ children, we may express $\abs{\mathscr{Q}(p,k,0)}$ as the number of ordered forests comprising $2p$ (possibly empty) 3-trees whose combined order is equal to $k$. Therefore, by \eqref{sum over Catalan numbers},
\begin{equation} \label{bound on the number of tree graphs}
\abs{\mathscr{Q}(p,k,0)} \;=\; 2^k \sum_{n_1 + \cdots + n_{2p} = k} C^3_{n_1} \cdots C^3_{n_{2p}} \;=\; 2^k \frac{2p}{2p + 3k} \binom{2p + 3k}{k}\,.
\end{equation}

Next, we extend this result to all values of $l$ in the form of an upper bound on $\abs{\mathscr{Q}(p,k,l)}$.
\begin{lemma} \label{lemma: bound on l-loop graphs}
Let $p,k, l \in \N$. Then
\begin{equation}
\abs{\mathscr{Q}(p,k,l)} \;\leq\; 2^k \binom{k}{l} \, \binom{2p + 3k}{k} \, (p+k-l)^l\,.
\end{equation}
\end{lemma}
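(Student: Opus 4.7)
The plan is to reduce the $l$-loop count to the tree count~\eqref{bound on the number of tree graphs} by decomposing each admissible graph into a tree skeleton plus $l$ extra ``loop edges''. More precisely, I would construct an injection
\begin{equation*}
\mathscr{Q}(p,k,l) \;\hookrightarrow\; \mathscr{Q}(p,k,0) \;\times\; S \;\times\; T\,,
\end{equation*}
where $S$ is a set of $l$-subsets of $\{1,\ldots,k\}$ of cardinality $\binom{k}{l}$, and $T$ is a set of loop-edge placements of cardinality at most $(p+k-l)^l$. The claim then follows from the bound $|\mathscr{Q}(p,k,0)| \leq 2^k \binom{2p+3k}{k}$ derived just above.

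To build the injection, I would exploit that an admissible graph is produced by the growth process in which each interaction vertex $v \in \{1,\ldots,k\}$ adds either one or two backward edges, with exactly $l$ of them (``loop vertices'') adding two. At each loop vertex $v$, a canonical rule — for instance, the one whose half-edge at $v$ has larger label among the two same-direction backward edges — would designate one backward edge as the \emph{loop edge} and the other as the \emph{tree edge}. Deleting the $l$ loop edges yields a connected acyclic graph on $k+1$ vertices with $k$ edges; after a canonical relabeling of the half-edges freed at former loop vertices, it represents an admissible tree graph of type $(p,k,0)$. To invert the decomposition one needs (a) the $l$-element subset of loop-vertex positions in $\{1,\ldots,k\}$, and (b) for each loop vertex, the target half-edge on the skeleton of its loop edge. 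For (b), the tree skeleton has $p+k$ empty half-edges of each direction (since $2p+4k$ half-edges exist in total and $2k$ are consumed by the tree), whereas the final graph of type $(p,k,l)$ has $p+k-l$ empty half-edges of each direction; combining the direction-matching constraint at each loop vertex with the growth-process attachment constraint — which mirrors the factor $\binom{p+k-l}{2}$ in the loop term of the recursion~\eqref{recursive definition of graphs} — bounds the joint number of loop placements by $(p+k-l)^l$.

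The main obstacle will be verifying that the ``loop edge'' designation at each loop vertex is invariant under the $S_k$-action permuting interaction-vertex labels, so that the decomposition descends from $\mathscr{G}$ (labeled graphs) to $\mathscr{Q}$ (equivalence classes) rather than merely to labeled objects. A cleaner alternative, which I would adopt if this invariance becomes delicate, is induction on $l$: the base case $l=0$ is~\eqref{bound on the number of tree graphs}, and the inductive step bounds the fiber of the natural ``remove one canonical loop edge'' map $\mathscr{Q}(p,k,l) \to \mathscr{Q}(p,k,l-1)$ by $(p+k-l)$, read directly from the loop term in~\eqref{recursive definition of graphs}. The combinatorial factor $\binom{k}{l}$ then accumulates across the $l$ inductive steps via the Pascal-type identity $\binom{k}{l-1}(k-l+1)/l = \binom{k}{l}$, yielding the claimed bound.
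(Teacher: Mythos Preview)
Your overall strategy---strip off $l$ loop edges to obtain a tree skeleton, then bound the number of ways to reattach---is exactly the paper's. The gap is in where you carry out the argument. You try to build an injection at the level of $\mathscr{Q}$ and correctly flag the $S_k$-invariance problem: a ``canonical loop edge'' chosen via half-edge labels at each loop vertex does not obviously descend to equivalence classes, and the target set $\mathscr{Q}(p,k,0)\times\binom{[k]}{l}\times T$ is already ill-posed (an $l$-subset of $\{1,\dots,k\}$ has no meaning once vertex labels are modded out). The paper sidesteps this entirely by working on $\mathscr{G}(p,k,l)$ and then dividing by $k!$ via \eqref{size of set of graph structures}. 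Since elements of $\mathscr{G}(p,k,l)$ need not be admissible, the growth process no longer singles out loop vertices; the paper instead invokes property~(d), which survives under $S_k$, and runs an iterative removal process guided by a lexicographic order on the partially-built tree. This canonical ordering is what makes the map well-defined on all of $\mathscr{G}(p,k,l)$ and is also what pins down the target count as $p+k-l$ per loop edge (your appeal to ``the factor $\binom{p+k-l}{2}$ in~\eqref{recursive definition of graphs}'' is the right intuition but does not by itself give $(p+k-l)^l$ rather than, say, $(p+k)^l$).

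Your inductive fallback does not work as written. A fibre bound of $(p+k-l)$ for $\mathscr{Q}(p,k,l)\to\mathscr{Q}(p,k,l-1)$ would give $\prod_{j=1}^{l}(p+k-j)$ after $l$ steps, and the Pascal identity you quote does not convert this into $\binom{k}{l}(p+k-l)^l$. Moreover, the recursion~\eqref{recursive definition of graphs} is in $k$, not in $l$ at fixed $k$, so the ``loop term'' there does not directly bound the fibre of a map that keeps $k$ fixed. The fix is simply to move the whole argument to $\mathscr{G}$: there the $\binom{k}{l}$ is a genuine choice of $l$ vertex labels, the tree is a labelled tree in $\mathscr{G}(p,k,0)$, and the $k!$ cancels at the end.
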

\begin{proof}
The idea is to remove edges from $\mathcal{G} \in \mathscr{G}(p,k,l)$ to obtain a tree graph, and then use the special case \eqref{bound on the number of tree graphs}. 

In addition to the properties (a) -- (c) above, we need the following property of $\mathscr{G}(p,k,l)$:
\begin{itemize}
\item[(d)]
If $\mathcal{G} \in \mathscr{G}(p,k,l)$ then there exists a subset $\mathcal{V} \subset \{1, \dots, k\}$ of size $l$ and a choice of direction $\delta : \mathcal{V} \to \{a,c\}$ such that, for each $v \in \mathcal{V}$, both edges of $v$ with direction $\delta(v)$ are nonempty. Denote by $\mathcal{E}(v) \subset E(\mathcal{G})$ the set consisting of the two above edges. We additionally require that removing one of the two edges of $\mathcal{E}(v)$ from $\mathcal{G}$, for each $v \in \mathcal{V}$, yields a tree graph, with the property that, for each $v \in \mathcal{V}$, the remaining edge of $\mathcal{E}(v)$ is contained in the unique path connecting $v$ to the root.
\end{itemize}
This is an immediate consequence of the growth process for admissible graphs. The set $\mathcal{V}$ corresponds to the set of vertices whose addition produces two edges. Note that property (d) is independent of the representative and consequently holds also for non-admissible $\mathcal{G} \in \mathscr{G}(p,k,l)$. 

Before coming to our main argument, we note that a tree graph $\mathcal{T} \in \mathscr{G}(p,k,0)$ gives rise to a natural lexicographical order on the vertex set $\{1, \dots, k\}$. Let $v \in \{1, \dots, k\}$. There is a unique path that connects $v$ to the root. Denote by $0 = v_1, v_2, \dots, v_q = v$ the sequence of vertices along this path. For each $j = 1, \dots, q - 1$, let $\lambda_j$ be the label of the edge $\{v_j, v_{j+1}\}$ at $v_j$. We assign to $v$ the string $S(v) \deq (\lambda_1, \dots, \lambda_{q - 1})$. Choose some (fixed) ordering of the sets of labels $\{\lambda\}$, for each $v$. Then the set of vertices $\{1, \dots, k\}$ is ordered according to the lexicographical order of the string $S(v)$.

We now start removing loops from a given graph $\mathcal{G} \in \mathscr{G}(p,k,l)$. Define $\mathcal{R}_1$ as the graph obtained from $\mathcal{G}$ by removing all edges in $\bigcup_{v \in \mathcal{V}} \mathcal{E}(v)$. By property (d) above, $\mathcal{R}_1$ is a forest comprising $l$ trees. Define $\mathcal{T}_1$ as the connected component of $\mathcal{R}_1$ containing the root. Now we claim that there is at least one $v \in \mathcal{V}$ such that both edges of $\mathcal{E}(v)$ are incident to a vertex of $\mathcal{T}_1$. Indeed, were this not the case, we could choose for each $v \in \mathcal{V}$ an edge in $\mathcal{E}(v)$ that is not incident to any vertex of $\mathcal{T}_1$. Call $\mathcal{R}'_1$ the graph obtained by adding all such edges to $\mathcal{R}_1$. Now, since no vertex in $\mathcal{V}$ is in the connected component of $\mathcal{R}_1$, it follows that no vertex in $\mathcal{V}$ is in the connected component $\mathcal{R}'_1$. This is a contradiction to property (d) which requires that $\mathcal{R}'_1$ should be a (connected) tree.

Let us therefore consider the set $\tilde{\mathcal{V}}$ of all $v \in \mathcal{V}$ such that both edges of $\mathcal{E}(v)$ are incident to a vertex of $\mathcal{T}_1$. We have shown that $\tilde{\mathcal{V}} \neq \emptyset$.
For each choice of $v$ and $e$, where $v \in \tilde{\mathcal{V}}$ and $e \in \mathcal{E}(v)$, we get a forest of $l-1$ trees by adding $e$ to the edge set of $\mathcal{R}_1$. Then $v$ is in the same tree as the root, so that each such choice of $v$ and $e$ yields a string $S(v)$ as described above. We choose $v_1$ and $e(v_1)$ as the unique couple that yields the smallest string (note that different choices have different strings). Finally, set $\mathcal{G}_1$ equal to $\mathcal{G}$ from which $e(v_1)$ has been removed, and $\mathcal{V}_1 \deq \mathcal{V} \setminus \{v\}$.

We have thus obtained an $(l-1)$-loop graph $\mathcal{G}_1$ and a set $\mathcal{V}_1$ of size $l-1$, which together satisfy the property (d). We may therefore repeat the above procedure. In this manner we obtain the sequences $v_1, \dots, v_l$ and $\mathcal{G}_1, \dots, \mathcal{G}_l$. Note that $\mathcal{G}_l$ is obtained by removing the edges $e(v_1), \dots, e(v_l)$ from $\mathcal{G}$, and is consequently a tree graph. Also, by construction, the sequence $v_1, \dots, v_l$ is increasing in the lexicographical order of $\mathcal{G}_l$.

Next, consider the tree graph $\mathcal{G}_l$. Each edge $e(v_j)$ connects the single empty edge of $v_j$ with direction $\delta(v_j)$ with an empty edge of opposite direction of a vertex $v$, where $v$ is smaller than $v_j$ in the lexicographical order of $\mathcal{G}_l$. It is easy to see that, for each $j$, there are at most $(p+k-l)$ such connections.

We have thus shown that we can obtain any $\mathcal{G} \in \mathscr{G}(p,k,l)$ by choosing some tree $\mathcal{G}_l \in \mathscr{G}(p,k,0)$, choosing $l$ elements $v_j$ out of $\{1, \dots, k\}$, ordering them lexicographically (according to the order of $\mathcal{G}_l$) and choosing an edge out of at most $(p+k-l)$ possibilities for $v_1, \dots, v_l$. Thus,
\begin{equation*}
\absb{\mathscr{G}(p,k,l)} \;\leq\; \binom{k}{l} (p+k-l)^l \, \absb{\mathscr{G}(p,k,0)}\,.
\end{equation*}
The claim then follows from \eqref{size of set of graph structures} and \eqref{bound on the number of tree graphs}.
\end{proof}

\subsection{Proof of convergence}
We are now armed with everything we need in order to estimate $\int_{\Delta^k(t)} \md \ul{t} \; G^{(k,l)}_{t,\ul{t}}(a^{(p)})$. Recall that
\begin{equation} \label{expansion using graphs}
G^{(k, l)}_{t,t_1,\dots,t_k}(a^{(p)}) \;=\; \frac{\mi^k}{2^k} \sum_{\mathcal{G} \in \mathscr{G}(p,k,l)} i_{\mathcal{G}} \, G^{(k, l)(\mathcal{G})}_{t,t_1,\dots, t_k}(a^{(p)})\,,
\end{equation}
where $G^{(k, l)(\mathcal{G})}_{t,t_1,\dots, t_k}(a^{(p)})$ is an elementary term of the form \eqref{elementary term for contractions} indexed by the graph $\mathcal{G}$. We rewrite this using graph structures. Pick some choice $\mathcal{P} : \mathscr{Q}(p,k,l) \to \mathscr{G}(p,k,l)$ of representatives. Then we get
\begin{align*}
G^{(k, l)}_{t,t_1,\dots,t_k}(a^{(p)}) &\;=\; \frac{\mi^k}{2^k} \sum_{\mathcal{Q} \in \mathscr{Q}(p,k,l)} \sum_{\mathcal{G} \in \mathcal{Q}} i_{\mathcal{G}} \, G^{(k, l)(\mathcal{G})}_{t,t_1,\dots, t_k}(a^{(p)})
\\
&\;=\; \frac{\mi^k}{2^k} \sum_{\mathcal{Q} \in \mathscr{Q}(p,k,l)} \sum_{\sigma \in S_k} i_{R_\sigma(\mathcal{P}(\mathcal{Q}))} \, G^{(k, l)(R_\sigma(\mathcal{P}(\mathcal{Q})))}_{t,t_1,\dots, t_k}(a^{(p)})
\,.
\end{align*}
Now, by definition of $R_\sigma$, we see that
\begin{equation*}
G^{(k, l)(R_\sigma(\mathcal{G}))}_{t,t_1,\dots, t_k}(a^{(p)}) \;=\; G^{(k, l)(\mathcal{G})}_{t,t_{\sigma(1)},\dots, t_{\sigma(k)}}(a^{(p)})\,.
\end{equation*}
Thus,
\begin{align*}
\int_{\Delta^k(t)} \md \ul{t} \; G^{(k, l)}_{t,t_1,\dots,t_k}(a^{(p)}) &\;=\; \frac{\mi^k}{2^k} \sum_{\mathcal{Q} \in \mathscr{Q}(p,k,l)} \sum_{\sigma \in S_k} i_{R_\sigma(\mathcal{P}(\mathcal{Q}))} \int_{\Delta^k(t)} \md \ul{t} \; G^{(k, l)(\mathcal{P}(\mathcal{Q}))}_{t,t_{\sigma(1)},\dots, t_{\sigma(k)}}(a^{(p)})
\\
&\;=\; \frac{\mi^k}{2^k} \sum_{\mathcal{Q} \in \mathscr{Q}(p,k,l)} \int_{\Delta^k_{\mathcal{Q}}(t)} \md \ul{t} \; G^{(k, l)(\mathcal{P}(\mathcal{Q}))}_{t,t_1,\dots, t_k}(a^{(p)})\,,
\end{align*}
where
\begin{equation*}
\Delta^k_{\mathcal{Q}}(t) \;\deq\; \{(t_1, \dots, t_k) \,:\, \exists \sigma \in S_k \,:\, i_{R_\sigma(\mathcal{P}(\mathcal{Q}))} = 1 ,\, (t_{\sigma(1)}, \dots, t_{\sigma(k)}) \in \Delta^k(t)\} \;\subset\; [0,t]^k
\end{equation*}
is a union of disjoint simplices.

Therefore, \eqref{kato smoothing estimate for l1 norm} and \eqref{elementary term for contractions} imply, for any $\Phi^{(p+k-l)} \in \mathcal{H}^{(p+k-l)}_\pm$, that
\begin{align*}
\normbb{\int_{\Delta^k(t)} \md \ul{t} \; G^{(k, l)}_{t,\ul{t}}(a^{(p)}) \, \Phi^{(p+k-l)}}
&\;\leq\;
\frac{1}{2^k} \sum_{\mathcal{Q} \in \mathscr{Q}(p,k,l)} \int_{\Delta^k_{\mathcal{Q}}(t)} \md \ul{t} \; \normb{G^{(k, l)(\mathcal{P}(\mathcal{Q}))}_{t,t_1,\dots, t_k}(a^{(p)}) \, \Phi^{(p+k-l)}}
\\
&\;\leq\;
\frac{1}{2^k} \sum_{\mathcal{Q} \in \mathscr{Q}(p,k,l)} \int_{[0,t]^k} \md \ul{t} \; \normb{G^{(k, l)(\mathcal{P}(\mathcal{Q}))}_{t,t_1,\dots, t_k}(a^{(p)}) \, \Phi^{(p+k-l)}}
\\
&\;\leq\;
\frac{1}{2^k} \sum_{\mathcal{Q} \in \mathscr{Q}(p,k,l)} \pbb{\frac{\pi \kappa^2 t}{2}}^{k/2} \norm{a^{(p)}} \norm{\Phi^{(p+k-l)}}
\\
&\;\leq\;
\binom{2p + 3k}{k} \, \binom{k}{l} \, (p+k-l)^l \pbb{\frac{\pi \kappa^2 t}{2}}^{k/2} \norm{a^{(p)}} \norm{\Phi^{(p+k-l)}}\,, 
\end{align*}
where the last inequality follows from Lemma \ref{lemma: bound on l-loop graphs}.
Of course, the above treatment remains valid for regularized potentials. We summarize:
\begin{equation} \label{fundamental estimate}
\normb{G^{(k,l), \epsilon}_t(a^{(p)})} \;\leq\;
\binom{2p+3k}{k} \binom{k}{l} (p + k - l)^l \pbb{\frac{\pi \kappa^2 t}{2}}^{k/2}\, \norm{a^{(p)}}\,,
\end{equation}
for all $\epsilon \geq 0$.

Using \eqref{fundamental estimate} we may now proceed exactly as in the case of a bounded interaction potential.
Let
\begin{equation}
\rho(\kappa, \nu) \;\deq\; \frac{1}{128 \pi \kappa^2 \nu^2}\,.
\end{equation}
The removal of the cutoff and summary of the results are contained in
\begin{lemma} \label{lemma: Schwinger-Dyson expansion}
Let $t < \rho(\kappa, \nu)$. Then we have on $\mathcal{H}^{(\nu N)}_\pm$
\begin{equation} \label{final form of Schwinger-Dyson expansion}
\me^{\mi t H_N} \, \widehat{\A}_N(a^{(p)}) \, \me^{- \mi t H_N} \;=\; \sum_{k = 0}^\infty \sum_{l = 0}^k \frac{1}{N^l} \, \widehat{\A}_N\pb{G^{(k,l)}_t(a^{(p)})}\,,
\end{equation}
in operator norm, uniformly in $N$. Furthermore, for $L \in \N$, we have the $1/N$-expansion
\begin{equation} \label{1/N expansion}
\me^{\mi t H_N} \, \widehat{\A}_N(a^{(p)}) \, \me^{- \mi t H_N} \;=\; \sum_{l = 0}^{L-1} \frac{1}{N^l} \sum_{k = l}^\infty \, \widehat{\A}_N\pb{G^{(k,l)}_t(a^{(p)})} + O \pbb{\frac{1}{N^L}}\,,
\end{equation}
where the sum converges on $\mathcal{H}^{(\nu N)}_\pm$ uniformly in $N$.
\end{lemma}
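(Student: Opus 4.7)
The plan is to proceed exactly as in the proof of Lemma \ref{convergence for bounded potential}, but using the uniform-in-$\epsilon$ estimate \eqref{fundamental estimate} in place of the elementary bound of Lemma \ref{lemma: bound for graph for bounded interaction}. The only genuinely new ingredient is a limiting argument that removes the cutoff $\epsilon$, which requires showing both that the unitary group $\me^{\mi tH_N^\epsilon}$ converges strongly to $\me^{\mi tH_N}$ and that each graph contribution $G^{(k,l),\epsilon}_t(a^{(p)})$ converges suitably to $G^{(k,l)}_t(a^{(p)})$.

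The first step is to establish convergence of the regularized series uniformly in $\epsilon\geq 0$. Combining \eqref{norm of second quantized observable} with \eqref{fundamental estimate}, and using that $\widehat{\A}_N\pb{G^{(k,l),\epsilon}_t(a^{(p)})}$ vanishes on $\mathcal{H}^{(\nu N)}_\pm$ unless $p+k-l\leq\nu N$, gives
\begin{equation*}
\frac{1}{N^l}\normB{\widehat{\A}_N\pb{G^{(k,l),\epsilon}_t(a^{(p)})}\bigr|_{\mathcal{H}^{(\nu N)}_\pm}} \;\leq\; \nu^{p+k-l}\,\frac{(p+k-l)^l}{N^l}\,\binom{2p+3k}{k}\binom{k}{l}\pbb{\frac{\pi\kappa^2 t}{2}}^{k/2}\norm{a^{(p)}}.
\end{equation*}
The bound $(p+k-l)/N\leq\nu$, together with $\sum_{l=0}^k\binom{k}{l}=2^k$ and $\binom{2p+3k}{k}\leq 2^{2p+3k}$, dominates the double sum by $(4\nu)^{2p}\norm{a^{(p)}}\sum_{k=0}^\infty\pB{16\nu\sqrt{\pi\kappa^2 t/2}}^k$, a geometric series that converges precisely when $t<\rho(\kappa,\nu)=(128\pi\kappa^2\nu^2)^{-1}$, uniformly in $\epsilon\geq 0$ and in $N$.

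The second step is to remove the cutoff. Lemma \ref{convergence for bounded potential} applied with the bounded potential $w^\epsilon$ yields \eqref{final form of Schwinger-Dyson expansion} with $H_N$ and $G^{(k,l)}_t$ replaced by $H_N^\epsilon$ and $G^{(k,l),\epsilon}_t$. By Kato's theorem the Coulomb $n$-body Hamiltonian $H_N|_{\mathcal{H}^{(n)}_\pm}$ is self-adjoint on the domain of $H_0|_{\mathcal{H}^{(n)}_\pm}$, and $H_N^\epsilon\to H_N$ in the strong resolvent sense as $\epsilon\downarrow 0$; consequently $\me^{\pm\mi tH_N^\epsilon}\to\me^{\pm\mi tH_N}$ strongly, and the left-hand side converges. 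For the right-hand side, each elementary term is a finite ordered product of factors $W^\epsilon_{ij,s}$ interleaved with free evolutions, and on Schwartz-class vectors these factors converge strongly to their unregularized versions; the iterated time integrals then converge by dominated convergence with \eqref{iterated kato smoothing estimate} as $\LLL^1$-dominant. Combining this with the uniform bound from the first step and dominated convergence for the sum over $k$ and $l$ produces \eqref{final form of Schwinger-Dyson expansion}.

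Finally, the $1/N$-expansion \eqref{1/N expansion} follows by reordering the absolutely convergent series and isolating the tail $l\geq L$. The bound $(p+k-l)^l/N^l\leq\nu^{l-L}(p+k-l)^L/N^L$ on $\mathcal{H}^{(\nu N)}_\pm$ extracts the prefactor $N^{-L}$, and the remaining sum is of the same geometric type as in the first step but with an additional polynomial factor $(p+k)^L$, which does not spoil convergence for $t<\rho(\kappa,\nu)$. The main technical obstacle is the cutoff-removal step, particularly the term-by-term strong convergence of the graph integrals under the sole assumption that $a^{(p)}$ be bounded; once that is in place, the remaining analysis is combinatorial bookkeeping controlled by \eqref{fundamental estimate}.
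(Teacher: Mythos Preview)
Your proposal is correct and follows essentially the same approach as the paper: you reproduce the combinatorial estimates of Lemmas \ref{convergence of the loop expansion} and \ref{convergence for bounded potential} with the uniform-in-$\epsilon$ bound \eqref{fundamental estimate} substituted for the bounded-potential bound, and your cutoff-removal step (strong resolvent convergence of $H_N^\epsilon$ plus term-by-term strong convergence of $G^{(k,l),\epsilon}_t$ via dominated convergence with the Kato smoothing bound as dominant) is precisely what the paper does in Appendix \ref{section: removal of cutoff}. The only cosmetic difference is that the paper works directly with arbitrary $\LLL^2$ vectors rather than passing through Schwartz-class ones, but this is immaterial given the uniform Kato smoothing control.
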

\begin{proof}
Using \eqref{fundamental estimate} we may repeat the proof of Lemma \ref{convergence for bounded potential} to the letter to prove the statements about convergence. Thus \eqref{final form of Schwinger-Dyson expansion} holds for all $\epsilon > 0$. The proof of \eqref{final form of Schwinger-Dyson expansion} for $\epsilon = 0$ follows by approximation and is banished to Appendix \ref{section: removal of cutoff}.
\end{proof}

\section{The Mean-Field Limit} \label{section: mean-field limit for bosons}
In this section we identify the mean-field dynamics as the dynamics given by the Hartree equation.
\subsection{The Hartree equation} \label{section: Hartree}
The Hartree equation reads
\begin{equation} \label{Hartree equation}
\mi \partial_t \psi \;=\; h \psi + (w * \abs{\psi}^2) \psi\,.
\end{equation}
It is the equation of motion of a classical Hamiltonian system with phase space $\Gamma \deq \HHH^1(\R^3)$. Here $\HHH^1(\R^3)$ is the usual Sobolev space of index one. 
In analogy to $\widehat{\A}_N$ we define $\A$ as the map from closed operators on $\mathcal{H}^{(p)}_+$ to functions on phase space, through 
\begin{align*}
\A(a^{(p)})(\psi) &\;\deq\; \scalar{\psi^{\otimes p}}{a^{(p)} \, \psi^{\otimes p}}
\\
&\;=\; \int \md x_1 \cdots \md x_p \, \md y_1 \cdots \md y_p \;
\bar{\psi}(x_p) \cdots \bar{\psi}(x_1) \, a^{(p)}(x_1, \dots, x_p;y_1, \dots, y_p) \, 
\psi(y_1) \cdots \psi(y_p)\,.
\end{align*}
We define the space of ``observables'' $\mathfrak{A}$ as the linear hull of $\h{A(a^{(p)}) \,:\, p \in \N,\, a^{(p)} \in \mathcal{B}(\mathcal{H}^{(p)}_+)}$.

The Hamilton function is given by
\begin{equation*}
H \;\deq\; \A(h) + \frac{1}{2} \A(W)\,,
\end{equation*}
i.e.
\begin{equation} \label{classical Hamiltonian}
H(\psi) \;=\; \int \md x \; \abs{\nabla \psi}^2 + \frac{1}{2} \int \md x \; (w * \abs{\psi}^2) \abs{\psi}^2
\;=\; \scalar{\psi}{h \,\psi} + \frac{1}{2} \scalar{\psi^{\otimes 2}}{W \, \psi^{\otimes 2}}\,.
\end{equation}
Using the Hardy-Littlewood-Sobolev and Sobolev inequalities (see e.g.\ \cite{LiebLoss}) one sees that $H(\psi)$ is well-defined on $\Gamma$:
\begin{equation*}
\int \dd x \, \dd y \; \frac{\abs{\psi(x)}^2 \, \abs{\psi(y)}^2}{\abs{x - y}} \;\lesssim\; \normb{\abs{\psi}^2}_{6/5}^2 \;=\; \norm{\psi}_{12/5}^4 \;\lesssim\; \norm{\psi}^4_{\HHH^1}\,,
\end{equation*}
where the symbol $\lesssim$ means the left side is bounded by the right side multiplied by a positive constant that is independent of $\psi$.

The Hartree equation is equivalent to
\begin{equation*}
\mi \partial_t \psi \;=\; \partial_{\bar{\psi}} H(\psi)\,.
\end{equation*}
The symplectic form on $\Gamma$ is given by
\begin{equation*}
\omega = \mi \int \md x \; \md \bar{\psi}(x) \wedge \md \psi(x)\,,
\end{equation*}
which induces a Poisson bracket given by 
\begin{equation*}
\{\psi(x), \bar{\psi}(y)\} \;=\; \mi \delta(x-y) \,, \qquad \{\psi(x), \psi(y)\} \;=\; \{\bar{\psi}(x), \bar{\psi}(y)\} \;=\; 0 \,.
\end{equation*}
For $A, B \in \mathfrak{A}$ we have that
\begin{equation*}
\{A, B\}(\psi) \;=\; \mi \int \md x\; \qb{\partial_\psi A(\psi) \, \partial_{\bar{\psi}} B(\psi) - \partial_\psi B(\psi) \, \partial_{\bar{\psi}} A(\psi)}\,.
\end{equation*}

The ``mass'' function
\begin{equation*}
N(\psi) \;\deq\; \int \md x \; \abs{\psi}^2
\end{equation*}
is the generator of the gauge transformations $\psi \mapsto \me^{-\mi \theta} \psi$. By the gauge invariance of the Hamiltonian, $\{H, N\} = 0$, we conclude, at least formally, that $N$ is a conserved quantity. Similarly, the energy $H$ is formally conserved.

The space of observables $\mathfrak{A}$ has the following properties.
\begin{itemize}
\item[(i)]
$\ol{\A(a^{(p)})} = \A\pb{(a^{(p)})^*}$.
\item[(ii)]
If $a^{(p)} \in \mathcal{B}(\mathcal{H}^{(p)}_+)$ and $b \in \mathcal{B}(\mathcal{H})$, then 
\begin{equation*}
\A(a^{(p)})(b \psi) \;=\; \A\pb{(b^*)^{\otimes p} a^{(p)} b^{\otimes p}}(\psi)\,.
\end{equation*}
\item[(iii)]
If $a^{(p)}$ and $b^{(q)}$ are $p$- and $q$-particle operators, respectively, then
\begin{equation} \label{Poisson bracket computed}
\hb{\A(a^{(p)}), \A(b^{(q)})} \;=\; \mi pq \A \pb{\qb{a^{(p)}, b^{(q)}}_1}\,.
\end{equation}
\item[(iv)]
If $a^{(p)} \in \mathcal{B}(\mathcal{H}^{(p)}_+)$, then
\begin{equation} \label{bound on classical observable}
\abs{\A(a^{(p)})(\psi)} \;\leq\; \norm{a^{(p)}} \, \norm{\psi}^{2p}\,.
\end{equation}
\end{itemize}
The free time evolution
\begin{equation*}
\phi^t_0(\psi) \;\deq\; \me^{-\mi th} \psi
\end{equation*}
is the Hamiltonian flow corresponding to the free Hamilton function $\A(h)$. We abbreviate the free time evolution of observables $A \in \mathfrak{A}$ by $A_t \;\deq\; A \circ \phi^t_0$. Thus, $\A(a^{(p)})_t = \A(a^{(p)}_t)$.

In order to define the Hamiltonian flow on all of $\LLL^2(\R^3)$, we rewrite the Hartree equation \eqref{Hartree equation} with initial data $\psi(0) = \psi$ as an integral equation
\begin{equation} \label{integral Hartree}
\psi(t) \;=\; \me^{-\mi th} \psi - \mi \int_0^t \md s\; \me^{-\mi (t-s) h} (w * \abs{\psi(s)}^2) \psi(s)\,.
\end{equation}

\begin{lemma} \label{lemma: Hartree wellposedness}
Let $\psi \in \LLL^2(\R^3)$. Then \eqref{integral Hartree} has a unique global solution $\psi(\cdot) \in \CCC(\R; \LLL^2(\R^3))$, which depends continuously on the initial data $\psi$. Furthermore, $\norm{\psi(t)} = \norm{\psi}$ for all $t$. Finally, we have a Schwinger-Dyson expansion for observables: Let $a^{(p)} \in \mathcal{B}(\mathcal{H}_+^{(p)})$, $\nu > 0$ and $t < \rho(\kappa, \nu)$. Then
\begin{align} 
\A(a^{(p)})(\psi(t)) &\;=\; \sum_{k = 0}^\infty \; \A\pb{G^{(k, 0)}_t (a^{(p)})}(\psi)
\notag \\ \label{Hartree evolution of observable}
&\;=\;
\sum_{k = 0}^\infty \frac{1}{2^k} \int_{\Delta^k(t)} \md \ul{t} \; \hb{\A(W_{t_k}), \dots \hb{\A(W_{t_1}), \A(a^{(p)}_t)}\dots}(\psi)
\,,
\end{align}
uniformly in the ball $B_\nu \deq \{\psi \in \LLL^2(\R^3) \,:\, \norm{\psi}^2 \leq \nu\}$.
\end{lemma}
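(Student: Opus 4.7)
The plan is to prove the three assertions of the lemma (well-posedness, norm conservation, Schwinger-Dyson expansion) in parallel by applying to the Duhamel equation \eqref{integral Hartree} the classical (tree-level) analogue of the machinery developed in Section~\ref{section: Coulomb}: a fixed-point/series estimate for the iterates of \eqref{integral Hartree} controlled by Kato smoothing and by the tree-graph counting of Section~\ref{section: Coulomb}. This simultaneously produces $\psi(\cdot)$, its continuous dependence on the initial data, and the series \eqref{Hartree evolution of observable}. Norm conservation and globalization are then obtained by a separate regularization argument.

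I would first iterate \eqref{integral Hartree} formally to obtain $\psi(t)=\sum_{k\geq 0}\psi^{(k)}(t)$, where $\psi^{(k)}(t)$ is a sum over ordered ternary trees with $k$ internal vertices: each leaf carries a factor $\me^{-\mi t h}\psi$, while each internal vertex carries an integral $-\mi\int_0^s \me^{-\mi(s-s')h}(w*\abs{\cdot}^2)(\cdot)\,\md s'$, spawning three $\psi$-legs (two $\psi$ and one $\ol{\psi}$). To bound $\norm{\psi^{(k)}(t)}_{\LLL^2}$, I would read each internal vertex as a two-body interaction between two of the $2k+1$ factors of $\psi$ and apply Lemma~\ref{lemma: Kato smoothing for centre of mass} to the corresponding pair, using the same simplex-replacement device as at the end of Section~\ref{section: Coulomb} so that the $1/k!$ from the time-simplex is absorbed by the enumeration of tree structures rather than lost to Cauchy-Schwarz. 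Combined with the ternary-tree count $C_k^3$ from \eqref{definition of Catalan numbers}, this yields a bound of the form $\norm{\psi^{(k)}(t)}\leq (C\kappa^2 t\norm{\psi}^4)^{k/2}\norm{\psi}$, so that the series converges absolutely in $\CCC([0,T];\LLL^2(\R^3))$ for $T<\rho(\kappa,\nu)$, uniformly on $B_\nu$; its sum is the unique local solution of \eqref{integral Hartree} and depends continuously on $\psi$.

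For the expansion \eqref{Hartree evolution of observable}, I would substitute the tree expansion of $\psi(t)$ into $\A(a^{(p)})(\psi(t)) = \scalar{\psi(t)^{\otimes p}}{a^{(p)}\,\psi(t)^{\otimes p}}$ and collect all contributions with a fixed total number $k$ of interaction vertices. The result is indexed by ordered $2p$-tuples of ternary trees of combined order $k$ -- which, by \eqref{bound on the number of tree graphs}, is exactly the set of tree graph structures $\mathscr{Q}(p,k,0)$ -- and each contribution is precisely the one attached to the corresponding graph in the recursion \eqref{recursive definition of graphs} restricted to $l=0$, so the $k$-th partial sum equals $\A(G^{(k,0)}_t(a^{(p)}))(\psi)$. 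The equivalent nested-Poisson form in the second line of \eqref{Hartree evolution of observable} is then the standard Duhamel expansion of the classical Hamiltonian evolution $\partial_t \A(a^{(p)})=\h{H,\A(a^{(p)})}$ in the interaction picture set by $\phi_0^t$, once one uses property~(iii) of Section~\ref{section: Hartree} to convert each $\h{\tfrac12\A(W),\cdot}$ into $\mi p\,\A([W,\cdot]_1)$, i.e.\ into the tree-branching term of \eqref{recursive definition of graphs}.

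It remains to obtain norm conservation and to globalize in time. I would argue by regularization: for the bounded potential $w^\epsilon$ of \eqref{general Kato smoothing for cutoff potentials}, the Hartree equation is globally well-posed in $\LLL^2$ by elementary Lipschitz theory, and a direct computation of $\partial_t\norm{\psi^\epsilon(t)}^2 = 2\im\scalarb{\psi^\epsilon}{h\psi^\epsilon+(w^\epsilon*\abs{\psi^\epsilon}^2)\psi^\epsilon}=0$ yields $\norm{\psi^\epsilon(t)}=\norm{\psi}$. Because the tree bounds of the second paragraph are uniform in $\epsilon\geq 0$, $\psi^\epsilon(t)\to\psi(t)$ in $\LLL^2$ on the local interval, so norm conservation is inherited by $\psi(t)$; since the local existence time depends only on $\nu$, iteration produces a global solution. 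The hardest part of the plan lies in the combinatorial identification of the third paragraph: one must track the factor $2^k$ counting the direction choices for the two new branches at each internal vertex, the $k!$ produced by passing from the ordered simplex $\Delta^k(t)$ to the union of simplices indexed by graph structures (exactly as in Section~\ref{section: Coulomb}), and the factor $1/2$ from $\tfrac12\A(W)$ in the Hamiltonian, so that the prefactors work out precisely to those of \eqref{recursive definition of graphs} with $l=0$.
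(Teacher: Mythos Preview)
Your plan is essentially sound and would work, but it is organized ``dually'' to the paper's proof, and this creates two places where you would have to do extra work that the paper's route avoids.

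The paper does \emph{not} expand $\psi(t)$ itself. Instead it enlarges $\mathfrak{A}$ to non-gauge-invariant observables $\A(a^{(q,p)})(\psi)=\scalar{\psi^{\otimes q}}{a^{(q,p)}\psi^{\otimes p}}$ and writes the Schwinger-Dyson series for the observable flow $S(t)A=\sum_k\int_{\Delta^k(t)}D_{t_k}\cdots D_{t_1}\me^{Gt}A$ directly. Each nested Poisson bracket sends $a^{(q,p)}$ to an operator $T^{(k)}:\mathcal{H}^{(p+k)}_+\to\mathcal{H}^{(q+k)}_+$, so the Kato bound of Section~\ref{section: Coulomb} applies verbatim. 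Taking $A=F_f=\scalar{f}{\cdot}$ (so $q=1$, $p=0$) then \emph{defines} $\psi(t)$ by duality, and the identification $(S(t)A)(\psi)=A(\psi(t))$ for general $A$ follows from the algebraic identity $S(t)(AB)=(S(t)A)(S(t)B)$ together with a weak-operator-limit density argument. What this buys: the ``hardest part'' of your plan --- matching the Cauchy product of $2p$ tree series against the recursion for $G^{(k,0)}_t$, with all factors $2^k$, $k!$, $1/2$ accounted for --- is replaced by a one-line derivation rule $D_s(AB)=(D_sA)B+A(D_sB)$ and induction on $k$. Your route produces the same trees, but the bookkeeping is genuinely more delicate; in particular, your bound on $\norm{\psi^{(k)}(t)}_{\LLL^2}$ via ``Kato smoothing on a pair'' really amounts to writing $(w*\abs{\phi}^2)\phi=\bra{\phi}_2 W_{12}(\phi\otimes\phi)$ and then iterating inside a many-body tensor product --- i.e.\ exactly the $T^{(k)}$ formalism.

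Your norm-conservation step has a small gap: for $\psi\in\LLL^2$ only, $\psi^\epsilon(t)$ need not lie in the domain of $h$, so the formal computation $\partial_t\norm{\psi^\epsilon(t)}^2=2\im\scalar{\psi^\epsilon}{h\psi^\epsilon+\cdots}=0$ is not justified without a further density argument. The paper instead reads $\norm{\psi(t)}^2-\norm{\psi}^2$ directly off the integral equation \eqref{integral Hartree} (expanding $\scalar{\psi(t)}{\psi(t)}$ and symmetrizing the double time integral), which requires no differentiability and no regularization.
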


\begin{proof}
The well-posedness of \eqref{integral Hartree} is a well-known result; see for instance \cite{ChadamGlassey, Zagatti}. The remaining statements follow from a ``tree expansion'', which also yields an existence result.
We first use the Schwinger-Dyson expansion to construct an evolution on the space of observables. We then show that this evolution stems from a Hamiltonian flow that satisfies the Hartree equation \eqref{integral Hartree}.

First, we generalize our class of ``observables'' to functions that are not gauge invariant, i.e.\ that correspond to bounded operators $a^{(q,p)} \in \mathcal{B}(\mathcal{H}_+^{p}; \mathcal{H}_+^{q})$. We set $\A(a^{(q,p)})(\psi) \deq \scalar{\psi^{\otimes q}}{a^{(q,p)} \psi^{\otimes p}}$, and denote by $\widetilde{\mathfrak{A}}$ the linear hull of observables of the form $\A(a^{(q,p)})$ with $a^{(q,p)} \in \mathcal{B}(\mathcal{H}_+^{p}; \mathcal{H}_+^{q})$.

It is convenient to introduce the abbreviations
\begin{equation*}
G \;\deq\; \{\A(h), \,\cdot\,\}\,, \qquad D \;\deq\; \frac{1}{2}\{\A(W), \,\cdot\,\}\,.
\end{equation*}
Then $\ee^{Gt}$ is well-defined on $\widetilde{\mathfrak{A}}$ through $(\me^{G t} A)(\psi) = A(\me^{-\mi h} \psi)$, where $A \in \widetilde{\mathfrak{A}}$. Note also that
\begin{equation*}
D_s \;\deq\; \ee^{Gs} D \ee^{-Gs} \;=\; \frac{1}{2}\{\A(W_s), \,\cdot\,\}\,.
\end{equation*}
Let $A \in \widetilde{\mathfrak{A}}$. We use the Schwinger-Dyson series for $\ee^{(G + D)t}$ to define the flow $S(t)A$ through
\begin{align}
S(t) A &\;\deq\; \sum_{k = 0}^\infty \int_{\Delta^k(t)} \md \ul{t} \; D_{t_k} \cdots D_{t_1} \, \me^{G t} A
\notag \\ \label{classical Schwinger-Dyson}
&\;=\; \sum_{k = 0}^\infty \int_{\Delta^k(t)} \md \ul{t} \; \frac{1}{2^k} \, \hb{\A(W_{t_k}), \dots \hb{\A(W_{t_1}), A_t)}\dots}\,.
\end{align}
Our first task is to show convergence of \eqref{classical Schwinger-Dyson} for small times.

Let $A = \A(a^{(q,p)})$. As with \eqref{Poisson bracket computed} one finds, after short computation, that
\begin{equation} \label{Poisson bracket of non gauge-invariant observable}
\frac{1}{2} \{\A(W), \A(a^{(q,p)})\} \;=\; \A \pbb{\mi \sum_{i = 1}^q W_{i \, q+1} (a^{(q,p)} \otimes \umat) - \mi \sum_{i = 1}^p (a^{(q,p)} \otimes \umat) W_{i \, p+1}}\,.
\end{equation}
Thus we see that the nested Poisson brackets in \eqref{classical Schwinger-Dyson} yield a ``tree expansion'' which may be described as follows. Define $T^{(k)}_{t, t_1, \dots, t_k}(a^{(q,p)})$ recursively through
\begin{align*}
T^{(0)}_t(a^{(q,p)}) &\;\deq\; a^{(q,p)}_t\,,
\\
T^{(k)}_{t, t_1, \dots, t_k}(a^{(q,p)}) &\;\deq\; \mi P_+ \sum_{i = 1}^{q + k-1} W_{i \, q+k, t_k} \pB{T^{(k-1)}_{t, t_1, \dots, t_{k-1}}(a^{(q,p)}) \otimes \umat} P_+ 
\\
&\qquad {}-{} \mi P_+ \sum_{i = 1}^{p + k-1} \pB{T^{(k-1)}_{t, t_1, \dots, t_{k-1}}(a^{(q,p)}) \otimes \umat} W_{i \, p+k, t_k} P_+\,.
\end{align*}
Note that $T^{(k)}_{t, t_1, \dots, t_k}(a^{(q,p)})$ is an operator from $\mathcal{H}_+^{(p+k)}$ to $\mathcal{H}_+^{(q+k)}$. Moreover, \eqref{Poisson bracket of non gauge-invariant observable} implies that
\begin{equation}
\frac{1}{2^k} \, \hb{\A(W_{t_k}), \dots \hb{\A(W_{t_1}), \A(a^{(q,p)}_t)}\dots} \;=\; \A \pB{T^{(k)}_{t, t_1, \dots, t_k}(a^{(q,p)})}\,.
\end{equation}
Also, by definition, we see that for gauge-invariant observables $a^{(p)}$ we have
\begin{equation*}
T^{(k)}_{t, t_1, \dots, t_k}(a^{(p)}) \;=\; G^{(k,0)}_{t, t_1, \dots, t_k}(a^{(p)})\,.
\end{equation*}
We may use the methods of Section \ref{section: Coulomb} to obtain the desired estimate. One sees that $T^{(k)}_{t, t_1, \dots, t_k}(a^{(p)})$ is a sum of elementary terms, indexed by labelled ordered trees, whose root has degree at most $p+q$, and whose other vertices have at most 3 children. From \eqref{sum over Catalan numbers} we find that there are
\begin{equation*}
\frac{p+q}{p+q+3k} \binom{p+q + 3k}{k}
\end{equation*}
unlabelled trees of this kind. Proceeding exactly as in Section \ref{section: Coulomb} we find that
\begin{equation*}
\int_{\Delta^k(t)} \md \ul{t} \; \normB{T^{(k)}_{t, t_1, \dots, t_k}(a^{(q,p)}) \Phi^{(p+k)}} \;\leq\; \binom{p+q + 3k}{k} \, \pbb{\frac{\pi \kappa^2 t}{2}}^{k/2} \, \norm{a^{(q,p)}} \norm{\Phi^{(p+k)}}\,,
\end{equation*}
where $\Phi^{(p+k)} \in \mathcal{H}_+^{(p+k)}$.
Let $\psi \in \LLL^2(\R^3)$ with $\norm{\psi}^2 \leq \nu$. Then $\abs{\A(a^{(q,p)}) (\psi)} \leq \norm{a^{(q,p)}} \norm{\psi}^{p+q}$ implies
\begin{multline} \label{bound for classical Schwinger-Dyson coefficients}
\int_{\Delta^k(t)} \md \ul{t} \; \absbb{\frac{1}{2^k} \, \hb{\A(W_{t_k}), \dots \hb{\A(W_{t_1}), \A(a^{(q,p)}_t)}\dots}(\psi)} 
\\
\leq\; \binom{p+q + 3k}{k} \, \pbb{\frac{\pi \kappa^2 t}{2}}^{k/2} \, \norm{a^{(q,p)}} \, \nu^{k + (p+q)/2}\,.
\end{multline}
Convergence of the Schwinger-Dyson series \eqref{classical Schwinger-Dyson} for small times $t$ follows immediately.

Thus, for small times $t$, the flow $S(t)$ is well-defined on $\widetilde{\mathfrak{A}}$, and it is easy to check that it satisfies the equation
\begin{equation} \label{abstract Hartree}
S(t) A \;=\; \me^{Gt} A + \int_0^t \md s \; S(s)\, D \, \me^{G (t-s)} A\,,
\end{equation}
for all $A \in \widetilde{\mathfrak{A}}$. 

In order to establish a link with the Hartree equation \eqref{integral Hartree}, we consider $f \in \LLL^2(\R^3)$ and define the function $F_f \in \widetilde{\mathfrak{A}}$ through $F_f(\psi) \deq \scalar{f}{\psi}$. Clearly, the mapping $f \mapsto (S(t) F_f)(\psi)$ is antilinear and \eqref{bound for classical Schwinger-Dyson coefficients} implies that it is bounded. Thus there exists a unique vector $\psi(t)$ such that
\begin{equation*}
(S(t) F_f)(\psi) \;\eqd\; \scalar{f}{\psi(t)}\,.
\end{equation*}
We now proceed to show that $(S(t) A)(\psi) = A(\psi(t))$ for all $A \in \widetilde{\mathfrak{A}}$. By definition, this is true for $A = F_f$. As a first step, we show that 
\begin{equation} \label{flow is a homomorphism}
S(t) (A B) \;=\; (S(t) A) (S(t) B)\,,
\end{equation}
where $A,B \in \widehat{\mathfrak{A}}$. Write
\begin{align*}
S(t) (AB) &\;=\; \sum_{k = 0}^\infty \int_{\Delta^k(t)} \md \ul{t} \; D_{t_k} \cdots D_{t_1} \, \me^{G t} (AB)
\\
&\;=\; \sum_{k = 0}^\infty \int_{\Delta^k(t)} \md \ul{t} \; D_{t_k} \cdots D_{t_1} \, (A_t B_t)\,,
\end{align*}
where we used $\me^{G t}(AB) = (\me^{Gt} A) (\me^{Gt} B)$. We now claim that
\begin{equation} \label{splitting of product}
\int_{\Delta^k(t)} \md \ul{t} \; D_{t_k} \cdots D_{t_1} (A_t B_t) \;=\; \sum_{l+m = k} \int_{\Delta^l(t)} \md \ul{t} \int_{\Delta^m(t)} \md \ul{s} \; \pb{D_{t_l} \cdots D_{t_1} A_t} \, \pb{D_{s_m} \cdots D_{s_1} B_t}\,,
\end{equation}
where the sum ranges over $l,m \geq 0$. This follows easily by induction on $k$ and using $D_s(A B) = A (D_s B) + (D_s A) B$. Then \eqref{flow is a homomorphism} follows immediately.

Next, we note that \eqref{flow is a homomorphism} implies that $(S(t) A)(\psi) = A(\psi(t))$, whenever $A$ is of the form $A = \A(a^{(q,p)})$, where
\begin{equation} \label{sum of tensor products}
a^{(q,p)} \;=\; \sum_j P_+ \ketb{f_1^j \otimes \cdots \otimes f_q^j} \brab{g_1^j \otimes \cdots \otimes g_p^j} P_+\,,
\end{equation}
where the sum is finite, and $f_i^j, g_i^j \in \LLL^2(\R^3)$. Now each $a^{(q,p)} \in \mathcal{B}(\mathcal{H}_+^{(p)}; \mathcal{H}_+^{(q)})$ can be written as the weak operator limit of a sequence $(a^{(q,p)}_n)_{n \in \N}$ of operators of type \eqref{sum of tensor products}. One sees immediately that
\begin{equation*}
\lim_n \A(a^{(q,p)}_n)(\psi(t)) \;=\; \A(a^{(q,p)})(\psi(t))\,.
\end{equation*}
On the other hand, uniform boundedness implies that $\sup_n \norm{a^{(q,p)}_n} < \infty$, so that
\begin{multline*}
\scalarB{\psi^{\otimes (q+k)}}{W_{i_1 j_1, t_{v_1}} \cdots W_{i_r j_r, t_{v_r}} \, \pb{a^{(q,p)}_n \otimes \umat^{(k)}} \, W_{i_{r+1} j_{r+1}, t_{v_{r+1}}} \cdots W_{i_k j_k, t_{v_k}}  \psi^{\otimes (p+k)}}
\\
\leq\;
\normb{a^{(q,p)}_n} \normB{W_{i_r j_r, t_{v_r}} \cdots W_{i_1 j_1, t_{v_1}} \psi^{\otimes (q+k)}}
\normB{W_{i_{r+1} j_{r+1}, t_{v_{r+1}}} \cdots W_{i_k j_k, t_{v_k}}  \psi^{\otimes (p+k)}}
\end{multline*}
justifies the use of dominated convergence in
\begin{equation*}
\lim_n (S(t) \A(a^{(q,p)}_n))(\psi) = (S(t) \A(a^{(q,p)}))(\psi)\,.
\end{equation*}
We have thus shown that
\begin{equation} \label{observable flow is Hamiltonian}
(S(t) A)(\psi) \;=\; A(\psi(t))\,, \qquad \forall A \in \widetilde{\mathfrak{A}}\,.
\end{equation}

Let us now return to \eqref{abstract Hartree}. Setting $A = F_f$, we find that \eqref{abstract Hartree} implies
\begin{align*}
\scalar{f}{\psi(t)} &\;=\; \scalar{f}{\me^{-ih} \psi} + \int_0^t \md s \; \frac{1}{2} \pB{S(s) \{\A(W), (F_f)_{t - s}\}}(\psi)
\\
&\;=\; \scalar{f}{\me^{-ih} \psi} + \int_0^t \md s \; \pb{\{\A(W), (F_f)_{t - s}\}}(\psi(s))\,,
\end{align*}
where we used \eqref{observable flow is Hamiltonian}. Using \eqref{Poisson bracket of non gauge-invariant observable} we thus find
\begin{equation} \label{Hartree f-component}
\scalar{f}{\psi(t)} \;=\; \scalar{f}{\me^{-ih} \psi} - \mi \int_0^t \md s \; \scalarb{(\me^{\mi h (t - s)} f) \otimes \psi(s)}{W \psi(s) \otimes \psi(s)}\,,
\end{equation}
which is exactly the Hartree equation \eqref{integral Hartree} projected onto $f$. We have thus shown that $\psi(t)$ as defined above solves the Hartree equation.

To show norm-conservation we abbreviate $F(s) \deq (w * \abs{\psi(s)}^2) \psi(s)$ and write, using \eqref{integral Hartree},
\begin{multline*}
\norm{\psi(t)}^2 - \norm{\psi}^2 \;=\; \mi \int_0^t \md s \; \qb{\scalarb{F(s)}{\me^{-\mi sh}\psi} - \scalarb{\me^{-\mi sh}\psi}{F(s)}} 
\\
{}+{} \int_0^t \md s \int_0^t \md r \; \scalarb{\me^{\mi sh}F(s)}{\me^{\mi rh}F(r)}\,.
\end{multline*}
The last term is equal to
\begin{equation*}
\int_0^t \md s \int_0^s \md r \; \qb{\scalarb{\me^{\mi sh}F(s)}{\me^{\mi rh}F(r)} + \scalarb{\me^{\mi rh}F(r)}{\me^{\mi sh}F(s)}}\,.
\end{equation*}
Therefore \eqref{integral Hartree} implies that
\begin{equation*}
\norm{\psi(t)}^2 - \norm{\psi}^2 \;=\; \mi \int_0^t \md s \; \scalarb{F(s)}{\psi(s)} - \mi \int_0^t \md s \; \scalarb{\psi(s)}{F(s)} \;=\; 0\,,
\end{equation*}
since $ \scalarb{F(s)}{\psi(s)} \in \R$, as can be seen by explicit calculation.
Thus we can iterate the above existence result for short times to obtain a global solution.

Furthermore, \eqref{Hartree f-component} implies that $\psi(t)$ is weakly continuous in $t$. Since the norm of $\psi(t)$ is conserved, $\psi(t)$ is strongly continuous in $t$. Similarly, the Schwinger-Dyson expansion \eqref{classical Schwinger-Dyson} implies that the map $\psi \mapsto \psi(t)$ is weakly continuous for small times, uniformly in $\norm{\psi}$ in compacts. Therefore, the map $\psi \mapsto \psi(t)$ is weakly continuous for all times $t$, and norm-conservation implies that it is strongly continuous.
\end{proof}

\subsection{Wick quantization}
In order to state our main result in a general setting, we shortly discuss how the many-body quantum mechanics of bosons can be viewed as a \emph{deformation quantization} of the (classical) Hartree theory. The deformation parameter (the analogue of $\hbar$ in the usual quantization of classical theories) is $1/N$. We define \emph{quantization} as the linear map $\widehat{(\cdot)}_N \,:\, \mathfrak{A} \to \widehat{\mathfrak{A}}$ defined by the formal replacement $\psi(x) \mapsto \widehat{\psi}_N(x)$ and $\bar{\psi}(x) \mapsto \widehat{\psi}^*_N(x)$ followed by Wick ordering. In other words,
\begin{equation*}
\widehat{(\cdot)}_N \,:\, \A(a^{(p)}) \;\mapsto\; \widehat{\A}_N(a^{(p)})\,.
\end{equation*}
Extending the definition of $\widehat{(\cdot)}_N$ to unbounded operators in the obvious way, we see that $\widehat{H}_N$ is the quantization of $H$.

Note that \eqref{product of two second quantized operators} and \eqref{Poisson bracket computed} imply, for $A,B \in \mathfrak{A}$,
\begin{equation*}
\qb{\widehat{A}_N, \widehat{B}_N} \;=\; \frac{N^{-1}}{\ii} \widehat{\h{A, B}}_N + O\pbb{\frac{1}{N^2}}\,,
\end{equation*}
so that $1/N$ is indeed the deformation parameter of $\widehat{(\cdot)}_N$.

\subsection{The mean-field limit: a Egorov-type result}
Let $\phi^t$ denote the Hamiltonian flow of the Hartree equation on $\LLL^2(\R^3)$.
Introduce the short-hand notation
\begin{align*}
\alpha^t A &\;\deq\; A \circ \phi^t\,, &A \in \mathfrak{A}\,,
\\
\widehat{\alpha}^t \mathbf{A} &\;\deq\; \me^{\mi t N \widehat{H}_N} \, \mathbf{A} \, \me^{-\mi t N \widehat{H}_N}\,, &\mathbf{A} \in \widehat{\mathfrak{A}}\,.
\end{align*}
We may now state and prove our main result, which essentially says that, in the mean-field limit $n = \nu N \to \infty$, time evolution and quantization commute.
\begin{theorem}
Let $A \in \mathfrak{A}$, $\nu > 0$, and $\epsilon > 0$. Then there exists a function $A(t) \in \mathfrak{A}$ such that
\begin{equation*}
\sup_{t \in \R} \normb{\alpha^t A - A(t)}_{\LLL^\infty(B_\nu)} \;\leq\; \epsilon\,,
\end{equation*}
as well as
\begin{equation*}
\normb{\pb{\widehat{\alpha}^t \widehat{A}_N - \widehat{A(t)}_N} \bigr|_{\mathcal{H}^{(\nu N)}_+}} \;\leq\; \epsilon + \frac{C(\epsilon, \nu, t, A)}{N}\,. 
\end{equation*}
\end{theorem}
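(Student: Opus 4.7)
The plan is to prove the theorem by combining the two Schwinger-Dyson expansions derived above: the quantum expansion of Lemma \ref{lemma: Schwinger-Dyson expansion} and the classical one of Lemma \ref{lemma: Hartree wellposedness}. By linearity it suffices to treat $A = \A(a^{(p)})$ with $a^{(p)} \in \mathcal{B}(\mathcal{H}_+^{(p)})$. The key observation is that the tree-level ($l = 0$) sector of the quantum expansion coincides, after Wick quantization $\widehat{(\cdot)}_N$, with the classical Schwinger-Dyson expansion representing the Hartree evolution of observables. Hence, once $\alpha^t A$ is approximated by a finite truncation of the classical series, the quantum error consists of the same classical tail plus the loop sector ($l \geq 1$), which is of order $N^{-1}$ by the $1/N$-expansion.

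First, I would treat the short-time case $\abs{t} < \tau \deq \rho(\kappa, \nu)/2$. Set
\begin{equation*}
A^{(K)}(t) \;\deq\; \sum_{k = 0}^{K - 1} \A\pb{G^{(k,0)}_t(a^{(p)})}\,.
\end{equation*}
The fundamental estimate \eqref{fundamental estimate} ensures $G^{(k,0)}_t(a^{(p)}) \in \mathcal{B}\pb{\mathcal{H}_+^{(p+k)}}$, so $A^{(K)}(t) \in \mathfrak{A}$. By Lemma \ref{lemma: Hartree wellposedness}, $\normb{\alpha^t A - A^{(K)}(t)}_{\LLL^\infty(B_\nu)}$ equals the tail of the classical Schwinger-Dyson series and can be made arbitrarily small by taking $K$ large. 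On the quantum side Lemma \ref{lemma: Schwinger-Dyson expansion} yields
\begin{equation*}
\widehat{\alpha}^t \widehat{A}_N - \widehat{A^{(K)}(t)}_N \;=\; \sum_{k \geq K} \widehat{\A}_N\pb{G^{(k,0)}_t(a^{(p)})} + \sum_{l \geq 1} \frac{1}{N^l} \sum_{k \geq l} \widehat{\A}_N\pb{G^{(k,l)}_t(a^{(p)})}
\end{equation*}
on $\mathcal{H}^{(\nu N)}_+$; the first sum is controlled by the classical tail via \eqref{norm of second quantized observable}, and the second is $O(N^{-1})$ by \eqref{1/N expansion}. This establishes the theorem on $\abs{t} < \tau$.

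Next, I would extend to all $t \in \R$ by splitting the time axis into intervals of length $\tau$ and iterating. Set $A(0) \deq A$ and, recursively, define $A((m+1)\tau)$ as the short-time Schwinger-Dyson truncation at level $K_m$ applied summand-by-summand to $A(m\tau)$, with $K_m$ chosen so that the classical step error is at most $\epsilon/2^{m+1}$; on each open interval $(m\tau, (m+1)\tau)$ define $A(t)$ by the same construction starting from $A(m\tau)$. The telescoping
\begin{equation*}
\alpha^{M\tau} A - A(M\tau) \;=\; \sum_{m = 0}^{M-1} \alpha^{(M-1-m)\tau} \pb{\alpha^\tau A(m\tau) - A((m+1)\tau)}\,,
\end{equation*}
combined with the invariance of $B_\nu$ under $\phi^t$ (norm conservation, Lemma \ref{lemma: Hartree wellposedness}), bounds the classical error on $B_\nu$ by $\sum_m \epsilon/2^{m+1} < \epsilon$, uniformly in $M$. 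The analogous quantum telescoping, using unitarity of $\me^{\mi t H_N}$ and invariance of $\mathcal{H}^{(\nu N)}_+$, produces $\epsilon + C(\epsilon, \nu, t, A)/N$ with $C = \sum_{m < t/\tau} C_m$ finite for each fixed $t$. The case $t < 0$ follows by time-reversal.

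The principal obstacle is that the iteration generates observables $A(m\tau)$ of growing maximal particle degree, yet both Schwinger-Dyson expansions must still converge at every step. This is where the uniformity of the convergence radius $\rho(\kappa, \nu)$ in $p$ in Lemmas \ref{lemma: Schwinger-Dyson expansion} and \ref{lemma: Hartree wellposedness}---a direct consequence of the Kato smoothing estimate and the graph-counting bound of Lemma \ref{lemma: bound on l-loop graphs}---becomes essential: the same short interval $\tau$ works at every step, regardless of the degree accumulated so far. The per-step constants $C_m$ do grow with the degree at step $m$, which is precisely why the theorem allows $C(\epsilon, \nu, t, A)$ to depend on $t$.
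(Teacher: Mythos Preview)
Your proposal is correct and follows essentially the same approach as the paper: identify the tree sector of the quantum Schwinger--Dyson series with the Wick quantization of the classical Hartree expansion, use the $1/N$-expansion (\ref{1/N expansion}) with $L=1$ to isolate the loop sector as the $O(N^{-1})$ remainder, and then iterate over time intervals of fixed length $\tau<\rho(\kappa,\nu)$ using unitarity and norm conservation; the paper in fact defers the iteration details to the proof of the subsequent Theorem~\ref{theorem: bosons mean-field for pure states} and records the same formula $A(t)=\sum_{k_1<K_1}\cdots\sum_{k_m<K_m}\A\pb{G^{(k_m,0)}_\tau\cdots G^{(k_1,0)}_\tau a^{(p)}}$ that your construction produces. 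Your emphasis on the $p$-independence of the convergence radius $\rho(\kappa,\nu)$ as the mechanism enabling the iteration is exactly the right point.
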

\begin{remark*}
The ``intermediate function'' $A(t)$ is required, since the full time evolution $\alpha^t$ does not leave $\mathfrak{A}$ invariant.
\end{remark*}

\begin{proof}
Most of the work has already been done in the previous sections. Without loss of generality take $A = \A(a^{(p)})$ for some $p \in \N$ and $a^{(p)} \in \mathcal{B}(\mathcal{H}^{(p)}_\pm)$. Assume that $t < \rho(\kappa, \nu)$. Taking $L = 1$ in \eqref{1/N expansion} we get
\begin{equation} \label{one-loop expansion}
\widehat{\alpha}^t \, \widehat{\A}_N(a^{(p)})  \Bigr|_{\mathcal{H}^{(\nu N)}_+} \;=\; \sum_{k = 0}^\infty \, \widehat{\A}_N\pb{G^{(k, 0)}_t(a^{(p)})} \Bigr|_{\mathcal{H}^{(\nu N)}_+} + O \pbb{\frac{1}{N}}\,.
\end{equation}
Comparing this with \eqref{Hartree evolution of observable} immediately yields
\begin{equation*}
\widehat{\alpha}^t \widehat{\A}_N(a^{(p)}) \;=\; \qb{\alpha^t \A(a^{(p)})}\quant + O\pbb{\frac{1}{N}}
\end{equation*}
on $\mathcal{H}^{(\nu N)}_+$, where $\qb{\alpha^t \A(a^{(p)})}\quant$ is defined through its norm-convergent power series. This is the statement of the theorem for short times.

The extension to all times follows from an iteration argument. We postpone the details to the proof of Theorem \ref{theorem: bosons mean-field for pure states} below. In its notation $A(t)$ is given by
\begin{equation*}
A(t) \;=\; \sum_{k_1 = 0}^{K_1 - 1} \cdots \sum_{k_m = 0}^{K_m - 1} \A\pb{G^{(k_m, 0)}_{\tau} G^{(k_{m-1}, 0)}_\tau \cdots G^{(k_1, 0)}_\tau a^{(p)}}\,.
\qedhere
\end{equation*}
\end{proof}

The result may also be expressed in terms of coherent states. 
\begin{theorem} \label{theorem: bosons mean-field for pure states}
Let $a^{(p)} \in \mathcal{B}(\mathcal{H}^{(p)}_+)$, $\psi \in \LLL^2(\R^3)$ with $\norm{\psi} = 1$, and $T > 0$. Then there exist constants $C, \beta > 0$, depending only on $p$, $T$ and $\kappa$, such that
\begin{equation} \label{boson mean-field limit for coherent states}
\absB{\scalarB{\psi^{\otimes N}}{\me^{\mi t H_N} \, \widehat{\A}_N(a^{(p)}) \, \me^{-\mi t H_N} \, \psi^{\otimes N}} - \scalarb{\psi(t)^{\otimes p}}{a^{(p)} \psi(t)^{\otimes p}}} \;\leq\; \frac{C}{N^\beta } \, \norm{a^{(p)}}\,, \qquad t \in [0,T]\,.
\end{equation}
Here $\psi(t)$ is the solution to the Hartree equation \eqref{integral Hartree} with initial data $\psi$.
\end{theorem}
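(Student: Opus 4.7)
The plan is to reduce to the Egorov-type theorem just proved and then evaluate the resulting statement on the coherent state $\psi^{\otimes N}$. The reduction hinges on the identity
\begin{equation*}
\scalarB{\psi^{\otimes N}}{\widehat{\A}_N(b^{(q)}) \, \psi^{\otimes N}} \;=\; \frac{q!}{N^q}\binom{N}{q} \, \A(b^{(q)})(\psi) \;=\; \A(b^{(q)})(\psi) + O\pbb{\frac{q^2}{N}} \norm{b^{(q)}}\,,
\end{equation*}
valid for $\norm{\psi} = 1$ and $q \leq N$, which follows from the ``first-quantized'' formula for $\widehat{\A}_N\bigr|_{\mathcal{H}^{(N)}_+}$ together with the elementary bound $0 \leq 1 - \prod_{j=0}^{q-1}(1 - j/N) \leq q^2/(2N)$. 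In particular, for any polynomial observable $B = \sum_j \A(b^{(q_j)}) \in \mathfrak{A}$ with boundedly many summands one has $\scalar{\psi^{\otimes N}}{\widehat{B}_N \psi^{\otimes N}} = B(\psi) + O(1/N)$, with a constant depending on the number and degrees of the summands of $B$.

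For short times $t < \rho(\kappa, 1)$ I would apply Lemma~\ref{lemma: Schwinger-Dyson expansion} with $L = 1$ on $\mathcal{H}^{(N)}_+$ to get, in operator norm,
\begin{equation*}
\me^{\mi t H_N} \, \widehat{\A}_N(a^{(p)}) \, \me^{-\mi t H_N} \;=\; \sum_{k = 0}^\infty \widehat{\A}_N\pb{G^{(k,0)}_t(a^{(p)})} + O(1/N)\,.
\end{equation*}
Taking expectations in $\psi^{\otimes N}$, inserting the binomial approximation of the first paragraph termwise, absorbing the $p+k > N$ tail (which is exponentially small in $N$ by \eqref{fundamental estimate}), and controlling the extra $(p+k)^2$ factor against the geometric convergence in $k$ from \eqref{fundamental estimate}, one obtains
\begin{equation*}
\scalarB{\psi^{\otimes N}}{\me^{\mi t H_N} \widehat{\A}_N(a^{(p)}) \me^{-\mi t H_N} \psi^{\otimes N}} \;=\; \sum_{k = 0}^\infty \A\pb{G^{(k,0)}_t(a^{(p)})}(\psi) + O(1/N)\,.
\end{equation*}
Lemma~\ref{lemma: Hartree wellposedness} then identifies the sum with $\A(a^{(p)})(\psi(t)) = \scalar{\psi(t)^{\otimes p}}{a^{(p)} \psi(t)^{\otimes p}}$, which settles the theorem with the sharp rate $1/N$ on $t \in [0, \rho(\kappa, 1))$.

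For arbitrary $t \in [0, T]$ I would invoke the intermediate observable $A(t) \in \mathfrak{A}$ constructed in the preceding Egorov-type theorem. For each $\epsilon > 0$, $A(t)$ is a finite sum of $\A$'s (an iterated, truncated Schwinger-Dyson expansion over $m \sim T/\rho(\kappa, 1)$ short-time steps) satisfying $\norm{\alpha^t A - A(t)}_{\LLL^\infty(B_1)} \leq \epsilon$ and $\norm{(\widehat{\alpha}^t \widehat{A}_N - \widehat{A(t)}_N)\bigr|_{\mathcal{H}^{(N)}_+}} \leq \epsilon + C(\epsilon, T)/N$. Taking expectations in $\psi^{\otimes N}$, applying the first paragraph to each bounded-degree summand of $A(t)$, and combining with $\absb{A(t)(\psi) - \A(a^{(p)})(\psi(t))} \leq \epsilon$ (since $\psi \in B_1$), one would arrive at
\begin{equation*}
\absB{\scalar{\psi^{\otimes N}}{\me^{\mi t H_N} \widehat{\A}_N(a^{(p)}) \me^{-\mi t H_N} \psi^{\otimes N}} - \A(a^{(p)})(\psi(t))} \;\leq\; 2\epsilon + \frac{C'(\epsilon, T)}{N}\,.
\end{equation*}

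The hard part will be the quantitative balancing in this last step. The truncation parameters controlling the geometric tails in $k$ also govern the maximum particle number of the operators entering $A(t)$, and hence enter the constant $C'(\epsilon, T)$ through the $q^2$ factor from the binomial approximation in the first paragraph. A careful bookkeeping of how truncation and iteration errors compound through the $m$ short-time blocks should show that $C'(\epsilon, T)$ grows at most as a power of $1/\epsilon$. Choosing $\epsilon = N^{-\beta}$ for a suitable $\beta \in (0, 1)$ then balances the two contributions and produces the polynomial rate $C/N^\beta$ claimed in the statement, with $\beta$ (and hence $C$) depending only on $p$, $T$, and $\kappa$.
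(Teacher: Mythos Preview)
Your proposal is correct and follows essentially the same route as the paper: both arguments iterate a truncated Schwinger-Dyson expansion over $m \sim T/\rho(\kappa,1)$ short-time blocks, compare with the analogous classical (Hartree) expansion, and use the binomial identity $\scalar{\psi^{\otimes N}}{\widehat{\A}_N(b^{(q)})\psi^{\otimes N}} = \A(b^{(q)})(\psi) + O(q^2/N)\norm{b^{(q)}}$ to pass from second-quantized to classical observables. The only substantive difference is that where you assert ``$C'(\epsilon,T)$ grows at most as a power of $1/\epsilon$'' and then optimize over $\epsilon = N^{-\beta}$, the paper carries out this balancing explicitly by taking the block-wise truncation levels to be $K_j = a_j \log N$ and verifying directly that all three error contributions (tail, loop remainder, quantization error) are $O(N^{-\beta})$; this amounts to the same thing once one unwinds the dependence of $\epsilon$ on the $K_j$. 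One small structural caveat: in the paper the Egorov-type theorem's proof for large times is actually \emph{deferred} to the proof of the present theorem, so invoking it as a black box here is mildly circular in presentation---but since you also spell out what $A(t)$ is (the iterated truncated tree expansion), the argument stands on its own.
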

\begin{proof}
Introduce a cutoff $K \in \N$ and write (in self-explanatory notation)
\begin{align} \label{splitting of the quantum evolution}
\widehat{\alpha}^\tau \widehat{\A}_N(a^{(p)}) &\;=\; \sum_{k = 0}^{K-1} \widehat{\A}_N\pb{G_\tau^{(k,0)}(a^{(p)})} + \widehat{\alpha}^\tau_{\geq K} \widehat{\A}_N(a^{(p)}) + \frac{1}{N} R_{N,\tau}(a^{(p)})\,,
\\
\alpha^\tau \A(a^{(p)}) &\;=\; \sum_{k = 0}^{K-1} \A \pb{G_\tau^{(k,0)}(a^{(p)})} + \alpha^\tau_{\geq K} \A(a^{(p)})\,.
\end{align}
To avoid cluttering the notation, from now on we drop the parentheses of the linear map $G^{(k,0)}_\tau$. We iterate \eqref{splitting of the quantum evolution} $m$ times by applying it to its first term and get
\begin{multline} \label{full iterated expansion}
(\widehat{\alpha}^\tau)^m \widehat{\A}_N(a^{(p)}) \;=\;
\sum_{k_1 = 0}^{K_1 - 1} \cdots \sum_{k_m = 0}^{K_m - 1} \widehat{\A}_N\pB{G^{(k_m, 0)}_\tau G^{(k_{m-1}, 0)}_\tau \cdots G^{(k_1, 0)}_\tau a^{(p)}}
\\
{}+{} (\widehat{\alpha}^\tau)^{m-1} \widehat{\alpha}^\tau_{\geq K_1} \widehat{\A}_N(a^{(p)}) + \sum_{j = 1}^{m-1} \sum_{k_1 = 0}^{K_1 - 1} \cdots \sum_{k_j = 0}^{K_j - 1} (\widehat{\alpha}^\tau)^{m -1 - j} \widehat{\alpha}^\tau_{\geq K_{j+1}} \widehat{\A}_N\pB{G^{(k_j, 0)}_\tau G^{(k_{j-1}, 0)}_\tau \cdots G^{(k_1, 0)}_\tau a^{(p)}}
\\
{}+{} \frac{1}{N} (\widehat{\alpha}^\tau)^{m-1} R_{N,\tau}(a^{(p)}) + \frac{1}{N} \sum_{j = 1}^{m-1}\sum_{k_1 = 0}^{K_1 - 1} \cdots \sum_{k_j = 0}^{K_j - 1} (\widehat{\alpha}^\tau)^{m-1-j} R_{N,\tau}
\pb{G^{(k_j, 0)}_\tau \cdots G^{(k_1, 0)}_\tau a^{(p)}}\,.
\end{multline}
A similar expression without the third line holds for $(\alpha^\tau)^m \A(a^{(p)})$.

In order to control this somewhat unpleasant expression, we abbreviate 
\begin{equation*}
x \deq \sqrt{\frac{\tau}{\rho(\kappa,1)}}\,.
\end{equation*}
Assume that $x < 1$. Then \eqref{fundamental estimate} and \eqref{1/N expansion} imply the estimates, valid on $\mathcal{H}^{(N)}_+$,
\begin{align*}
\normb{G^{(k,0)}_\tau \, a^{(p)}} &\;\leq\; 4^p \norm{a^{(p)}} \, x^k\,,
\\
\normb{\widehat{\alpha}^\tau_{\geq K} \widehat{\A}_N(a^{(p)})} &\;\leq\; 4^p \norm{a^{(p)}} \, \frac{x^{K}}{1-x}\,,
\\
\normb{R_{N,\tau}(a^{(p)})} &\;\leq\; (4 \me)^p \norm{a^{(p)}} \, \frac{x}{(1-x)^3}\,.
\end{align*}
Furthermore, \eqref{Hartree evolution of observable} implies that
\begin{equation*}
\normb{\alpha^\tau_{\geq K} \A(a^{(p)})}_{\LLL^\infty(B_1)} \;\leq\; 4^p \norm{a^{(p)}} \, \frac{x^{K}}{1-x}\,.
\end{equation*}
We also need
\begin{align}
\absb{\scalarb{\psi^{\otimes N}}{\widehat{\A}_N(a^{(p)}) \psi^{\otimes N}} - \A(a^{(p)})(\psi)} &\;=\;\absbb{\frac{N \cdots (N-p+1)}{N^p} - 1} \absb{\A(a^{(p)})(\psi)}
\notag \\
&\;\leq\; \sum_{j = 1}^{p-1} \absbb{\frac{N \cdots (N-j)}{N^{j+1}} - \frac{N \cdots (N-j+1)}{N^j}} \norm{a^{(p)}}
\notag \\ \label{estimate of quantization error}
&\;\leq\; \frac{p^2}{N} \norm{a^{(p)}}\,.
\end{align}
Armed with these estimates we may now complete the proof of Theorem \ref{theorem: bosons mean-field for pure states}.
Suppose that $1/2 \leq x  < 1$. Then
\begin{multline*}
\sum_{k_1 = 0}^{K_1 - 1} \cdots \sum_{k_m = 0}^{K_m - 1} \absB{\scalarB{\psi^{\otimes N}}{\widehat{\A}_N\pB{G^{(k_m, 0)}_\tau G^{(k_{m-1}, 0)}_\tau \cdots G^{(k_1, 0)}_\tau a^{(p)}} \, \psi^{\otimes N}}
\\
{}-{} \A\pB{G^{(k_m, 0)}_\tau G^{(k_{m-1}, 0)}_\tau \cdots G^{(k_1, 0)}_\tau a^{(p)}} (\psi)
}
\\
\leq\; \frac{1}{N} (p + K_1 + \cdots + K_m)^2 \, 4^{m (p+ K_1 + \cdots + K_m)} \, \norm{a^{(p)}}\,.
\end{multline*}
Similarly, the second line of \eqref{full iterated expansion} on $\mathcal{H}^{(N)}_+$ and its classical equivalent on $B_1$ are bounded by
\begin{equation*}
\sum_{j = 1}^m x^{K_j} \, 4^{j(p+K_1 + \cdots + K_{j - 1})} \, \norm{a^{(p)}}\,.
\end{equation*}
Finally, the last line of \eqref{full iterated expansion} on $\mathcal{H}^{(N)}_+$ is bounded by
\begin{equation*}
\frac{1}{N} \sum_{j = 1}^m 4^{(j+1)(p+K_1 + \cdots + K_{j - 1})} \, \norm{a^{(p)}}\,.
\end{equation*}

Now pick $m$ large enough that $T \leq m \tau$. Then it is easy to check that there exist $a_1, \dots, a_m$ such that setting
\begin{equation*}
K_j \;=\; a_j \, \log N \,, \qquad j = 1, \dots, m
\end{equation*}
implies that the three above expressions are all bounded by $C N^{-\beta} \norm{a^{(p)}}$, for some $\beta > 0$. This remains of course true for all $m' \leq m$. Since any time $t \leq T$ can be reached by at most $m$ iterations with $1/2 \leq x < 1$, the claim follows.
\end{proof}

We conclude with a short discussion on density matrices. First we recall some standard results; see for instance \cite{ReedSimonI}. Let $\Gamma \in \mathcal{L}^1$, where $\mathcal{L}^1$ is the space of trace class operators on some Hilbert space. Equipped with the norm $\norm{\Gamma}_1 \deq \tr \abs{\Gamma}$, $\mathcal{L}^1$ is a Banach space. Its dual is equal to $\mathcal{B}$, the space of bounded operators, and the dual pairing is given by
\begin{equation*}
\scalar{A}{\Gamma} \;=\; \tr (A \Gamma)\,, \qquad A \in \mathcal{B}\,, \Gamma \in \mathcal{L}^1\,.
\end{equation*}
Therefore,
\begin{equation} \label{dual representation of trace norm}
\norm{\Gamma}_1 \;=\; \sup_{A \in \mathcal{B},\, \norm{A} \leq 1} \abs{\tr (A \Gamma)}\,.
\end{equation}

Consider an $N$-particle density matrix $0 \leq \Gamma_N \in \mathcal{L}^1(\mathcal{H}^{(N)}_+)$ that satisfies $\tr \Gamma_N = 1$ and is symmetric in the sense that $\Gamma_N P_+ = \Gamma_N$. Define the $p$-particle marginals
\begin{equation*}
\Gamma_N^{(p)} \;\deq\; 
\tr_{p+1, \dots, N} \Gamma_N\,,
\end{equation*}
where $\tr_{p+1, \dots, N}$ denotes the partial trace over the coordinates $p+1, \dots, N$.
Define furthermore
\begin{equation*}
\Gamma_N(t) \;=\; \me^{-\mi t H_N} \Gamma_N \me^{\mi t H_N}\,,
\end{equation*}
as well as the $p$-particle marginals $\Gamma_N^{(p)}(t)$ of $\Gamma_N(t)$.

Noting that
\begin{equation*}
\tr \pB{\widehat{\A}_N(a^{(p)}) \, \Gamma_N(t)} \;=\; \frac{p!}{N^p} \binom{N}{p} \, \tr \pb{a^{(p)} \Gamma_N^{(p)}(t)}
\;=\; \tr \pb{a^{(p)} \Gamma_N^{(p)}(t)} + O \pbb{\frac{1}{N}}
\end{equation*}
we see that \eqref{dual representation of trace norm} and Theorem \ref{theorem: bosons mean-field for pure states} imply the following result.
\begin{corollary} \label{cor: density matrices}
Let $\psi \in \mathcal{H}$ with $\norm{\psi} = 1$, and let $\psi(t)$ be the solution of \eqref{integral Hartree} with initial data $\psi$. Set $\Gamma_N \deq (\ket{\psi} \bra{\psi})^{\otimes N}$. Then, for any $p \in \N$ and $T > 0$ there exist constants $C, \beta > 0$, depending only on $p$, $T$ and $\kappa$, such that
\begin{equation*}
\normB{\Gamma_N^{(p)}(t) - \pb{\ket{\psi(t)} \bra{\psi(t)}}^{\otimes p}}_1 \;\leq\; \frac{C}{N^\beta}\,, \qquad t \in [0,T]\,.
\end{equation*}
\end{corollary}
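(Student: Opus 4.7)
The plan is to use the duality representation \eqref{dual representation of trace norm} of the trace norm to reduce Corollary \ref{cor: density matrices} to the pure-state statement already proven in Theorem \ref{theorem: bosons mean-field for pure states}. Since $\Gamma_N$ is symmetric, so is $\Gamma_N^{(p)}(t)$, and both $\Gamma_N^{(p)}(t)$ and $\pb{\ket{\psi(t)}\bra{\psi(t)}}^{\otimes p}$ are trace-class operators on $\mathcal{H}^{(p)}_+$. Hence their trace-norm distance equals the supremum, taken over all $a^{(p)} \in \mathcal{B}(\mathcal{H}^{(p)}_+)$ with $\norm{a^{(p)}} \leq 1$, of
\begin{equation*}
\absB{\tr \pb{a^{(p)} \Gamma_N^{(p)}(t)} - \scalarb{\psi(t)^{\otimes p}}{a^{(p)} \psi(t)^{\otimes p}}}\,.
\end{equation*}

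For a fixed such $a^{(p)}$, I would first invoke the identity noted immediately before the statement of the corollary, namely $\tr\pb{\widehat{\A}_N(a^{(p)}) \Gamma_N(t)} = \frac{p!}{N^p}\binom{N}{p} \tr\pb{a^{(p)} \Gamma_N^{(p)}(t)}$. Combined with $\frac{p!}{N^p}\binom{N}{p} = 1 + O(1/N)$ and the operator-norm bound \eqref{norm of second quantized observable} on $\mathcal{H}^{(N)}_+$, this yields
\begin{equation*}
\tr \pb{a^{(p)} \Gamma_N^{(p)}(t)} \;=\; \tr \pb{\widehat{\A}_N(a^{(p)}) \Gamma_N(t)} + O(N^{-1}) \norm{a^{(p)}}\,.
\end{equation*}
Since $\Gamma_N = \ket{\psi^{\otimes N}}\bra{\psi^{\otimes N}}$ and $\Gamma_N(t) = \me^{-\mi t H_N} \Gamma_N \me^{\mi t H_N}$, the leading term on the right equals $\scalarb{\psi^{\otimes N}}{\me^{\mi tH_N}\widehat{\A}_N(a^{(p)}) \me^{-\mi t H_N}\psi^{\otimes N}}$, so Theorem \ref{theorem: bosons mean-field for pure states} controls the resulting difference of scalar products by $C N^{-\beta} \norm{a^{(p)}}$, uniformly in $t \in [0,T]$.

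Putting the two estimates together, taking the supremum over $\norm{a^{(p)}} \leq 1$, and absorbing the $N^{-1}$ correction into $N^{-\beta}$ by replacing $\beta$ with $\min(\beta,1)$, yields the claim. I do not anticipate any real obstacle here, as the analytic work has already been done in Theorem \ref{theorem: bosons mean-field for pure states}; the only point worth verifying is that the constants $C$ and $\beta$ appearing in \eqref{boson mean-field limit for coherent states} depend only on $p$, $T$ and $\kappa$, and on $a^{(p)}$ only through the factor $\norm{a^{(p)}}$ on the right-hand side. This uniformity, which is manifest from the proof of that theorem, is precisely what permits the supremum in the duality formula to be taken without degrading the polynomial rate $N^{-\beta}$.
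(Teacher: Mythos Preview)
Your proposal is correct and follows essentially the same route as the paper: use the duality formula \eqref{dual representation of trace norm} for the trace norm, the identity $\tr\pb{\widehat{\A}_N(a^{(p)}) \Gamma_N(t)} = \frac{p!}{N^p}\binom{N}{p}\tr\pb{a^{(p)}\Gamma_N^{(p)}(t)}$, and then Theorem \ref{theorem: bosons mean-field for pure states}. Your additional remarks (that both operators live on $\mathcal{H}^{(p)}_+$, that the constants in \eqref{boson mean-field limit for coherent states} depend on $a^{(p)}$ only through $\norm{a^{(p)}}$, and that one should replace $\beta$ by $\min(\beta,1)$) make explicit exactly the points the paper leaves implicit.
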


\begin{remark*}
Actually it is enough for $\Gamma_N$ to factorize asymptotically. If
$(\Gamma_N)_{N \in \N}$ is a sequence of symmetric density matrices satisfying
\begin{equation*}
\lim_{N \to \infty} \normb{\Gamma_N^{(1)} - \ket{\psi} \bra{\psi}}_1 \;=\; 0\,,
\end{equation*}
then one finds
\begin{equation*}
\lim_{N\to 0} \normB{\Gamma_N^{(1)}(t) - \ket{\psi(t)} \bra{\psi(t)}}_1 \;=\; 0\,, \qquad t \in \R\,.
\end{equation*}
This is a straightforward corollary of the proof of Theorem \ref{theorem: bosons mean-field for pure states}. By an argument of Lieb and Seiringer (see the remark after Theorem 1 in \cite{LiebSeiringer}), this implies that
\begin{equation*}
\lim_{N\to 0} \normB{\Gamma_N^{(p)}(t) - \pb{\ket{\psi(t)} \bra{\psi(t)}}^{\otimes p}}_1 \;=\; 0\,, \qquad t \in \R\,.
\end{equation*}
for all $p$.
\end{remark*}

\section{Some Generalizations} \label{section: generalizations}
In this section we generalize our results to a larger class of interaction potentials, and allow an external potential. For this we need Strichartz estimates for Lorentz spaces. We start with a short summary of the relevant results (see \cite{BerghLofstrom1976, KeelTao1998}).

For $1 \leq q \leq \infty$ and $0<\theta < 1$ we define the real interpolation functor $(\cdot, \cdot)_{\theta, q}$ as follows. Let $A_0$ and $A_1$ be two Banach spaces contained in some larger Banach space $A$. 
Define the real interpolation norm
\begin{equation*}
\norm{a}_{(A_0, A_1)_{\theta, q}} \;\deq\; 
\begin{cases}
\qB{\int_0^\infty \pb{t^{-\theta} K(t,a)}^q \, \md t/t}^{1/q}\,, & q \;<\; \infty\,,
\\
\sup_{t\geq 0} t^{-\theta} K(t, a)\,, & q \;=\; \infty\,.
\end{cases}
\end{equation*}
where
\begin{equation*}
K(t,a) \;\deq\; \inf_{a = a_0 + a_1} \pb{\norm{a_0}_{A_0} + t \norm{a_1}_{A_1}}\,.
\end{equation*}
Define $(A_0, A_1)_{\theta, q}$ as the space of $a \in A$ such that $\norm{a}_{(A_0, A_1)_{\theta, q}} < \infty$. Then $(A_0, A_1)_{\theta, q}$ is a Banach space.
The \emph{Lorentz space} $\LLL^{p,q}(\R^3, \md x) \equiv \LLL^{p,q}$ is defined by interpolation as
\begin{equation*}
\LLL^{p,q} \;\deq\; (\LLL^{p_0}, \LLL^{p_1})_{\theta, q}\,,
\end{equation*}
where $1 \leq p_0, p_1 \leq \infty$, $p_0 \neq p_1$, and
\begin{equation*}
\frac{1}{p} \;=\; \frac{1 -\theta}{p_0} + \frac{\theta}{p_1}\,.
\end{equation*}

Lorentz spaces have the following properties that are of interest to us. First, $\LLL^{p,p} = \LLL^p$. Second, $\LLL^{p,\infty} = \LLL^{p}_w$, where $\LLL^p_w$ is the weak $\LLL^p$ space (see e.g.\ \cite{ReedSimonII, BerghLofstrom1976}). In particular,  the Coulomb potential in 3 dimensions satisfies
\begin{equation*}
\frac{1}{\abs{x}} \;\in\; \LLL^{3, \infty}\,.
\end{equation*}
Finally, Lorentz spaces satisfy a general H\"older inequality (see \cite{ONeil1963}): Let $1 < p,p_1,p_2 < \infty$
 and $1 \leq q, q_1, q_2 \leq \infty$ satisfy
\begin{equation*}
\frac{1}{p_1} + \frac{1}{p_2} = \frac{1}{p} \,,\qquad \frac{1}{q_1} + \frac{1}{q_2} = \frac{1}{q}\,.
\end{equation*}
Then we have
\begin{equation} \label{general Holder}
\norm{fg}_{\LLL^{p,q}} \;\lesssim\; \norm{f}_{\LLL^{p_1, q_1}} \norm{g}_{\LLL^{p_2, q_2}}\,.
\end{equation}

We need an endpoint homogeneous Strichartz estimate proved in \cite{KeelTao1998}. For a map $f : \R \to \LLL^{p,q}$ we define the space-time norm
\begin{equation*}
\norm{f}_{\LLL^r_t \LLL^{p,q}_x} \;\deq\; \qbb{\int \md t \; \norm{f(t)}_{\LLL^{p,q}}^r}^{1/r}\,.
\end{equation*}
Then Theorem 10.1 of \cite{KeelTao1998} implies that
\begin{equation} \label{Strichartz estimate}
\normb{\me^{\mi t \Delta} f}_{\LLL^r_t \LLL^{p,2}_x} \;\lesssim\; \norm{f}_{\LLL^2}\,,
\end{equation}
whenever $2 \leq r < \infty$ and
\begin{equation*}
\frac{2}{r} + \frac{3}{p} \;=\; \frac{3}{2}\,.
\end{equation*}
We are now set for proving a generalization of \eqref{kato smoothing estimate}.
\begin{lemma} \label{lemma: generalized Kato smoothing}
Let $w \in \LLL^3_w + \LLL^\infty$. Then there is a constant $C = C(w) > 0$, such that
\begin{equation*}
\int_0^1 \norm{w \, \me^{\mi t \Delta} \, \psi}^2 \, \md t\;\leq\; C \norm{\psi}^2\,.
\end{equation*}
\end{lemma}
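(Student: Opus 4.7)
The plan is to split $w = w_1 + w_2$ with $w_1 \in \LLL^{3,\infty}$ and $w_2 \in \LLL^\infty$ and bound the two contributions separately, using Strichartz for the singular part and a trivial $\LLL^\infty$ bound for the bounded part.

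For the bounded part, the estimate is immediate: since $\ee^{\mi t \Delta}$ is unitary on $\LLL^2$, we have $\norm{w_2 \, \ee^{\mi t \Delta} \psi} \leq \norm{w_2}_{\LLL^\infty} \norm{\psi}$, which after integration over $[0,1]$ yields $\norm{w_2}_{\LLL^\infty}^2 \norm{\psi}^2$. Note that the restriction to a finite time interval is essential here; this is the only place we use $t \in [0,1]$ rather than $t \in \R$.

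For the singular part, the idea is to pair Strichartz with the Lorentz H\"older inequality \eqref{general Holder}. Choose exponents $p = 6$, $r = 2$, which satisfy the admissibility condition $\frac{2}{r} + \frac{3}{p} = \frac{3}{2}$, so that \eqref{Strichartz estimate} gives
\begin{equation*}
\normb{\ee^{\mi t \Delta} \psi}_{\LLL^2_t \LLL^{6,2}_x} \;\lesssim\; \norm{\psi}_{\LLL^2}\,.
\end{equation*}
On the other hand, applying \eqref{general Holder} with $(p_1, q_1) = (3, \infty)$ and $(p_2, q_2) = (6, 2)$ (so that $\frac{1}{p_1} + \frac{1}{p_2} = \frac{1}{2}$ and $\frac{1}{q_1} + \frac{1}{q_2} = \frac{1}{2}$, consistent with $\LLL^{2,2} = \LLL^2$) yields, for each fixed $t$,
\begin{equation*}
\normb{w_1 \, \ee^{\mi t \Delta} \psi}_{\LLL^2_x} \;\lesssim\; \norm{w_1}_{\LLL^{3,\infty}} \, \normb{\ee^{\mi t \Delta} \psi}_{\LLL^{6,2}_x}\,.
\end{equation*}
Squaring in $t$ and integrating, then using Strichartz, gives the desired bound on the singular piece, uniformly in $t \in \R$.

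Combining the two contributions via $(a+b)^2 \leq 2a^2 + 2b^2$ produces a constant $C(w)$ that depends on $w$ only through the decomposition, namely $C(w) \lesssim \norm{w_1}_{\LLL^{3,\infty}}^2 + \norm{w_2}_{\LLL^\infty}^2$. The main conceptual point — and the only nontrivial obstacle — is the verification that the Lorentz H\"older indices $(3,\infty)$ and $(6,2)$ sum correctly to the admissible Strichartz pair $\LLL^{6,2}$; everything else is bookkeeping.
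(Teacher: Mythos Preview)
Your proof is correct and follows essentially the same approach as the paper: split $w$ into an $\LLL^{3,\infty}$ part and an $\LLL^\infty$ part, bound the latter trivially using unitarity of $\ee^{\ii t\Delta}$, and bound the former by combining the Lorentz--H\"older inequality \eqref{general Holder} with exponents $(3,\infty)$ and $(6,2)$ with the endpoint Strichartz estimate \eqref{Strichartz estimate} at $(r,p)=(2,6)$. The only cosmetic difference is that the paper labels the $\LLL^\infty$ piece $w_1$ and the $\LLL^{3,\infty}$ piece $w_2$, reversing your convention.
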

\begin{proof}
Let $w = w_1 + w_2$ with $w_1 \in \LLL^\infty$ and $w_2 \in \LLL^3_w$. Then
\begin{equation*}
\normb{w \, \me^{\mi t \Delta} \, \psi}_{\LLL^2_t \LLL^2_x} \;\leq\;
\normb{w_1 \, \me^{\mi t \Delta} \, \psi}_{\LLL^2_t \LLL^2_x} + \normb{w_2 \, \me^{\mi t \Delta} \, \psi}_{\LLL^2_t \LLL^2_x}\,.
\end{equation*}
The first term is bounded by $\norm{w_1}_{\LLL^\infty} \norm{\psi}_{\LLL^2}$. To bound the second we use \eqref{general Holder} and \eqref{Strichartz estimate} with $r = 2$ and $p = 6$ to get
\begin{equation*}
\normb{w_2 \, \me^{\mi t \Delta} \, \psi}_{\LLL^2_t \LLL^2_x} \;\lesssim\; \norm{w_2}_{\LLL^{3,\infty}} \normb{\me^{\mi t \Delta} \, \psi}_{\LLL^2_t \LLL^{6,2}_x} \;\lesssim\; \norm{w_2}_{\LLL^{3,\infty}} \norm{\psi}_{\LLL^2}\,.
\end{equation*}
Therefore,
\begin{equation*}
\normb{w \, \me^{\mi t \Delta} \, \psi}_{\LLL^2_t \LLL^2_x} \;\leq\; \sqrt{C(w)} \, \norm{\psi}_{\LLL^2}\,.
\qedhere
\end{equation*}
\end{proof}

Now let us assume that $v,w \in \LLL^\infty + \LLL^3_w$. 
Set $H_0 |_{\mathcal{H}^{(n)}_\pm} \deq \sum_{i = 1}^n -\Delta_i$. Then the required generalization of Lemma \ref{lemma: Kato smoothing for centre of mass} is
\begin{lemma} \label{lemma: generalized Kato smoothing for centre of mass}
There exists a constant $C \equiv C(w, v)$ such that
\begin{align*}
\int_0^1 \normb{W_{ij} \, \me^{-\mi t H_0} \Phi^{(n)}}^2 \md t &\;\leq\; C \norm{\Phi^{(n)}}^2\,,
\\
\int_0^1 \normb{V_i \, \me^{-\mi t H_0} \Phi^{(n)}}^2 \md t &\;\leq\; C \norm{\Phi^{(n)}}^2\,,
\end{align*}
where $\Phi^{(n)} \in \mathcal{H}^{(n)}_\pm$\,.
\end{lemma}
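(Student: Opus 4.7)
The plan is to reproduce the proof of Lemma~\ref{lemma: Kato smoothing for centre of mass} almost verbatim, substituting the Lorentz-space input of Lemma~\ref{lemma: generalized Kato smoothing} for the Coulomb input \eqref{kato smoothing estimate}. The only new feature is that Lemma~\ref{lemma: generalized Kato smoothing} is local in time; this is why both bounds in the statement are restricted to $t\in[0,1]$, which is exactly the conclusion we want.

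For the first inequality I would invoke the (anti)symmetry of $\Phi^{(n)}$ to reduce to $(i,j)=(1,2)$, then introduce centre-of-mass coordinates $X\deq(x_1+x_2)/2$, $\xi\deq x_2-x_1$ and set $\tilde\Phi^{(n)}(X,\xi,x_3,\dots,x_n)\deq\Phi^{(n)}(x_1,\dots,x_n)$; the Jacobian is $1$. Using $H_0=-\tfrac12\Delta_X-2\Delta_\xi-\sum_{k\geq 3}\Delta_k$ and the fact that $W_{12}=w(\xi)$ commutes with $\Delta_X$ and with every $\Delta_k$ for $k\geq 3$, the groups $\me^{\mi t\Delta_X/2}$ and $\me^{\mi t\Delta_k}$ can be moved past $W_{12}$ and then cancel inside the squared norm. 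Fubini then gives
\begin{equation*}
\int_0^1 \md t\,\normb{W_{12}\,\me^{-\mi t H_0}\Phi^{(n)}}^{2} \;=\; \int \md X\,\md x_3\cdots \md x_n\,\int_0^1\md t\;\normb{w(\xi)\,\me^{2\mi t\Delta_\xi}\tilde\Phi^{(n)}(X,\cdot,x_3,\dots,x_n)}_{\LLL^2(\md\xi)}^{2}.
\end{equation*}
Substituting $s=2t$ produces an integral over $s\in[0,2]$; I would split this as $[0,1]\cup[1,2]$ and use, on the second piece, the shift $s\mapsto s+1$ together with the unitarity of $\me^{\mi\Delta_\xi}$ to rewrite it as a $[0,1]$-integral applied to $\me^{\mi\Delta_\xi}\tilde\Phi^{(n)}$. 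Applying Lemma~\ref{lemma: generalized Kato smoothing} fibrewise in $(X,x_3,\dots,x_n)$ and integrating yields the bound $C(w)\,\norm{\tilde\Phi^{(n)}}^{2}=C(w)\,\norm{\Phi^{(n)}}^{2}$, as required.

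The second inequality is simpler because $V_i$ depends on a single particle coordinate. Since $V_i$ commutes with $\Delta_k$ for every $k\neq i$, the corresponding unitary factors in $\me^{-\mi tH_0}$ cancel in the squared norm, leaving
\begin{equation*}
\int_0^1 \md t\,\normb{V_i\,\me^{-\mi t H_0}\Phi^{(n)}}^{2} \;=\; \int\prod_{k\neq i}\md x_k\,\int_0^1\md t\;\normb{v(x_i)\,\me^{\mi t\Delta_i}\Phi^{(n)}}_{\LLL^{2}(\md x_i)}^{2},
\end{equation*}
to which Lemma~\ref{lemma: generalized Kato smoothing} applies fibrewise. I do not anticipate any real obstacle: the argument is essentially bookkeeping in the change of variables and in commuting unitary groups past multiplication operators, exactly parallel to Lemma~\ref{lemma: Kato smoothing for centre of mass}. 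The only mildly delicate point is absorbing the factor $2$ in the dispersion $-2\Delta_\xi$ by splitting the time integral, so that we stay within the one-unit window on which Lemma~\ref{lemma: generalized Kato smoothing} is stated; but this costs only a harmless factor of $2$ in the final constant $C(w,v)$.
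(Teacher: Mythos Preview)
Your proposal is correct and follows exactly the approach the paper takes: apply Lemma~\ref{lemma: generalized Kato smoothing} directly for $V_i$, and pass to centre-of-mass coordinates for $W_{ij}$ before invoking the same lemma fibrewise. The paper's proof is a two-line sketch (``follows immediately'' and ``follows similarly by using centre of mass coordinates''), so your write-up is actually more careful---in particular, your handling of the time rescaling $s=2t$ and the split of $[0,2]$ into two unit windows is a point the paper leaves implicit.
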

\begin{proof}
The claim for $V$ follows immediately from Lemma \ref{lemma: generalized Kato smoothing}. The estimate for $W$ follows similarly by using centre of mass coordinates.
\end{proof}

Finally, we briefly discuss the changes to the combinatorics arising from an external potential. We classify the elementary terms according to the numbers $(k,l,m)$, where $k$ is the order of the multiple commutator, $l$ is the number of loops, and $m$ is the number of $V$-operators. Thus, instead of \eqref{recursive definition of graphs}, we have the recursive definition
\begin{align*}
G^{(k,l,m)}_{t,t_1,\dots,t_k}(a^{(p)})
&\;=\; \mi (p+k-l-m-1) \qB{W_{t_k} , G^{(k-1,l,m)}_{t, t_1, \dots, t_{k - 1}}(a^{(p)})}_1
\\
&\qquad{}+{} \mi \binom{p+k-l-m}{2} \qB{W_{t_k}, G^{(k-1, l-1,m)}_{t, t_1, \dots, t_{k - 1}}(a^{(p)})}_2
\\
&\qquad{}+{}\mi (p+k-l-m) \qB{V_{t_k} , G^{(k-1,l,m-1)}_{t, t_1, \dots, t_{k - 1}}(a^{(p)})}_1
\\
&\;=\; \mi P_\pm  \sum_{i = 1}^{p+k-l-m-1} \qb{W_{i \, p+k-l-m, t_k}, G^{(k-1,l,m)}_{t, t_1, \dots, t_{k - 1}}(a^{(p)}) \otimes \umat} P_\pm 
\\
&\qquad{}+{} \mi P_\pm  \sum_{1 \leq i < j \leq p+k-l-m} \qb{W_{ij, t_k}, G^{(k-1, l-1,m)}_{t, t_1, \dots, t_{k - 1}}(a^{(p)})} P_\pm  
\\
&\qquad{}+{} \mi P_\pm  \sum_{i = 1}^{p+k-l-m} \qb{V_{i, t_k}, G^{(k-1, l,m-1)}_{t, t_1, \dots, t_{k - 1}}(a^{(p)})}P_\pm \,,
\end{align*}
as well as $G^{(0,0,0)}_t(a^{(p)}) \deq a^{(p)}_t$. We also set $G^{(k,l,m)}_{t,t_1,\dots,t_k}(a^{(p)}) = 0$ unless $0 \leq l \leq k-m$. 
It is again an easy exercise to show by induction on $k$ that 
\begin{equation*}
\frac{(\mi N)^k}{2^k} \qB{\widehat{\A}_N(W_{t_k}), \dots \qB{\widehat{\A}_N(W_{t_1}), \widehat{\A}_N(a^{(p)}_t)}\dots} \;=\; \sum_{l = 0}^k \sum_{m = 0}^{k - l} \frac{1}{N^l} \, \widehat{\A}_N \pb{ G^{(k,l,m)}_{t,t_1,\dots,t_k}(a^{(p)})}\,.
\end{equation*}
Note that $G^{(k,l,m)}_{t,t_1,\dots,t_k}(a^{(p)})$ is a $p + k - l - m$ particle operator.

The graphs of Section \ref{section: Coulomb} have to be modified: Each vertex corresponding to a $V$-operator has one edge for each direction $d = a,c$ (see Figure \ref{figure: general graph with external potential}).
\begin{figure}[ht!]
\psfrag{0}[][]{$0$}
\psfrag{1}[][]{$1$}
\psfrag{2}[][]{$2$}
\psfrag{3}[][]{$3$}
\psfrag{4}[][]{$4$}
\psfrag{5}[][]{$5$}
\psfrag{6}[][]{$6$}
\psfrag{7}[][]{$7$}
\psfrag{a1}[][]{$\scriptstyle{a,1}$}
\psfrag{a2}[][]{$\scriptstyle{a,2}$}
\psfrag{a3}[][]{$\scriptstyle{a,3}$}
\psfrag{a4}[][]{$\scriptstyle{a,4}$}
\psfrag{c1}[][]{$\scriptstyle{c,1}$}
\psfrag{c2}[][]{$\scriptstyle{c,2}$}
\psfrag{c3}[][]{$\scriptstyle{c,3}$}
\psfrag{c4}[][]{$\scriptstyle{c,4}$}
\vspace{0.5cm}
\begin{center}
\includegraphics[width = 12cm]{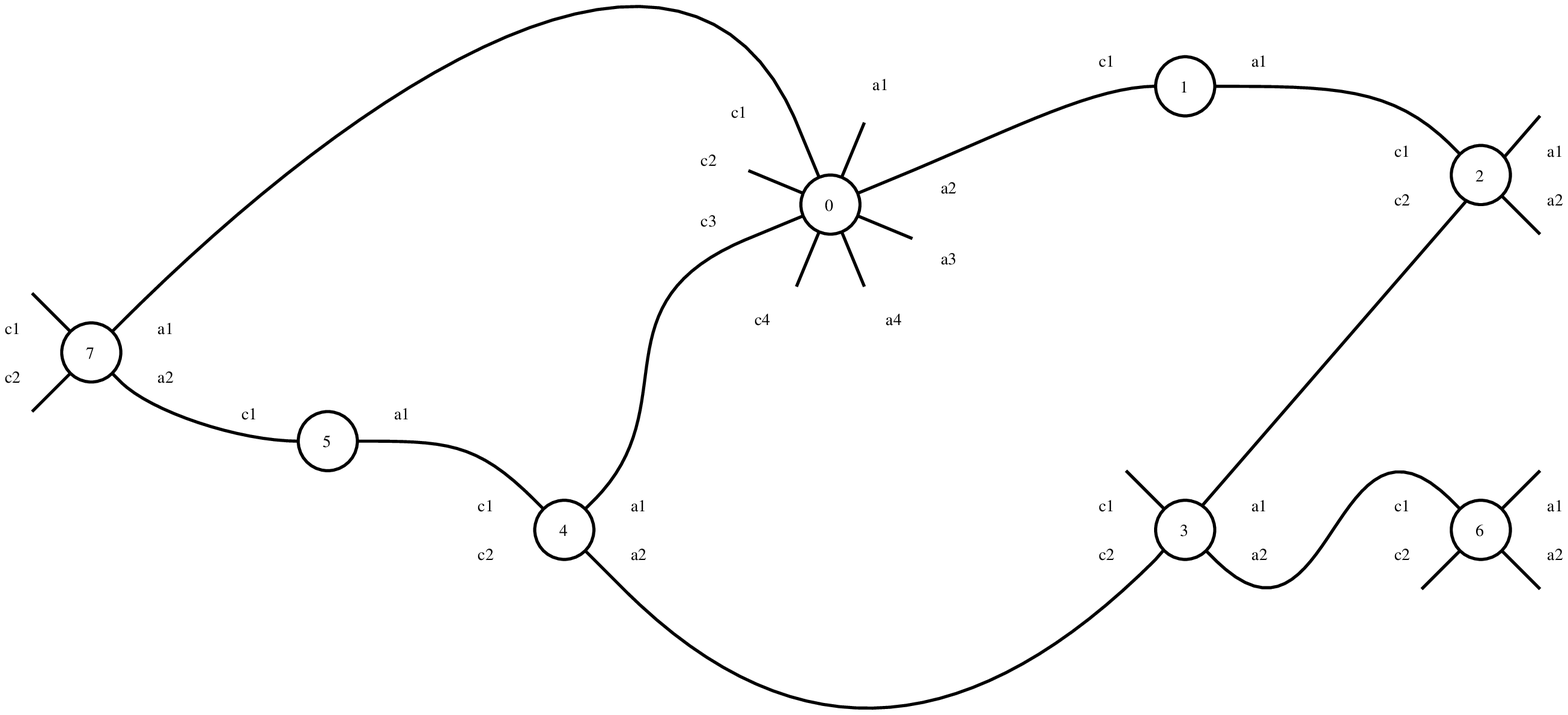}
\end{center}
\caption{\textit{An admissible graph of type $(p = 4, k = 7, l = 2, m = 2)$.} \label{figure: general graph with external potential}}
\end{figure}

Let us first consider tree graphs, $l = 0$. Take the set of trees without an external potential as in Section \ref{section: Coulomb}. By allowing each vertex $v = 1, \dots, k$ whose edges $(a, 2)$ and $(c, 2)$ are empty to stand for either an interaction potential $W$ or an external potential $V$, we count all trees with an external potential. Thus, for a given $m$, there are at most $\binom{k}{m} \abs{\mathscr{G}(p,k,0)}$ tree graphs contributing to $G^{(k,0,m)}_{t,t_1,\dots,t_k}(a^{(p)})$. If $l > 0$ we repeat the argument in the proof Lemma \ref{lemma: bound on l-loop graphs}, and find that the number of graph structures contributing to $G^{(k,l,m)}_{t,t_1,\dots,t_k}(a^{(p)})$ is bounded by
\begin{equation*}
2^k \, \binom{k}{m} \, \binom{k}{l} \, \binom{2p + 3k}{k} \, (p+k-l-m)^l\,.
\end{equation*}
Putting all this together, we find that
\begin{equation*}
\normb{G^{(k,l,m)}_t(a^{(p)})} \;\leq\;
\binom{k}{m} \binom{k}{l} \, \binom{2p+3k}{k} \,  (p + k - l - m)^l (C t)^{k/2}\, \norm{a^{(p)}}\,.
\end{equation*}
Using the condition $p+k-l-m \leq n$, it is then easy to see that all convergence estimates remain valid with the additional factor $2^k$.

In summary, all of the results of Sections \ref{section: Coulomb} and \ref{section: mean-field limit for bosons} hold if
\begin{equation*}
v,w \;\in\; \LLL^3_w + \LLL^\infty\,.
\end{equation*}

\appendix
\section{Second Quantization} \label{second quantization}
We briefly summarize the main ingredients of many-body quantum mechanics and second quantization. See for instance \cite{BratteliRobinsonII} for an extensive discussion.

Let $\mathcal{H} = \LLL^2(\R^d, \dd x)$ be the ``one-particle Hilbert space'', where $d \in \N$. Many-body quantum mechanics is formulated on subspaces of the $n$-particle spaces $\mathcal{H}^{\otimes n}$. Let $P^{(n)}_\pm \equiv P_{\pm}$ be the orthogonal projector onto the symmetric/antisymmetric subspace of $\mathcal{H}^{\otimes n}$, i.e.
\begin{equation*}
(P_{\pm} \Phi^{(n)})(x_1, \dots, x_n) \;\deq\; \frac{1}{n!} \sum_{\sigma \in S_n} (\pm 1)^{\abs{\sigma}}\Phi^{(n)}(x_{\sigma(1)}, \dots, x_{\sigma(n)})\,,
\end{equation*}
where $\abs{\sigma}$ denotes the number of transpositions in the permutation $\sigma$, and $\Phi^{(n)} \in \mathcal{H}^{\otimes n}$. We define the bosonic $n$-particle space as $\mathcal{H}^{(n)}_+ \deq P_+ \mathcal{H}^{\otimes n}$, and the fermionic $n$-particle space as $\mathcal{H}^{(n)}_- \deq P_- \mathcal{H}^{\otimes n}$. We adopt the usual convention that $\mathcal{H}^{\otimes 0} = \C$. 

We introduce the Fock space
\begin{equation*}
\mathcal{F}_\pm(\mathcal{H}) \;\equiv\; \mathcal{F}_\pm \;\deq\; \bigoplus_{n = 0}^\infty \mathcal{H}^{(n)}_\pm\,.
\end{equation*}
A state $\Phi \in \mathcal{F}_\pm$ is a sequence $\Phi = (\Phi^{(n)})_{n = 0}^\infty$, where $\Phi^{(n)} \in \mathcal{H}^{(n)}_\pm$. Equipped with the scalar product
\begin{equation*}
\scalar{\Phi}{\Psi} \;=\; \sum_{n = 0}^\infty \scalarb{\Phi^{(n)}}{\Psi^{(n)}}
\end{equation*}
$\mathcal{F}_\pm$ is a Hilbert space.
The vector $\Omega \deq (1, 0, 0, \dots)$ is called the vacuum. By a slight abuse of notation, we denote a vector of the form $\Phi = (0, \dots, 0, \Phi^{(n)}, 0, \dots) \in \mathcal{F}_\pm$ by its non-vanishing $n$-particle component $\Phi^{(n)}$.
Define also the subspace of vectors with a finite particle number
\begin{equation*}
\mathcal{F}^0_\pm \;\deq\; \h{\Phi \in \mathcal{F}_\pm \,:\, \Phi^{(n)} = 0 \text{ for all but finitely many } n}\,.
\end{equation*}

On $\mathcal{F}_\pm$ we have the usual creation and annihilation operators, $\widehat{\psi}^*$ and $\widehat{\psi}$, which map the one-particle space $\mathcal{H}$ into densely defined closable operators on $\mathcal{F}_\pm$. For $f \in \mathcal{H}$ and $\Phi \in \mathcal{F}_\pm$, they are defined by
\begin{align*}
\pb{\widehat{\psi}^*(f) \Phi}^{(n)}(x_1, \dots, x_n) &\;\deq\;
\frac{1}{\sqrt{n}}\sum_{i = 1}^{n} (\pm 1)^{i - 1} \, f(x_i)
\Phi^{(n-1)}(x_1, \dots, x_{i-1}, x_{i+1}, \dots, x_n)\,,
\\
\pb{\widehat{\psi}(f) \Phi}^{(n)}(x_1, \dots, x_n) &\;\deq\; \sqrt{n+1} \int \md y \; \bar{f}(y) \Phi^{(n+1)}(y, x_1, \dots, x_n)\,.
\end{align*}
It is not hard to see that $\widehat{\psi}(f)$ and $\widehat{\psi}^*(f)$ are adjoints of each other (see for instance \cite{BratteliRobinsonII} for details). Furthermore, they satisfy the canonical (anti)commutation relations
\begin{equation} \label{unrescaled commutation relations}
\qb{\widehat{\psi}(f), \widehat{\psi}^*(g)}_\mp \;=\; \scalar{f}{g}\, \umat\,,\qquad \qb{\widehat{\psi}^\#(f), \widehat{\psi}^\#(g)}_\mp \;=\; 0\,,
\end{equation}
where $[A, B]_\mp \deq AB \mp BA$, and $\widehat{\psi}^\# = \widehat{\psi}^*$ or $\widehat{\psi}$. In order to simplify notation, we usually identify $c \umat$ with $c$, where $c \in \C$.

For our purposes, it is more natural to work with the rescaled creation and annihilation operators
\begin{equation*}
\widehat{\psi}^\#_N \;\deq\; \frac{1}{\sqrt{N}} \, \widehat{\psi}^\#\,,
\end{equation*}
where $N > 0$. We also introduce the operator-valued distributions defined formally by
\begin{equation*}
\widehat{\psi}^\#_N(x) \;\deq\; \widehat{\psi}^\#_N(\delta_x)\,,
\end{equation*}
where $\delta_x$ is the delta function at $x$. The formal expression $\widehat{\psi}^\#_N(x)$ has a rigorous meaning as a densely defined sesquilinear form on $\mathcal{F}_\pm$ (see \cite{ReedSimonII} for details). In particular one has that
\begin{equation*}
\widehat{\psi}_N(f) \;=\; \int \md x \; \bar{f}(x) \, \widehat{\psi}_N(x)\,, \qquad \widehat{\psi}_N^*(f) \;=\; \int \md x \; f(x) \, \widehat{\psi}^*_N(x)\,.
\end{equation*}
Furthermore, the (anti)commutation relations \eqref{unrescaled commutation relations} imply that
\begin{equation} \label{anticommutation relations}
\qb{\widehat{\psi}_N(x), \widehat{\psi}_N^*(y)}_\mp \;=\; \frac{1}{N} \delta(x-y)\,,\qquad \qb{\widehat{\psi}_N^\#(x), \widehat{\psi}_N^\#(y)}_\mp \;=\; 0\,,
\end{equation}

\section{The Limit $\epsilon \to 0$ in Lemma \ref{lemma: Schwinger-Dyson expansion}} \label{section: removal of cutoff}
What remains is the justification of the equality in \eqref{final form of Schwinger-Dyson expansion} for $\epsilon = 0$. Our strategy is to show that both sides of \eqref{1/N expansion} with $\epsilon > 0$ converge strongly to the same expression with $\epsilon = 0$.

We first show the strong convergence of $G^{(k,l), \epsilon}_t(a^{(p)})$. Let $\Phi^{(n)} \in \cal{H}^{(n)}_\pm$ and consider
\begin{equation*}
\normb{(W^\epsilon_{ij,s} - W_{ij,s})\Phi^{(n)}} \;=\; \normb{I_{\{\abs{W_{ij}} > \epsilon^{-1}\}} W_{ij} e^{-is H_0}\Phi^{(n)}} \;\leq\; \normb{W_{ij} e^{-is H_0}\Phi^{(n)}}\,.
\end{equation*}
Since the right side is in $\LLL^1([0,t])$, we may use dominated convergence to conclude that
\begin{equation*}
\lim_{\epsilon \to 0} \int_0^t \md s \; \normb{(W^\epsilon_{ij,s} - W_{ij,s})\Phi^{(n)}} \;=\; 0\,.
\end{equation*}
Now
\begin{align*}
&\int_0^t \md s \int_0^t \md s' \; \normb{W^\epsilon_{ij, s} W^\epsilon_{i'j', s'} \Phi^{(n)} - W_{ij, s} W_{i'j', s'} \Phi^{(n)}}
\\
&\quad \leq\; \int_0^t \md s \int_0^t \md s' \; \normb{W^\epsilon_{ij, s} W^\epsilon_{i'j', s'} \Phi^{(n)} - W^\epsilon_{ij, s} W_{i'j', s'} \Phi^{(n)}}
\\
&\qquad
{}+{} 
\int_0^t \md s \int_0^t \md s' \; \normb{W^\epsilon_{ij, s} W_{i'j', s'} \Phi^{(n)} - W_{ij, s} W_{i'j', s'} \Phi^{(n)}}\,.
\end{align*}
The first term is bounded by
\begin{align*}
\pbb{\frac{\pi \kappa^2 t}{2}}^{1/2} \int_0^t \md s' \; \normb{W^\epsilon_{i'j', s'} \Phi^{(n)} - W_{i'j', s'} \Phi^{(n)}} \;\to\; 0\,,\qquad \epsilon \to 0\,.
\end{align*}
The integrand of the second term is bounded by $2 \normb{W_{ij, s} W_{i'j', s'} \Phi^{(n)}} \in \LLL^1([0,t]^2)$, so that dominated convergence implies that the second term vanishes in the limit $\epsilon \to 0$. A straightforward generalization of this argument shows that
\begin{equation*}
G^{(k,l), \epsilon}_t(a^{(p)}) \, \Phi^{(p+k-l)} \;\to\; G^{(k,l)}_t(a^{(p)}) \, \Phi^{(p+k-l)}\,,
\end{equation*}
as claimed.
Since the series \eqref{final form of Schwinger-Dyson expansion} converges uniformly in $\epsilon$, we find that
\begin{equation*}
\sum_{k = 0}^\infty \sum_{l = 0}^k \frac{1}{N^l} \, \widehat{\A}_N \pb{G^{(k,l),\epsilon}_t(a^{(p)})} \, \Phi^{(n)} \;\to \;
\sum_{k = 0}^\infty \sum_{l = 0}^k \frac{1}{N^l} \, \widehat{\A}_N \pb{G^{(k,l)}_t(a^{(p)})} \, \Phi^{(n)}\,,
\end{equation*}
as $\epsilon \to 0$.

Next, we show that $\me^{-\mi t H^\epsilon_N} \Phi^{(n)} \to \me^{-\mi t H_N} \Phi^{(n)}$. This follows from strong resolvent convergence of $H_N^\epsilon$ to $H_N$ as $\epsilon \to 0$ by Trotter's theorem \cite{ReedSimonI}. Let $W^\epsilon \deq \sum_{i<j} W^\epsilon_{ij}$, and consider 
\begin{align*}
N \, \normb{(H_N^\epsilon - \mi)^{-1} \Phi^{(n)} - (H_N - \mi)^{-1} \Phi^{(n)}} &\;=\; \normb{(H_N^\epsilon - \mi)^{-1} (W - W^\epsilon) (H_N - \mi)^{-1} \Phi^{(n)}} 
\\
&\;\leq\; \normb{(W - W^\epsilon) (H_N - \mi)^{-1} \Phi^{(n)}} \,.
\end{align*}
Clearly $\Psi^{(n)} \deq (H_N - \mi)^{-1} \Phi^{(n)}$ is in the domain of $H_N$. By the Kato-Rellich theorem \cite{ReedSimonII}, $\Psi^{(n)}$ is in the domain of $W_{ij}$ for all $i,j$. Therefore,
\begin{equation*}
\normb{(W_{ij} - W^\epsilon_{ij}) (H_N - \mi)^{-1} \Phi^{(n)}} \;=\; \normb{I_{\{\abs{W_{ij}} > \epsilon^{-1}\}} W_{ij} \Psi^{(n)}} \;\to\; 0
\end{equation*}
as $\epsilon \to 0$. Therefore
\begin{equation*}
\me^{\mi t H^\epsilon_N} \, \widehat{A}_N(a^{(p)}) \, \me^{- \mi t H^\epsilon_N} \Phi^{(n)} \;\to\; 
\me^{\mi t H_N} \, \widehat{A}_N(a^{(p)}) \, \me^{- \mi t H_N} \Phi^{(n)}
\end{equation*}
as $\epsilon \to 0$, and the proof is complete.


\begin{thebibliography}{99}
\bibitem{BerghLofstrom1976}
J.~Bergh and J.~L\"ofstr\"om, \emph{Interpolation Spaces, an Introduction}, Springer, 1976.

\bibitem{BratteliRobinsonII}
O.~Bratteli and D.~W.~Robinson, \emph{Operator Algebras and Quantum Statistical Mechanics 2}, Springer, 2002.

\bibitem{BraunHepp}
W.~Braun and K.~Hepp, \emph{The Vlasov dynamics and its fluctuations in the $1/N$ limit of interacting classical particles}, Commun.\ Math.\ Phys.\ \textbf{56} (1977), 101-113.

\bibitem{ChadamGlassey}
J.~M.~Chadam and R.~T.~Glassey, \emph{Global existence of solutions to the Cauchy problem for time-dependent Hartree equations}, J.\ Math.\ Phys.\ \textbf{16} (1975), 1122.

\bibitem{Egorov}
Y.~V.~Egorov, \emph{The canonical transformations of pseudodifferential operators}, Usp.\ Mat.\ Nauk \textbf{25} (1969), 235-236.

\bibitem{ErdosYau}
L.~Erd\H{o}s and H.-T.~Yau, \emph{Derivation of the nonlinear Schr\"odinger equation with Coulomb potential}, Adv.\ Theor.\ Math.\ Phys. \textbf{5} (2001), 1169-1205.

\bibitem{FrohlichGraffiSchwarz}
J.~Fr\"ohlich, S.~Graffi and S.~ Schwarz, \emph{Mean-field and classical limit of many-body Schr\"odinger dynamics for bosons}, Commun.\ Math.\ Phys.\ \textbf{271} (2007), 681-697.

\bibitem{FrohlichKnowles}
J.~Fr\"ohlich, A.~Knowles and A.~Pizzo, \emph{Atomism and Quantization}, J.\ Phys.\ A \textbf{40} (2007), 3033-3045.

\bibitem{GinibreVelo}
J.~Ginibre and G.~Velo, \emph{The classical field limit of scattering theory for non-relativistic many-boson systems. I-II}, Commun.\ Math.\ Phys. \textbf{66} (1979), 37-76; Commun.\ Math.\ Phys. \textbf{68} (1979), 45-68.

\bibitem{Hepp}
K.~Hepp, \emph{The classical limit for quantum mechanical correlation functions}, Commun.\ Math.\ Phys.\ \textbf{35} (1974), 265-277.

\bibitem{KeelTao1998}
M.~Keel and T.~Tao, \emph{Endpoint Strichartz estimates}, Amer.\ J.\ Math.\ \textbf{120} (1998), 955-980.

\bibitem{Knuth1998}
D.~E.~Knuth, \emph{The Art of Computer Programming}, Vol.~1, Addison-Wesley, 1998.

\bibitem{LiebLoss}
E.\ H.\ Lieb and M.\ Loss, \emph{Analysis}, American Mathematical Society, 2001.

\bibitem{LiebSeiringer}
E.\ H.\ Lieb and R.\ Seiringer, \emph{Proof of Bose-Einstein condensation for dilute trapped gases}, Phys.\ Rev.\ Lett.\ \textbf{88} (2002), no.\ 17, 170409.

\bibitem{NarnhoferSewell}
H.~Narnhofer and G.~L.~Sewell, \emph{Vlasov hydrodynamics of a quantum mechanical model}, Commun.\ Math.\ Phys.\ \textbf{79} (1981), 9-24.

\bibitem{Neunzert}
H.~Neunzert, Fluid Dyn.\ Trans.\ \textbf{9} (1977), 229.
\\
H.~Neunzert, \emph{Neuere qualitative und numerische Methoden in der Plasmaphysik}, Vorlesungsmanuskript, Paderborn, 1975.
\bibitem{ONeil1963}
R.~ O'Neil, \emph{Convolution operators and $L(p,q)$ spaces}, Duke Math.\ J.\ \textbf{30} (1963), 129-142.

\bibitem{ReedSimonI}
M.~Reed and B.~Simon, \emph{Methods of Modern Mathematical Physics I: Functional Analysis}, Academic Press, 1980.

\bibitem{ReedSimonII}
M.~Reed and B.~Simon, \emph{Methods of Modern Mathematical Physics II: Fourier Analysis, Self-Adjointness}, Academic Press, 1975.

\bibitem{ReedSimonIV}
M.~Reed and B.~Simon, \emph{Methods of Modern Mathematical Physics IV: Analysis of Operators}, Academic Press, 1978.

\bibitem{RodnianskiSchlein}
I.~Rodnianski and B.~Schlein, \emph{Quantum fluctuations and rate of convergence towards mean field dynamics}, 	arXiv:0711.3087 (2007).

\bibitem{SchleinErdos}
B.~Schlein and L.~Erd\H{o}s, \emph{Quantum Dynamics with Mean Field Interactions: a New Approach}, arXiv:0804.3774v1 (2008).

\bibitem{Simon1992}
B.~Simon, \emph{Best constants in some operator smoothness estimates}, J.\ Func.\ Anal.\ \textbf{107} (1992), 66-71.

\bibitem{Zagatti}
S.~Zagatti, \emph{The Cauchy problem for Hartree-Fock time-dependent equations}, Ann.\ Inst.\ Henri~Poincar\'e (A) \textbf{56} (1992), 357-374.
\end{thebibliography}
\end{document}